\pgfplotsset{compat=1.7}
\definecolor{codegreen}{rgb}{0,0.6,0}
\definecolor{codegray}{rgb}{0.5,0.5,0.5}
\definecolor{codepurple}{rgb}{0.58,0,0.82}
\definecolor{backcolour}{rgb}{0.95,0.95,0.92}
\lstdefinestyle{atomis}{
    language=java,
    backgroundcolor=\color{white},   
    commentstyle=\color{codegreen},
    keywordstyle=\color{magenta},
    numberstyle=\tiny\color{codegray},
    stringstyle=\color{codepurple},
    basicstyle=\ttfamily\footnotesize,
    breakatwhitespace=false,         
    breaklines=true,                 
    captionpos=b,                    
    keepspaces=true,                 
    numbers=left,                    
    numbersep=5pt,                  
    showspaces=false,                
    showstringspaces=false,
    showtabs=false,                  
    tabsize=2,
    numberblanklines=false,
    mathescape=true,
    classoffset=1, %
    morekeywords= [2]{
        @Atomic, @All, @AtomiSVar, @AtomiSSpec, @AtomicityOf, @NoHLDR,
        atomic, non_atomic},
    keywordstyle= [2]\color{blue},
    classoffset=0,
        morekeywords={
        Unit,  let,  in, def, finish, async},
}
\newcommand{\icode}[1]{\lstinline[basicstyle=\ttfamily\normalsize]{#1}}
\newcommand{\nope}[1]{}
\newtheorem{theorem}{Theorem}
\newtheorem{lemma}[theorem]{Lemma}
\newtheorem{definition}[theorem]{Definition}
\newcommand{\rulename}[1]{[\textsc{#1}]}
\newcommand{\cf}{\textit{cf.}\xspace}
\newcommand{\eg}{\textit{e.g.}\xspace}
\newcommand{\ie}{\textit{i.e.},\xspace}
\newcommand{\regular}{regular\xspace}
\newcommand{\atan}[2]{(#1) \rightarrow #2}
\newcommand{\ateq}[2]{#1 = #2}
\newcommand{\valeq}[2]{#1 = #2}
\newcommand{\preprocess}{\func{preProcess}}
\newcommand{\ottdrule}[4][]{{\displaystyle\ottdrulename{#4}~\frac{\begin{array}{l}#2\end{array}}{#3}}}
\newcommand{\ottusedrule}[1]{$#1\qquad$}  %
\newcommand{\ottpremise}[1]{ #1 \\}
\newenvironment{ottdefnblock}[3][]{ \framebox{#2} {~~~~~#3}}{}    %
\newcommand{\ottnt}[1]{\mathit{#1}}
\newcommand{\ottmv}[1]{\mathit{#1}}
\newcommand{\ottkw}[1]{\mathbf{#1}}
\newcommand{\ottsym}[1]{#1}
\newcommand{\ottdrulename}[1]{\textsc{#1}}
\providecommand{\kwool}[1]{\texttt{\textbf{#1}%
}%
}
\providecommand{\startblock}{\texttt{\symbol{123}%
}%
}
\providecommand{\finishblock}{\texttt{\symbol{125}%
}%
}
\renewcommand{\mathbf}{\textbf}
\providecommand{\ottlinebreakhack}{}
\newcommand{\ottdrulewfXXcast}[1]{\ottdrule[#1]{%
\ottpremise{\Gamma  \vdash  \ottnt{e}  \ottsym{:}  \ottnt{t'}}%
\ottpremise{ \ottnt{t'}  <:  \ottnt{t} }%
}{
\Gamma  \vdash  \ottsym{(}  \ottnt{t}  \ottsym{)}  \ottnt{e}  \ottsym{:}  \ottnt{t}}{%
{\ottdrulename{wf\_cast}}{}%
}}
\newcommand{\ottdrulewfXXlet}[1]{\ottdrule[#1]{%
\ottpremise{\Gamma  \vdash  \ottnt{e_{{\mathrm{1}}}}  \ottsym{:}  \ottnt{t_{{\mathrm{1}}}}}%
\ottpremise{\Gamma  \ottsym{,}  \ottmv{x}  \ottsym{:}  \ottnt{t_{{\mathrm{1}}}}  \vdash  \ottnt{e_{{\mathrm{2}}}}  \ottsym{:}  \ottnt{t}}%
}{
\Gamma  \vdash   \kwool{let}\,   \ottmv{x}  =  \ottnt{e_{{\mathrm{1}}}} ~\kwool{in}~ \ottnt{e_{{\mathrm{2}}}}   \ottsym{:}  \ottnt{t}}{%
{\ottdrulename{wf\_let}}{}%
}}
\newcommand{\ottdrulewfXXcall}[1]{\ottdrule[#1]{%
\ottpremise{\Gamma  \ottsym{(}  \ottmv{x}  \ottsym{)}  \ottsym{=}  \ottnt{t_{{\mathrm{1}}}}}%
\ottpremise{ \Gamma  \vdash  \ottnt{e}  \ottsym{:}  \ottnt{t_{{\mathrm{2}}}}  \ottlinebreakhack }%
\ottpremise{\ottkw{msigs} \, \ottsym{(}  \ottnt{t_{{\mathrm{1}}}}  \ottsym{)}  \ottsym{(}  \ottmv{m}  \ottsym{)}  \ottsym{=}  \ottmv{y}  \ottsym{:}  \ottnt{t_{{\mathrm{2}}}}  \rightarrow  \ottnt{t}}%
}{
\Gamma  \vdash  \ottsym{x.m}  \ottsym{(}  \ottnt{e}  \ottsym{)}  \ottsym{:}  \ottnt{t}}{%
{\ottdrulename{wf\_call}}{}%
}}
\newcommand{\ottdrulewfXXvar}[1]{\ottdrule[#1]{%
\ottpremise{ \vdash  \Gamma  \!\!\! }%
\ottpremise{ \Gamma  \ottsym{(}  \ottmv{x}  \ottsym{)}  \ottsym{=}  \ottnt{t}  \!\!\! }%
}{
\Gamma  \vdash  \ottmv{x}  \ottsym{:}  \ottnt{t}}{%
{\ottdrulename{wf\_var}}{}%
}}
\newcommand{\ottdrulewfXXloc}[1]{\ottdrule[#1]{%
\ottpremise{\vdash  \Gamma}%
\ottpremise{\Gamma  \ottsym{(}  \iota  \ottsym{)}  \ottsym{=}  \ottmv{C}}%
\ottpremise{ \ottmv{C}  <:  \ottnt{t} }%
}{
\Gamma  \vdash  \iota  \ottsym{:}  \ottnt{t}}{%
{\ottdrulename{wf\_loc}}{}%
}}
\newcommand{\ottdrulewfXXnull}[1]{\ottdrule[#1]{%
\ottpremise{\vdash  \Gamma}%
\ottpremise{\vdash  \ottnt{t}}%
}{
\Gamma  \vdash  \kwool{null}  \ottsym{:}  \ottnt{t}}{%
{\ottdrulename{wf\_null}}{}%
}}
\newcommand{\ottdrulewfXXselect}[1]{\ottdrule[#1]{%
\ottpremise{ \Gamma  \vdash  \ottmv{x}  \ottsym{:}  \ottmv{C}  \ottlinebreakhack }%
\ottpremise{ \ottkw{fields} \, \ottsym{(}  \ottmv{C}  \ottsym{)}  ( \ottmv{f} ) =  \ottnt{t} }%
}{
\Gamma  \vdash  \ottmv{x}  \ottsym{.}  \ottmv{f}  \ottsym{:}  \ottnt{t}}{%
{\ottdrulename{wf\_select}}{}%
}}
\newcommand{\ottdrulewfXXupdate}[1]{\ottdrule[#1]{%
\ottpremise{\Gamma  \vdash  \ottmv{x}  \ottsym{:}  \ottmv{C}}%
\ottpremise{ \Gamma  \vdash  \ottnt{e}  \ottsym{:}  \ottnt{t}  \ottlinebreakhack }%
\ottpremise{ \ottkw{fields} \, \ottsym{(}  \ottmv{C}  \ottsym{)}  ( \ottmv{f} ) =  \ottnt{t} }%
}{
\Gamma  \vdash  \ottmv{x}  \ottsym{.}  \ottmv{f}  \ottsym{=}  \ottnt{e}  \ottsym{:}  \kwool{Unit}}{%
{\ottdrulename{wf\_update}}{}%
}}
\newcommand{\ottdrulewfXXnew}[1]{\ottdrule[#1]{%
\ottpremise{\vdash  \Gamma}%
\ottpremise{\vdash  \ottmv{C}}%
}{
\Gamma  \vdash  \kwool{new} \, \ottmv{C}  \ottsym{:}  \ottmv{C}}{%
{\ottdrulename{wf\_new}}{}%
}}
\newcommand{\ottdrulewfXXfj}[1]{\ottdrule[#1]{%
\ottpremise{\Gamma  \ottsym{=}  \Gamma_{{\mathrm{1}}}  +  \Gamma_{{\mathrm{2}}}}%
\ottpremise{\Gamma_{{\mathrm{1}}}  \vdash  \ottnt{e_{{\mathrm{1}}}}  \ottsym{:}  \ottnt{t_{{\mathrm{1}}}}}%
\ottpremise{\Gamma_{{\mathrm{2}}}  \vdash  \ottnt{e_{{\mathrm{2}}}}  \ottsym{:}  \ottnt{t_{{\mathrm{2}}}}}%
\ottpremise{\Gamma  \vdash  \ottnt{e}  \ottsym{:}  \ottnt{t}}%
}{
\Gamma  \vdash  \kwool{finish} \, \,\startblock\, \, \kwool{async} \, \,\startblock\,  \ottnt{e_{{\mathrm{1}}}}  \,\finishblock\, \, \kwool{async} \, \,\startblock\,  \ottnt{e_{{\mathrm{2}}}}  \,\finishblock\,  \,\finishblock\,  \ottsym{;}  \ottnt{e}  \ottsym{:}  \ottnt{t}}{%
{\ottdrulename{wf\_fj}}{}%
}}
\newcommand{\ottdrulewfXXlock}[1]{\ottdrule[#1]{%
\ottpremise{\Gamma  \vdash  \ottmv{x}  \ottsym{:}  \ottnt{t_{{\mathrm{2}}}}}%
\ottpremise{ \Gamma  \vdash  \ottnt{e}  \ottsym{:}  \ottnt{t}  \!\!\! }%
}{
\Gamma  \vdash   \kwool{lock} ( \ottmv{x} ) \kwool{\,in\,}  \ottnt{e}   \ottsym{:}  \ottnt{t}}{%
{\ottdrulename{wf\_lock}}{}%
}}
\newcommand{\ottdrulewfXXlocked}[1]{\ottdrule[#1]{%
\ottpremise{\Gamma  \vdash  \ottnt{e}  \ottsym{:}  \ottnt{t}}%
\ottpremise{\Gamma  \ottsym{(}  \iota  \ottsym{)}  \ottsym{=}  \ottnt{t_{{\mathrm{2}}}}}%
}{
\Gamma  \vdash   \kwool{locked}_{ \iota } \{  \ottnt{e}  \}   \ottsym{:}  \ottnt{t}}{%
{\ottdrulename{wf\_locked}}{}%
}}
\newcommand{\ottdefnwfXXExpression}[1]{\begin{ottdefnblock}[#1]{$\Gamma  \vdash  \ottnt{e}  \ottsym{:}  \ottnt{t}$}{}
\ottusedrule{\ottdrulewfXXcast{}}
\ottusedrule{\ottdrulewfXXlet{}}
\ottusedrule{\ottdrulewfXXcall{}}
\ottusedrule{\ottdrulewfXXvar{}}
\ottusedrule{\ottdrulewfXXloc{}}
\ottusedrule{\ottdrulewfXXnull{}}
\ottusedrule{\ottdrulewfXXselect{}}
\ottusedrule{\ottdrulewfXXupdate{}}
\ottusedrule{\ottdrulewfXXnew{}}
\ottusedrule{\ottdrulewfXXfj{}}
\end{ottdefnblock}}
\newcommand{\ottdrulewfXXheap}[1]{\ottdrule[#1]{%
\ottpremise{ \forall \, \iota  \ottsym{:}  \ottmv{C} \, \!\!\in\! \, \Gamma  \ottsym{.}  \ottnt{H}  \ottsym{(}  \iota  \ottsym{)}  \ottsym{=}   ( \ottmv{C} ,  \ottnt{F} ,  L )   \land  \Gamma  \ottsym{;}  \ottmv{C}  \vdash  \ottnt{F}  \ottlinebreakhack }%
\ottpremise{\forall \, \iota \, \!\!\in\! \, \ottkw{dom} \, \ottsym{(}  \ottnt{H}  \ottsym{)}  \ottsym{.}   \iota  \in  \ottkw{dom} ( \Gamma ) }%
\ottpremise{\vdash  \Gamma}%
}{
\Gamma  \vdash  \ottnt{H}}{%
{\ottdrulename{wf\_heap}}{}%
}}
\newcommand{\ottdefnwfXXHeap}[1]{\begin{ottdefnblock}[#1]{$\Gamma  \vdash  \ottnt{H}$}{}
\ottusedrule{\ottdrulewfXXheap{}}
\end{ottdefnblock}}
\newcommand{\ottdrulewfXXfields}[1]{\ottdrule[#1]{%
\ottpremise{ \ottkw{fields} \, \ottsym{(}  \ottmv{C}  \ottsym{)} \,  \equiv  \, \ottmv{f_{{\mathrm{1}}}}  \ottsym{:}  \ottnt{t_{{\mathrm{1}}}}  \ottsym{,} \, .. \, \ottsym{,}  \ottmv{f_{\ottmv{n}}}  \ottsym{:}  \ottnt{t_{\ottmv{n}}}  \ottlinebreakhack }%
\ottpremise{\Gamma  \vdash  \ottnt{v_{{\mathrm{1}}}}  \ottsym{:}  \ottnt{t_{{\mathrm{1}}}}  \ottsym{,} \, .. \, \ottsym{,}  \Gamma  \vdash  \ottnt{v_{\ottmv{n}}}  \ottsym{:}  \ottnt{t_{\ottmv{n}}}}%
}{
\Gamma  \ottsym{;}  \ottmv{C}  \vdash  \ottmv{f_{{\mathrm{1}}}}  \mapsto  \ottnt{v_{{\mathrm{1}}}}  \ottsym{,} \, .. \, \ottsym{,}  \ottmv{f_{\ottmv{n}}}  \mapsto  \ottnt{v_{\ottmv{n}}}}{%
{\ottdrulename{wf\_fields}}{}%
}}
\newcommand{\ottdefnwfXXFields}[1]{\begin{ottdefnblock}[#1]{$\Gamma  \ottsym{;}  \ottmv{C}  \vdash  \ottnt{F}$}{}
\ottusedrule{\ottdrulewfXXfields{}}
\end{ottdefnblock}}
\newcommand{\ottdrulewfXXvars}[1]{\ottdrule[#1]{%
\ottpremise{ \forall \, \ottmv{x}  \ottsym{:}  \ottnt{t} \, \!\!\in\! \, \Gamma  \ottsym{.}  \ottnt{V}  \ottsym{(}  \ottmv{x}  \ottsym{)}  \ottsym{=}  \ottnt{v}  \land  \Gamma  \vdash  \ottnt{v}  \ottsym{:}  \ottnt{t}  \ottlinebreakhack }%
\ottpremise{\forall \, \ottmv{x} \, \!\!\in\! \, \ottkw{dom} \, \ottsym{(}  \ottnt{V}  \ottsym{)}  \ottsym{.}   \ottmv{x}  \in  \ottkw{dom} ( \Gamma ) }%
\ottpremise{\vdash  \Gamma}%
}{
\Gamma  \vdash  \ottnt{V}}{%
{\ottdrulename{wf\_vars}}{}%
}}
\newcommand{\ottdefnwfXXVars}[1]{\begin{ottdefnblock}[#1]{$\Gamma  \vdash  \ottnt{V}$}{}
\ottusedrule{\ottdrulewfXXvars{}}
\end{ottdefnblock}}
\newcommand{\ottdrulewfXXtXXasync}[1]{\ottdrule[#1]{%
\ottpremise{\Gamma  \vdash  \ottnt{T_{{\mathrm{1}}}}  \ottsym{:}  \ottnt{t_{{\mathrm{1}}}}}%
\ottpremise{ \Gamma  \vdash  \ottnt{T_{{\mathrm{2}}}}  \ottsym{:}  \ottnt{t_{{\mathrm{2}}}}  \ottlinebreakhack }%
\ottpremise{\Gamma  \vdash  \ottnt{e}  \ottsym{:}  \ottnt{t}}%
}{
\Gamma  \vdash   \ottnt{T_{{\mathrm{1}}}}  \mathop{||}  \ottnt{T_{{\mathrm{2}}}}  \rhd  \ottnt{e}   \ottsym{:}  \ottnt{t}}{%
{\ottdrulename{wf\_t\_async}}{}%
}}
\newcommand{\ottdrulewfXXtXXthread}[1]{\ottdrule[#1]{%
\ottpremise{\Gamma  \vdash  \ottnt{e}  \ottsym{:}  \ottnt{t}}%
}{
\Gamma  \vdash  \ottsym{(}  \mathcal{L}  \ottsym{,}  \ottnt{e}  \ottsym{)}  \ottsym{:}  \ottnt{t}}{%
{\ottdrulename{wf\_t\_thread}}{}%
}}
\newcommand{\ottdrulewfXXtXXexn}[1]{\ottdrule[#1]{%
\ottpremise{\vdash  \ottnt{t}}%
\ottpremise{\vdash  \Gamma}%
}{
\Gamma  \vdash  \textbf{EXN}  \ottsym{:}  \ottnt{t}}{%
{\ottdrulename{wf\_t\_exn}}{}%
}}
\newcommand{\ottdefnwfXXThreads}[1]{\begin{ottdefnblock}[#1]{$\Gamma  \vdash  \ottnt{T}  \ottsym{:}  \ottnt{t}$}{}
\ottusedrule{\ottdrulewfXXtXXasync{}}
\ottusedrule{\ottdrulewfXXtXXthread{}}
\ottusedrule{\ottdrulewfXXtXXexn{}}
\end{ottdefnblock}}
\newcommand{\ottdrulewfXXlXXasync}[1]{\ottdrule[#1]{%
\ottpremise{ \ottkw{heldLocks} \, \ottsym{(}  \ottnt{T_{{\mathrm{1}}}}  \ottsym{)}  \cap  \ottkw{heldLocks} \, \ottsym{(}  \ottnt{T_{{\mathrm{2}}}}  \ottsym{)} \,  \equiv  \,  \emptyset   \ottlinebreakhack }%
\ottpremise{ \forall \, \iota \, \!\!\in\! \, \ottkw{locks} \, \ottsym{(}  \ottnt{e}  \ottsym{)}  \ottsym{.}  \iota \, \!\!\in\! \, \ottkw{heldLocks} \, \ottsym{(}  \ottnt{T_{{\mathrm{1}}}}  \ottsym{)}  \ottlinebreakhack }%
\ottpremise{\ottsym{distinctLocks(}  \ottnt{e}  \ottsym{)}}%
\ottpremise{\ottnt{H}  \vdash_{\mbox{\tiny lock} }  \ottnt{T_{{\mathrm{1}}}}}%
\ottpremise{\ottnt{H}  \vdash_{\mbox{\tiny lock} }  \ottnt{T_{{\mathrm{2}}}}}%
}{
\ottnt{H}  \vdash_{\mbox{\tiny lock} }   \ottnt{T_{{\mathrm{1}}}}  \mathop{||}  \ottnt{T_{{\mathrm{2}}}}  \rhd  \ottnt{e} }{%
{\ottdrulename{wf\_l\_async}}{}%
}}
\newcommand{\ottdrulewfXXlXXthread}[1]{\ottdrule[#1]{%
\ottpremise{ \forall \, \iota \, \!\!\in\! \, \mathcal{L}  \ottsym{.}  \ottnt{H}  \ottsym{(}  \iota  \ottsym{)}  \ottsym{=}   ( \ottmv{C} ,  \ottnt{F} ,  \ottkw{locked} )   \ottlinebreakhack }%
\ottpremise{\ottsym{distinctLocks(}  \ottnt{e}  \ottsym{)}}%
\ottpremise{\forall \, \iota \, \!\!\in\! \, \ottkw{locks} \, \ottsym{(}  \ottnt{e}  \ottsym{)}  \ottsym{.}  \iota \, \!\!\in\! \, \mathcal{L}}%
}{
\ottnt{H}  \vdash_{\mbox{\tiny lock} }  \ottsym{(}  \mathcal{L}  \ottsym{,}  \ottnt{e}  \ottsym{)}}{%
{\ottdrulename{wf\_l\_thread}}{}%
}}
\newcommand{\ottdrulewfXXlXXexn}[1]{\ottdrule[#1]{%
}{
\ottnt{H}  \vdash_{\mbox{\tiny lock} }  \textbf{EXN}}{%
{\ottdrulename{wf\_l\_exn}}{}%
}}
\newcommand{\ottdefnwfXXLocking}[1]{\begin{ottdefnblock}[#1]{$\ottnt{H}  \vdash_{\mbox{\tiny lock} }  \ottnt{T}$}{}
\ottusedrule{\ottdrulewfXXlXXasync{}}
\ottusedrule{\ottdrulewfXXlXXthread{}}
\ottusedrule{\ottdrulewfXXlXXexn{}}
\end{ottdefnblock}}
\newcommand{\ottdrulewfXXcfg}[1]{\ottdrule[#1]{%
\ottpremise{\Gamma  \vdash  \ottnt{H}}%
\ottpremise{ \Gamma  \vdash  \ottnt{V}  \ottlinebreakhack }%
\ottpremise{\Gamma  \vdash  \ottnt{T}  \ottsym{:}  \ottnt{t}}%
\ottpremise{\ottnt{H}  \vdash_{\mbox{\tiny lock} }  \ottnt{T}}%
}{
\Gamma  \vdash  \langle  \ottnt{H}  \ottsym{;}  \ottnt{V}  \ottsym{;}  \ottnt{T}  \rangle  \ottsym{:}  \ottnt{t}}{%
{\ottdrulename{wf\_cfg}}{}%
}}
\newcommand{\ottdefnwfXXCfg}[1]{\begin{ottdefnblock}[#1]{$\Gamma  \vdash  \ottnt{cfg}  \ottsym{:}  \ottnt{t}$}{}
\ottusedrule{\ottdrulewfXXcfg{}}
\end{ottdefnblock}}
\newcommand{\ottdrulewfXXprogram}[1]{\ottdrule[#1]{%
\ottpremise{\forall \, \ottnt{Id} \, \!\!\in\! \, \ottnt{Ids}  \ottsym{.}  \vdash  \ottnt{Id}}%
\ottpremise{\forall \, \ottnt{Cd} \, \!\!\in\! \, \ottnt{Cds}  \ottsym{.}  \vdash  \ottnt{Cd}}%
\ottpremise{ \epsilon   \vdash  \ottnt{e}  \ottsym{:}  \ottnt{t}}%
}{
\vdash  \ottnt{Ids} \, \ottnt{Cds} \, \ottnt{e}  \ottsym{:}  \ottnt{t}}{%
{\ottdrulename{wf\_program}}{}%
}}
\newcommand{\ottdefnwfXXProgram}[1]{\begin{ottdefnblock}[#1]{$\vdash  \ottnt{P}  \ottsym{:}  \ottnt{t}$}{}
\ottusedrule{\ottdrulewfXXprogram{}}
\end{ottdefnblock}}
\newcommand{\ottdrulewfXXinterface}[1]{\ottdrule[#1]{%
\ottpremise{\forall \, \ottmv{m}  \ottsym{(}  \ottmv{x}  \ottsym{:}  \ottnt{t}  \ottsym{)}  \ottsym{:}  \ottnt{t'} \, \!\!\in\! \, \ottnt{Msigs}  \ottsym{.}  \vdash  \ottnt{t}  \land  \vdash  \ottnt{t'}}%
}{
\vdash  \kwool{interface} \, \ottmv{I}  \,\startblock\,  \ottnt{Msigs}  \,\finishblock\,}{%
{\ottdrulename{wf\_interface}}{}%
}}
\newcommand{\ottdrulewfXXclass}[1]{\ottdrule[#1]{%
\ottpremise{ \forall \, \ottmv{m}  \ottsym{(}  \ottmv{x}  \ottsym{:}  \ottnt{t}  \ottsym{)}  \ottsym{:}  \ottnt{t'} \, \!\!\in\! \, \ottkw{msigs} \, \ottsym{(}  \ottmv{I}  \ottsym{)}  \ottsym{.}  \kwool{def} \, \ottmv{m}  \ottsym{(}  \ottmv{x}  \ottsym{:}  \ottnt{t}  \ottsym{)}  \ottsym{:}  \ottnt{t'}  \,\startblock\,  \ottnt{e}  \,\finishblock\, \, \!\!\in\! \, \mathit{Mds}  \ottlinebreakhack }%
\ottpremise{\forall \, \ottnt{Fd} \, \!\!\in\! \, \ottnt{Fds}  \ottsym{.}  \vdash  \ottnt{Fd}}%
\ottpremise{\forall \, \ottnt{Md} \, \!\!\in\! \, \mathit{Mds}  \ottsym{.}  \kwool{this}  \ottsym{:}  \ottmv{C}  \vdash  \ottnt{Md}}%
}{
\vdash  \kwool{class} \, \ottmv{C} \, \kwool{implements} \, \ottmv{I}  \,\startblock\,  \ottnt{Fds} \, \mathit{Mds}  \,\finishblock\,}{%
{\ottdrulename{wf\_class}}{}%
}}
\newcommand{\ottdefnwfXXClass}[1]{\begin{ottdefnblock}[#1]{$\vdash  \ottnt{Cd}$}{}
\ottusedrule{\ottdrulewfXXclass{}}
\end{ottdefnblock}}
\newcommand{\ottdrulewfXXfield}[1]{\ottdrule[#1]{%
\ottpremise{\vdash  \ottnt{t}}%
}{
\vdash  \ottmv{f}  \ottsym{:}  \ottnt{t}}{%
{\ottdrulename{wf\_field}}{}%
}}
\newcommand{\ottdefnwfXXField}[1]{\begin{ottdefnblock}[#1]{$\vdash  \ottnt{Fd}$}{}
\ottusedrule{\ottdrulewfXXfield{}}
\end{ottdefnblock}}
\newcommand{\ottdrulewfXXmethod}[1]{\ottdrule[#1]{%
\ottpremise{ \kwool{this}  \ottsym{:}  \ottmv{C}  \ottsym{,}  \ottmv{x}  \ottsym{:}  \ottnt{t}   \vdash  \ottnt{e}  \ottsym{:}  \ottnt{t'}}%
}{
\kwool{this}  \ottsym{:}  \ottmv{C}  \vdash  \kwool{def} \, \ottmv{m}  \ottsym{(}  \ottmv{x}  \ottsym{:}  \ottnt{t}  \ottsym{)}  \ottsym{:}  \ottnt{t'}  \,\startblock\,  \ottnt{e}  \,\finishblock\,}{%
{\ottdrulename{wf\_method}}{}%
}}
\newcommand{\ottdefnwfXXMethod}[1]{\begin{ottdefnblock}[#1]{$\Gamma  \vdash  \ottnt{Md}$}{}
\ottusedrule{\ottdrulewfXXmethod{}}
\end{ottdefnblock}}
\newcommand{\oolong}{OOlong\xspace}
\newcommand{\rccc}{\textsc{AtomiS}\xspace}
\newcommand{\func}[1]{\textbf{\textsf{#1}\xspace}}
\newcommand{\set}[1]{\textsf{#1}\xspace}
\newcommand{\off}[1]{}
\newcommand{\cfg}[3]{\langle #1; #2; #3\rangle}
\newcommand{\Id}{\ensuremath{\mathit{Id}}}
\newcommand{\Ids}{\ensuremath{\mathit{Ids}}}
\newcommand{\Cd}{\ensuremath{\mathit{Cd}}}
\newcommand{\Cds}{\ensuremath{\mathit{Cds}}}
\newcommand{\Msig}{\ensuremath{\mathit{Msig}}}
\newcommand{\IMsigs}{\ensuremath{\mathit{IMsigs}}}
\newcommand{\IMsig}{\ensuremath{\mathit{IMsig}}}
\newcommand{\Fd}{\ensuremath{\mathit{Fd}}}
\newcommand{\Md}{\ensuremath{\mathit{Md}}}
\newcommand{\Fds}{\ensuremath{\mathit{Fds}}}
\newcommand{\Mds}{\ensuremath{\mathit{Mds}}}
\newcommand{\Sas}{\ensuremath{\mathit{Sas}}}
\newcommand{\Sa}{\ensuremath{\mathit{Sa}}}
\newcommand{\SB}{\{}
\newcommand{\FB}{\}}
\newcommand{\vdashel}{\vdashe}
\newcommand{\vdashe}{\vdashx{\natlhs}}
\newcommand{\vdashx}[1]{\vdash_{#1}}
\newcommand{\kc}[1]{\texttt{#1}}
\newcommand{\kw}[1]{\textbf{#1}\xspace}
\newcommand{\ifk}{\kw{if}}
\newcommand{\elsek}{\kw{else}}
\newcommand{\atomicS}{\kw{atomic}}
\newcommand{\atomick}{\atom} %
\newcommand{\natomicS}{\kw{non\_atomic}}
\newcommand{\natomick}{\slashed{\atom}} %
\newcommand{\defk}{\kw{def}}
\newcommand{\invalidk}{\kw{invalid}}
\newcommand{\validk}{\kw{valid}}
\newcommand{\Type}[0]{\set{Type}}
\newcommand{\fieldN}[0]{\set{FieldName}}
\newcommand{\interfaceN}[0]{\set{InterfaceName}}
\newcommand{\varN}[0]{\set{VarName}}
\newcommand{\classN}[0]{\set{ClassName}}
\newcommand{\methodN}[0]{\set{MethodName}}
\newcommand{\FQmethodN}[0]{\set{VariantID}}
\newcommand{\FQmethodV}[0]{\set{VariantVar}}
\newcommand{\setname}{\textsf}
\newcommand{\var}{\textsf}
\newcommand{\outputeqs}[1]{\triangleright~{#1}}
\newcommand{\maine}{\texttt{mainV}}
\newcommand{\mainid}{\natomick.\kw{Unit}.\texttt{main}.\natomick.\natomick}
\newcommand{\inter}{\func{variantConstraints}\xspace}
\newcommand{\solexpr}[0]{\setname{SolExpr}}
\newcommand{\natval}[0]{\setname{AtomicityValue}}
\newcommand{\valVal}[0]{\setname{ValidityValue}}
\newcommand{\natN}[0]{\setname{AtomicityVar}}
\newcommand{\valVar}[0]{\setname{ValidityVar}}
\newcommand{\natV}[0]{\setname{Atomicity}}
\newcommand{\fqv}[2]{{#1}@{#2}}
\newcommand{\NatSys}{\setname{%
Constraint Systems}}
\newcommand{\MetSys}{\setname{Method Constraint Systems}}
\newcommand{\NCSSolution}{\nes~\setname{Solution}}
\newcommand{\nes}{\var{Acs}}
\newcommand{\mns}{\var{Vns}}
\newcommand{\sol}{\var{Sol}}
\newcommand{\mes}{\var{Mcs}}
\newcommand{\natv}[1]{\check{#1}}
\newcommand{\natx}{\natv x}
\newcommand{\nextvar}{\func{nextvar}()}
\newcommand{\natlhs}{\natv{y}}
\newcommand{\any}{\func{interfaceMsig}}
\newcommand{\variantmsigs}{\func{variantMsigs}}
\newcommand{\consistent}{\func{interfaceImpl}}
\newcommand{\vi}[1]{\func{vi#1}\xspace}
\newcommand{\vie}{\func{viE}\xspace}
\newcommand{\vifield}{\func{viField}\xspace}
\newcommand{\solve}{\func{solve}\xspace}
\newcommand{\vvMCS}{\func{vvMCS}\xspace}
\newcommand{\ootype}{\func{ootype}}
\newcommand{\oomn}{\func{oomn}}
\newcommand{\oosig}{\func{oosig}}
\newcommand{\leftSQ}{\boldsymbol{[}\kern-.20em\boldsymbol{[}\,}
\newcommand{\rightSQ}{\boldsymbol{]}\kern-.20em\boldsymbol{]}\,}
\newcommand{\encSQ}[1]{\leftSQ{#1}\,\rightSQ}
\newcommand{\leftB}{\{\kern-.22em|\,}
\newcommand{\rightB}{\,|\kern-.22em\}}
\newcommand{\keyword}[1]{\textbf{#1}\xspace}
\newcommand{\classk}{\keyword{class}}
\newcommand{\thisk}{\keyword{this}}
\newcommand{\extendsk}{\keyword{extends}}
\newcommand{\newk}{\keyword{new}}
\newcommand{\nullk}{\keyword{null}}
\newcommand{\returnk}{\keyword{return}}
\newcommand{\truek}{\keyword{true}}
\newcommand{\falsek}{\keyword{false}}
\newcommand{\letk}{\keyword{let}}
\newcommand{\ink}{\keyword{in}}
\newcommand{\fields}{\func{fields}}
\newcommand{\dom}{\func{dom}}
\newcommand{\natureof}[1]{\encSQ{#1}} %
\newcommand{\fresh}{\emph{fresh}}
\newcommand{\regularV}{\ottkw{base}}
\newcommand{\natvar}{\ottkw{natvar}}
\newcommand{\valvar}{\ottkw{valvar}}
\title{Sound Atomicity Inference for \\Data-Centric Synchronization}
\author[1]{Hervé Paulino}
\author[2]{Ana Almeida Matos}
\author[2]{Jan Cederquist}
\author[1]{Marco Giunti}
\author[2]{João Matos}
\author[1]{António Ravara}
\affil[1]{NOVA-LINCS and FCT NOVA, NOVA University Lisbon}
\affil[2]{Instituto de Telecomunicações, and IST, University of Lisbon}
\begin{document}

\maketitle

\begin{abstract}
Data-Centric Concurrency Control (DCCC) shifts the reasoning about concurrency restrictions from control structures to data declaration. 
It is a high-level declarative approach that %
abstracts away from the actual concurrency control mechanism(s) in use.
Despite its advantages, the practical use of DCCC is hindered by the fact that it may require many annotations and/or multiple implementations of the same method to cope with differently qualified parameters.
Moreover, the existing DCCC solutions do not address the use of interfaces, precluding their use in most object-oriented programs.
To overcome these limitations, in this paper we present \rccc, a new DCCC model based on a rigorously defined type-sound programming language.
Programming with \rccc requires only (\texttt{atomic})-qualifying types of parameters and return values in interface definitions, and of fields in class definitions.
From this \emph{atomicity specification}, a static analysis infers the atomicity constraints that are local to each method, considering valid only the method variants that are consistent with the specification, and performs %
code generation for all valid variants of each method.
The generated code is then the target for automatic injection of concurrency control primitives, by means of the desired automatic technique and associated atomicity and deadlock-freedom guarantees, which can be plugged-into the model's pipeline.
We present the foundations for the \rccc analysis and synthesis, with formal guarantees that the generated program is well-typed and %
that it corresponds behaviourally to the original one. The proofs are mechanised in Coq.
We also provide a Java implementation that showcases the applicability of \rccc in real-life programs.
\end{abstract}
\section{Introduction}
\label{sec:intro}

Parallelism is omnipresent in today's computer systems, from cloud infrastructures, personal computers to handheld devices. 
Concurrent programming, hence, becomes a fundamental tool for software to fully harness the processing power available in the underlying hardware. However, %
writing efficient concurrent code remains complex, and assessing its correctness difficult, resulting in execution errors caused by concurrency issues \cite{DBLP:conf/asplos/LuPSZ08,DBLP:journals/jisa/AsadollahSEH17}, %
some constituting real threats, such as the %
Therac-25~\cite{DBLP:journals/computer/LevesonT93} and the Northeast blackout~\cite{blackout} incidents.%
The \emph{shared memory} model is the \textit{de facto} approach to program concurrent applications in shared memory architectures  (message-passing %
alternatives used in languages like Erlang or Go remain underutilised).
Some modern programming languages (\eg Rust)
offer tight control of resources preventing interference statically, but inherently concurrent data structures still require dealing with accesses to shared memory~\cite{DBLP:journals/pacmpl/YanovskiDJD21}.
Expressing %
restrictions on the manipulation of shared resources usually consists of explicitly delimiting the sequences of instructions that access those resources (the critical region), and coding the necessary synchronisation actions %
to prevent interference. %
This can be done via low-level primitives---\ie hardware \textit{read–modify–write} instructions,
\textit{locks}, \textit{semaphores}, etc.---or higher level mechanisms like 
Java's \emph{synchronized} blocks, transactional memory~\cite{DBLP:conf/podc/ShavitT95}, and others~\cite{DBLP:conf/popl/EmmiFJM07,DBLP:conf/pldi/CheremCG08}.

\paragraph*{The limitations of {control-centric} approaches.}
The mechanisms described thus far are \emph{control-centric} approaches that require a distributed analysis of the code, making reasoning about correctness harder as the code grows. 
This is prone to human error and can often result in disorganised and \textit{ad hoc} code~\cite{DBLP:conf/osdi/XiongPZZM10}. A single missing or ill-placed locking operation, \emph{synchronized} block, atomic region, etc. is enough to compromise an application's behaviour.
To mitigate the impact of such methodology, relevant work has addressed correctness 
of concurrent accesses to shared resources, with focus on the absence of
deadlocks~\cite{DBLP:conf/popl/McCloskeyZGB06,lockinference,DBLP:conf/popl/EmmiFJM07,DBLP:conf/oopsla/BoyapatiLR02,DBLP:conf/sosp/EnglerA03,DBLP:conf/pldi/CheremCG08},
data-races~\cite{DBLP:conf/oopsla/BoyapatiLR02,DBLP:conf/sosp/EnglerA03}
and atomicity violations~\cite{DBLP:conf/popl/FlanaganF04,DBLP:conf/asplos/LuTQZ06,DBLP:conf/oopsla/SchneiderMSA08,DBLP:journals/toplas/AbadiBHI11}.
Despite these efforts, the study in \cite{DBLP:conf/asplos/LuPSZ08} identifies that 97\% of the non-deadlock errors stem from either atomicity or protocol violations.

Although there are  successful static analysis tools that deal with these kinds of problems, like Infer%
\footnote{\url{https://fbinfer.com/}} 
or ThreadSafe%
\footnote{\url{http://www.contemplateltd.com/threadsafe-1-2}}, 
as well as ongoing research~\cite{DBLP:conf/asplos/ZhangJL17,DBLP:conf/osdi/CuiGKNSWY18,DBLP:conf/pldi/RoemerGB20},
the generalised use of such solutions is still distant, as the associated effort and learning curve 
are not negligible. Most companies and programmers could greatly benefit from a language-based approach that would help them to reason about the code they are developing.

\paragraph*{Data-Centric Concurrency Control (DCCC)}
shifts the expression of  %
concurrency-related constraints away from control structures and into data declaration. 
A simple example are atomic types in C++~\cite{atomic-types}, whose scope includes only individual accesses to atomic-qualified types.
By centralising all concurrency control management onto data declaration, DCCC promotes local rather than distributed reasoning, a key change to achieve simpler reasoning on interference supervision.
Several proposals have adopted DCCC for either shared memory concurrent programming \cite{DBLP:conf/asplos/CezePCMT08,atomicsets-POPL2006,DBLP:journals/toplas/DolbyHMTVV12,DBLP:conf/pppj/0001DB16,rc3-sac16}, 
extending the concept to groups of resources that share consistency requirements, 
or to distributed programming~\cite{DBLP:conf/usenix/0001LCPRV14,DBLP:conf/ecoop/ZazaN16,DBLP:journals/pvldb/KraskaHAK09}. 

\emph{Atomic sets}~\cite{atomicsets-POPL2006,DBLP:journals/toplas/DolbyHMTVV12} laid the foundations for DCCC in the shared memory context  we are targeting.
Although innovative, \emph{Atomic sets} have the problem of requiring multiple keywords, with non-trivial semantics, that render the specification burdensome and even complex at times. Moreover, progress guarantees requires extra work from the programmer and 
is not guaranteed in all situations~\cite{deadlocks-atomicsets-ICSE2013}.
Resource-Centric Concurrency Control (RC$^3$)~\cite{rc3-sac16} is
a proposal that builds from a single keyword (\atomicS), presenting itself as a simpler solution that combines the concepts of \textit{atomically qualified type} and \textit{unit of work}.
Units of work comprise a method's sequence of instructions from the first to the last access to a variable of atomic type and which are to be executed atomically.
In DCCC approaches, %
the task of the programmer is to simply {atomic}-qualify the declaration of the variables that must be accessed atomically in a unit of work.
If the annotations are consistent, a compilation process %
should produce code %
free of data-races, atomicity violations, and deadlocks. 

\paragraph*{Problem.}
RC$^3$ %
is a promising solution. %
However, its practical use (and the use of type-based concurrency control solutions in general) is hindered by the fact that the code (method bodies, for instance) is not agnostic to the {atomic}-qualification of the local variables and parameters.
Consequently, the programmer has to write several versions of the same code to account for the supported combinations of {atomic}-qualified and not {atomic}-qualified types.
Consider the following example of a Java class \texttt{ArrayList} that implements interface \texttt{List}, including method \texttt{equals} that checks if the contents of the current collection are the same of another received as argument:
\pagebreak
\begin{lstlisting}[numbers=none]
class ArrayList implements List {
  ...
  
  // Returns true if 'this' and 'other' have the same elements
  Boolean equals(List other) { ... }  
}
\end{lstlisting}
In order to account for all combinations of {atomic}-qualified and not {atomic}-qualified  for \texttt{\thisk} and parameter \texttt{other}, the programmer must  provide 4 implementations of the method.
The number of combinations naturally doubles with each parameter whose atomicity must be considered (non-primitive types).
Moreover, if the method's result is a non-primitive value that may be assigned to both an atomic or a non-atomic recipient, \textit{e.g.}
\[\text{\icode{Boolean b = x.equals(y)} or 
\icode{@Atomic Boolean b = x.equals(y)}}\]
the return type's atomicity must also be considered, doubling the number of combinations.

Furthermore, providing atomic and regular class implementations of a given functionality would result in the same code, modulo the DCCC annotations.
This goes much in the vein of what is currently done (with other constructs) in many programming languages to implement sequential and concurrent data structures\footnote{A known example are Java's synchronized collections: \url{https://github.com/openjdk/jdk/blob/master/src/java.base/share/classes/java/util/Collections.java}.}.
The problem shares similarities with the duplication  of  \texttt{const} and \texttt{non-const} member functions in C++.
However, unlike such problem~\cite{10.5555/1036281,10.5555/1051335}, it cannot be efficiently solved by simply having the non-atomic version call the atomic one, 
since this would imply non-negligible synchronisation overhead for all accesses, independently of the {atomic}-qualification of the types.

In truth, from the programmer's perspective, a method's actual implementation is independent 
of the {atomic}-qualification  of \texttt{\thisk} or of the method's parameters and, hence, should be coded in way agnostic to these atomicity concerns.
Conversely, the compiler-generated code is considerably dependent of those same concerns, since they are the ones guiding the generation of the required low-level synchronisation instructions.
In the example above, concurrency control has to be generated to guarantee that no elements are concurrently added to  {atomic}-qualified lists during the execution of \texttt{equals}, otherwise the result of \texttt{equals} could be erroneous.

Therefore, the challenge is to \emph{infer and synthesise} the use of the \texttt{atomic} type qualifier in method definitions, so that they are removed from the programmer's concerns but
are available for the compiler to generate the necessary synchronisation code. 
Our proposal provides a framework to write methods in a concurrency-agnostic way (\ie without worrying about the concurrency annotations):
the programmer %
{atomic}-qualifies some class fields to implement their concurrency restrictions.
In short, what differentiates our type-based DCCC from other approaches is that the concurrency control information is encoded on the \emph{type} of the resource.
Our insight is to leverage on type parameters to abstract away the {atomic}-qualification of fields from class implementation and move the concern to class instantiation. This approach will allow to use \emph{the same} base class implementations for both concurrent and sequential settings.

\paragraph*{Our proposal.}
We build on DCCC atomicity annotations and on the \emph{unit of work} concepts of Atomic Sets and RC$^3$ to propose \rccc, a novel concurrency control model that only requires from the programmer a \emph{high-level atomicity specification} -- the {atomic}-qualification of class fields and,
in each interface method signature, the enumeration of which combinations of atomic parameters and return types must be supported by the method's implementations.

The \texttt{List} interface depicted in Listing~\ref{lst:baselist} features such specification by using annotation \icode{@All} that we will present in \S\ref{sec:language}.
From this specifications our approach:
\begin{enumerate}
\item 
infers and synthesises the {atomic}-qualifications of the type of method parameters, local variables and return values;
\item allows for the {atomic}-qualification of class field types to be abstracted into class type parameters.%
\end{enumerate}
 
\rccc decouples the reasoning about functionality from the reasoning about concurrency.
Methods are implemented in a concurrency-agnostic way and, when convenient, the \texttt{atomic}-qualification of types can also be encoded in type parameters. 
As a result, very specific optimisations aside, with \rccc it is no longer needed to code concurrent and sequential versions of classes, as it is usual in programming languages, such as Java and C++.
Listing~\ref{lst:baselist} presents a sketch of a simple implementation of a generic list in Java.
Besides the usual type parameter (\texttt{T}) for the elements' type, we add a new one (\texttt{N}) for the type of the node to be used. 
Listing~\ref{lst:list} showcases how a concurrent list and  a sequential may be derived from the base list. 
Either of these lists may be used to store collections of both atomic and non-atomic values.
For instance, a sequential list of atomic bank accounts may be created as
\icode{new SeqList<@Atomic Account>()}.

\begin{table}[t]
\hrule
\begin{lstlisting}[multicols=2, escapechar=|]
interface List<T> { |\label{line:begin:list}|
 void add(T element);
 T get(int pos); 
 @All Boolean equals(@All List<T> other); 
} 

class Node<T> { 
 T value;
 Node<T> next, prev;
}

class BaseList<T, N extends Node<T>> implements List<T> {
 N head, tail;
 Supplier<N> factory;

 BaseList(Supplier<N> factory) {
  this.factory = factory; }

 void add(T element) {
  N node = this.factory.get();
  node.value = element;
  node.next = null;
  if (this.head == null) {
   node.prev = null;
   this.head = node;
  } else {
   node.prev = this.tail;
   this.tail.next = node;
  }
  this.tail = node;
 }

 T get(int pos) {...}
 Boolean equals(List<T> other) {...}
}
\end{lstlisting}
\vspace{5pt}
\hrule
\captionof{lstlisting}{The base list. We make use of the \protect\texttt{Supplier} functional interface\protect\footnotemark{} to overcome Java's limitations on the creation of new objects from  type parameters.}
\label{lst:baselist}
\hrule
\begin{lstlisting}[morekeywords=@Atomic, escapechar=|]
class ConcurrentList<T> extends BaseList<T, @Atomic Node<T>> {
  ConcurrentList() { super(Node<T>::new); }
}
class SeqList<T> extends BaseList<T, Node<T>> {
  SeqList() { super(Node<T>::new); }
}
\end{lstlisting}
\hrule
\captionof{lstlisting}{Concurrent and sequential list class definitions.}
\label{lst:list}
\end{table}

\footnotetext{\url{https://docs.oracle.com/javase/8/docs/api/java/util/function/Supplier.html}}

\begin{figure}
    \centering
    \includegraphics[width=\linewidth]{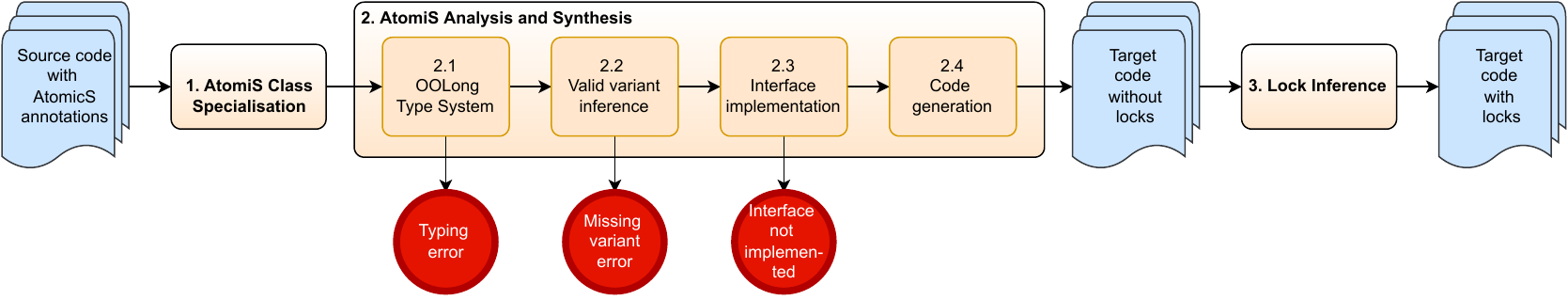}
    \caption{\rccc compilation stages}
    \label{fig:stages}
\end{figure}

The \rccc compiler generates: 
\begin{inparaenum}
\item specialisations of all classes with type parameters, one version for each combination of \{{atomic}-qualified, not {atomic}-qualified\} for each type parameter (stage \rccc \emph{Class Specialisation} in Fig.~\ref{fig:stages}),
\item the type-safe  atomicity-related overloaded versions of each method (stage \rccc \emph{Analysis} in Fig.~\ref{fig:stages}); and, from this code, 
\item the low-level concurrency control (locks) necessary to guarantee %
key properties like thread-safety (stage \emph{Lock Inference} in Fig.~\ref{fig:stages}).
\end{inparaenum}

Although we are addressing the problem in the context of \rccc, in this paper we focus mainly on the first two of the three stages depicted in Fig.~\ref{fig:stages}, where the main novelty of our work resides.
In what concerns step 3, 
\rccc is a high-level data-centric specification that 
may be encoded in virtually any base synchronisation mechanism, an advantage of data-centric synchronisation pointed out by the authors of Atomic Sets~\cite{atomicsets-POPL2006,DBLP:journals/toplas/DolbyHMTVV12}; \rccc’ units of work simply define a sequence of instructions whose execution has to be perceived as an atomic operation; such behaviour can be obtained via an encoding in many control-centric synchronisation mechanisms, such as mutual exclusion locks, read-write locks, transactional memory, and so on and so forth. 
We %
resort to the inference of either mutual exclusion or read-write locks.

Regarding step 2, we
propose an approach based on a static analysis %
with three steps (2.1 to 2.3 in Fig.~\ref{fig:stages}). It begins by considering all possible combinations of \{{atomic}-qualified, not {atomic}-qualified\} for each class type parameter, internally generating a different class for each combination.
Next, within each class, it considers all possible combinations for each method parameter, return type, and the instance, what we refer to as \textit{method variants}. 
For each variant, the analysis infers the atomicity of the local variables
and considers \textit{valid} only the variants that ensure the consistency of the atomicity type qualifications (atomic resources are \emph{never} assigned to non-atomic variables, and vice-versa) -- Step 2.2 in Fig.~\ref{fig:stages}.
Then, in Step 2.3, it checks if all \rccc-annotated interfaces are correctly implemented, when considering the valid variants inferred for each class.
Lastly, 
Step 2.4 launches a correct-by-construction compilation process that produces code comprising 
atomic versions of each type in the source code and, for each class, the set of valid method variants.
The resulting method overloading allows programmers to continue implementing methods agnostic of atomicity concerns and to be unburdened with the task of {atomic}-qualifying method parameters and local variables.

To define a provably correct analysis and inference of the atomicity annotations, we formalise a \emph{type qualifier inference system} for a well-defined, type sound, core multi-threaded O.~O.~language (\oolong~\cite{oolong-ACR2019}). 
We choose \oolong not only due to the minimal grammar and rigorously defined (static and dynamic) semantics, but also because type soundness (progress and subject reduction) has been mechanised in the Coq proof assistant%
\footnote{\url{https://coq.inria.fr/distrib/current/refman/}},
thus being a certified language to build analyses on.
Therefore, following the high-level DCCC atomicity specification approach, \rccc is a formal model that aims to provide a basis for formally guaranteeing thread-safety properties like absence of atomicity violations, data-races and deadlocks.%
\footnote{Again, to obtain formal guarantees of thread-safety is work currently under development.}

\paragraph*{Contributions.}
The contributions of this paper are thus the following:
\begin{enumerate}
\item \rccc, a new DCCC language-based approach, %
designed to be used in the programming of real-life applications. \rccc requires only writing atomicity specifications (presented in \S\ref{sec:language}), delegating to a static analysis the inference of the atomicity concerns local to each method (in \S\ref{sec:variants}). Furthermore, the code generated includes the atomic versions of the source code's types and, for all classes, the code of all valid method variants (in \S\ref{sec:valid_variants} and \S\ref{sec:code_generation}).
 \item A formalisation of \rccc in a rigorously defined, type-sound, multi-threaded, core object-oriented programming language~\oolong (in \S\ref{sec:oolong}), illustrating the use of the language in this context with a significant example (a concurrent list).

\item Soundness results supporting the inference and code generation processes.\footnote{The Coq mechanisation can be found at \emph{AtomiS-Coq proof of type preservation (2022)}, at \url{https://zenodo.org/record/6346649} and \url{https://zenodo.org/record/6382015}}
\off{the resulting artifact\footnote{\url{https://zenodo.org/record/6382015}} has been accepted by the ECOOP 2022 Artifact Evaluation PC on
May 2, 2022.}
Concretely, this work provides 
formal guarantees (in \S\ref{sec:soundness}) that: (1) the generated program is well-typed (in \S\ref{sec:ts}), with types consistent with those in the original program, being the proof of type preservation mechanised in Coq (\S\ref{sec:mech}); and  
(2) the generated code behaviourally corresponds to the original one, \ie it does not do more nor less than the original (in \S\ref{sec:bis}).
\end{enumerate}

A companion paper presents this work from a practical perspective, focusing on a prototype Java implementation that showcases the applicability of AtomiS to real code~\cite{atomis-oopsla23}.

\off{
\begin{lstlisting}[language=java]
interface Collection {

@RC3(list of qualifiers) void add(Object element);


}
class Node { ... }
class CL implements Collection {
  @Atomic Node head;
  @Atomic Node tail;
  
  void add( ...) { ...}
}
\end{lstlisting}
}

\off{
interface I {
    m(t1) : t2 ---> (non-atomic, non-atomic)
}

class C implements I {
    m()  --- m_nn, m_aa, m_an
    m2()
}

class D implements I {
    m() -- m_nn,  m_na, m_an
}

let k = new E in
let x = new D in k.n((I) x) 

class E {
    a: atomic W 
    n(I x) {
        
        let y = x.m(a) in e --  m_aa (FAILS)
    }
    
    n(C x) {
        let y = x.m(a) in ....  m_aa (PASS)
    }

}}

\section{The \texorpdfstring{\rccc}{RC3} Model}
\label{sec:language}

\rccc is a generic data-centric synchronization model applicable to any concurrent language with shared state.
The data-centrality comes from the addition of the \atomicS type qualifier
to
variables whose values are shared across multiple concurrent execution flows.
Our Java implementation does this  by defining annotation
\icode{@Atomic}, applicable to any type use.

Mutable values assigned to variables with atomic type (\textit{atomic variables}) are referred to as \textit{atomic values}.
A process' memory is thus partitioned into \textit{atomic} and \textit{regular} (non-atomic) values.
Atomic values may only be manipulated within the scope of a \textit{unit of work} -- a sequence of instructions whose execution must be perceived as an atomic operation to the remainder computation~\cite{DBLP:journals/toplas/DolbyHMTVV12}, 
which in \rccc translates to blocks of instructions (such as method bodies)
that access at least one atomic value.
So, the execution of units of work that access the same atomic values may \emph{not} overlap in time (must occur in mutual exclusion).
The goal of \rccc is to ensure that \emph{all accesses to atomic values within each unit of work are, in effect, a single atomic operation}.

For such purpose, concurrency restrictions may be specified in both interface and class definitions.
In the case of interfaces, each method specifies what combinations of atomicities are supported in its parameter and return types.
To that end, we define the \icode{@AtomiSSpec} annotation, which takes a string with specifications of the form \[\small \texttt{(\textit{atomicity\_of\_parameter$_1$}, ...,
\textit{atomicity\_of\_parameter$_n$}) -> atomicity\_of\_return}\] where the values may be \texttt{atomic}, 
\texttt{non\_atomic} or \_ (to denote \textit{not relevant}).

Consider the following interface for lists of atomic objects. We want to specify that the methods support the insertion and retrieval of atomic values.
\begin{lstlisting}[style=atomis,numbers=none,basicstyle=\ttfamily\scriptsize]
interface ListAtomic {
 @AtomiSSpec("(atomic) -> _") void add(Object element);
 @AtomiSSpec("(_) -> atomic") Object get(int pos);  
 @AtomiSSpec("(non_atomic) -> _, (atomic) -> _") boolean containsAll(ListAtomic other);
} 
\end{lstlisting}

Accordingly,  method \texttt{add}  receives an atomic parameter, 
method \texttt{get} receives the position of the value to retrieve (a value of an immutable primitive type that requires no concurrency control) and 
returns a value of atomic type and
lastly,  as mentioned in \S1, in  method \texttt{containsAll} 
we want to be able to compare the contents of the current list with any other, atomic or not.

Note the lack of atomicity information about the object providing the methods.
Any implementation of a given interface must provide implementations for all
supported atomicity combinations of all methods, regardless of the object's atomicity (see \S\ref{sec:atomic_types_and_method_variants}).

Atomicity annotation can be made easier with the introduction of some syntactic sugar.
Parameter and return types can be left \textit{bare} or annotated with \icode{@Atomic} or \icode{@All}. These three possibilities signify that the supported values are, respectively, \texttt{non\_atomic}, \texttt{atomic} or \textit{both}.
The \texttt{ListOfAtomic} interface can now be rewritten as presented in Listing~\ref{lst:conclist}.

\begin{table}
\hrule
\begin{minipage}{.54\textwidth}
\begin{lstlisting}[escapechar=|, basicstyle=\ttfamily\scriptsize]
interface ListAtomic {
 void add(@Atomic Object element);
 @Atomic Object get(int pos);  
 boolean containsAll(@All ListAtomic other);
} 

class NodeOfAtomic {
 @Atomic Object value;
 NodeOfAtomic next, prev;
}
\end{lstlisting}
\end{minipage}
\begin{minipage}{.45\textwidth}
\begin{lstlisting}[firstnumber=last, escapechar=|,basicstyle=\ttfamily\scriptsize]
class ConcurrentListOfAtomic implements ListOfAtomic {
 @Atomic NodeOfAtomic head, tail;
 
 void add(Object element) { ... }
 ...
}
\end{lstlisting}
\end{minipage}
\hrule
\captionof{lstlisting}{A concurrent list example. \texttt{add} is the same as in Listing~\ref{lst:baselist} with \texttt{N} replaced by \texttt{NodeOfAtomic}}
\label{lst:conclist}
\end{table}

To express relations between the atomicity of certain parameter/return values, we introduce the \icode{@AtomicityOf} type annotation to express that a type must have the same atomicity of a given variable or type.
For example, an identity function that must return a value with the same atomicity as its parameter can be written as \icode{@AtomicityOf("x") Object identity(Object x)}.

 Regarding class implementations, 
 class fields may be accessed by multiple methods which may, in turn, be executed by multiple threads. 
 We thus require field atomicity to be specified in the source code.
The implementation of \icode{ListOfAtomic} in Listing~\ref{lst:conclist} supports concurrent access to list nodes, so these are qualified as atomic.
At this point we are still not dealing with type parameters, so nodes are implemented as instances of class \icode{NodeOfAtomic}, which stores atomic objects.

Arrays specify the atomicity of both the array and its contents, so \icode{@Atomic} may be used on the type parameter and the array itself. An array of \texttt{T} may thus be declared in one of four possible combinations: \icode{T[]}, \icode{@Atomic T[]} (a non-atomic array of atomic \texttt{T}), \icode{T @Atomic []} ( an atomic array of non-atomic \texttt{T}), or \icode{@Atomic T @Atomic []} (an atomic array of atomic \texttt{T}).

\subsection{Units of Work} %
\label{sec:units_of_work}

Recall that
we consider a \emph{unit of work} to be a sequence of instructions (of a method) that performs an access (read or write) to an atomic value and thus must be perceived as an atomic operation to the remainder computation.
So, we consider \emph{units of work} for every method accessing atomic values, comprising the method's code from the first to the last access to an atomic value.
More precisely, the boundaries of a unit of work begin at the first instruction~$I_1$ in the method's body, not within calls, that accesses an atomic value, and end at the last instruction~$I_2$ in the method's body, not within calls, that accesses an atomic value. Note that the unit of work includes all accesses to atomic values done in the context of calls between the execution of~$I_1$ and~$I_2$.

The concept bears similarities with transactional memory~\cite{DBLP:conf/podc/ShavitT95}, as units of work may be interpreted as transactions over atomic values, and transactional memory may even be used to implement units of work.
The main difference lies in the fact that units of work arise from the annotation of data, rather than from the explicit delimitation of transactions.
Also, in \rccc, atomicity means that the operation is perceived to be indivisible (no intermediate state is visible), rather than to be executed in its entirety or not at all. We do not have a notion of commit and rollback, as exists in transactions. In our context, atomicity is a concept closer to isolation in ACID transactional systems.

Consider class \icode{ConcurrentList} (Listing~\ref{lst:baselist}).
Given that type parameter \icode{N} was instantiated with \icode{@Atomic Node<T>}, we have that method \icode{add} features a unit of work that begins in the comparison \icode{this.head == null}, performed in line 23, and ends at the assignment \icode{this.tail = node}, in line~30. %
The same happens with methods  \icode{get} and \icode{containsAll}, that also read fields %
\icode{head} and \icode{next}.
So, the executable code of these methods will have to feature concurrency control code to ensure that
the execution of their units of work does not overlap in time, when accessing the same memory positions.
Figure~\ref{fig:units_of_work} showcases the trace of the execution of multiple methods over multiple lists. 
A unit of work may only proceed its execution if it retains exclusive access to all the resources it needs to write and read access to all the values it needs to read.

\begin{figure}
    \centering \includegraphics[width=.85\linewidth]{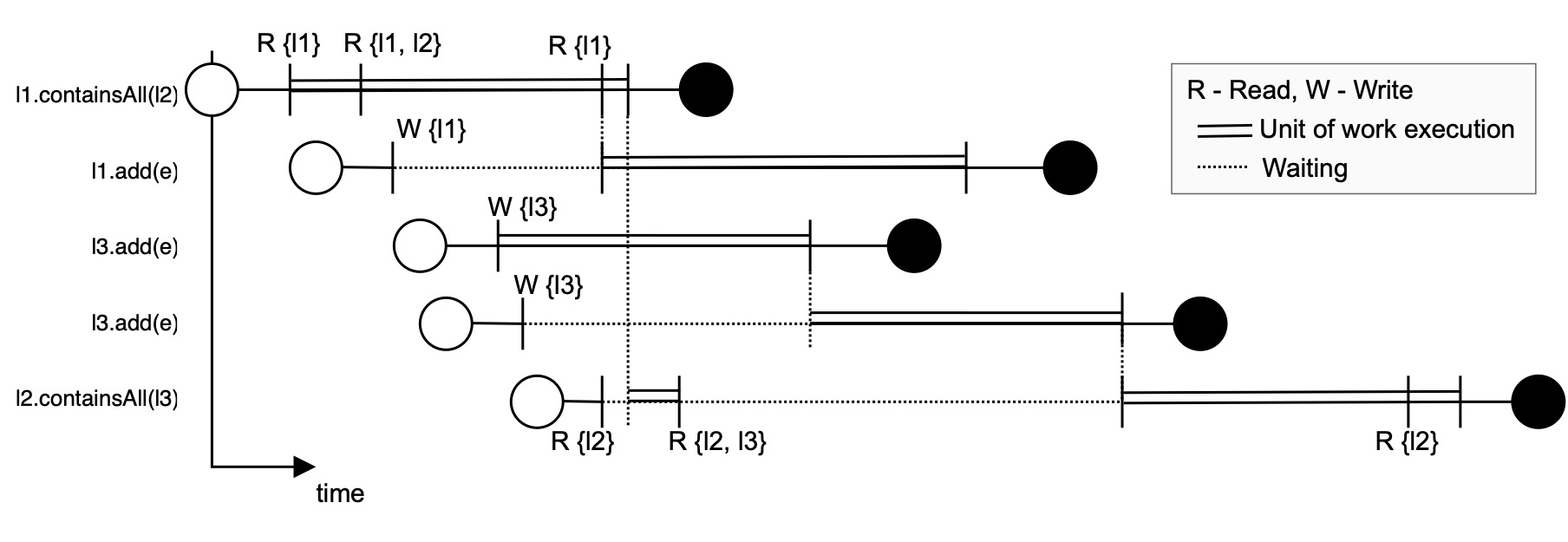} 
    \caption{Execution trace of multiple units of work working upon the state of 3 lists: $l1$, $l2$ and $l3$}
    \label{fig:units_of_work}
\end{figure}

\medskip
\textbf{Composition of units of work.}
The composition of units of work must be considered when:
\begin{enumerate}
    \item a method does not access an atomic value but calls two or more methods that do. An example is a bank transfer operation that calls existing deposit and withdraw operations that manipulate an atomic account balance. The latter operations include a unit of work, but transfer does not, since its body does not access any atomic value.
    \item a method includes a unit of work and calls methods that access atomic values before the start or after the end of its unit of work.
\end{enumerate}

To deal with unit of work composition, the compilation process must receive a \emph{unit of work composition policy}. Currently, two are supported: \emph{conservative} and \emph{standard}.
The first handles the composition automatically, creating a new unit of work for the composition of all units of work.  For case (1), it creates a unit of work for the method performing the calls and, for case~(2), it extends the unit of work to include the first and the last methods that access atomic values.
This approach is thus correct by design but may lead to extreme cases of over synchronization.
To handle such scenarios, composition may be avoided for individual methods with the annotation \icode{@NoUnitOfWork}.
In turn, the \emph{standard} policy instructs the compiler to generate new units of work %
only when all the calls made by the current method, outside %
its own unit of work, do not invoke other methods that also comprise units of work.
This excludes typical implementations of \textit{getters} and \textit{setters}, \icode{equals} and other similar methods.
By not being exhaustive, this policy might not address composition for all scenarios sought by the programmer.
To specifically determine which additional methods require composition handling, the annotation \icode{@UnitOfWork} must be used.

These two new annotations are code- rather than data-centric. 
This derives from the fact that they operate upon units of work rather than data and should be 
placed in the code as reactions to the output given by the compiler.
In our opinion, moving the annotations to data declaration would be harder to reason about because, for a given atomic value,  composition is desirable in some methods and not in others. To cope with such flexibility would require the inclusion of the lists of the methods to process (or equivalent) in the data-centric annotations. As the composition of units of work  may involve multiple atomic values, this approach could easily lead to complex and possibly inconsistent specifications.

In sum, synchronization in \rccc remains  data-centric but, when aiming for better performance, the unit-of-work-directed annotations allow for a fine grained control over which methods should feature units of work.
Given their availability, we permit the utilization of the annotations in any method, not only to address unit of work composition.

\off{ %
\off{A \emph{unit of work} is defined for every method variant accessing atomic values, comprising the variants' code from the first to the last access to an atomic value. 
The boundaries of a unit of work are computed by narrowing the list of instructions $L$ that compose a variant's body to one that includes all accesses to atomic values done in $L$, including the ones done in the context of calls.
However, if excluding method calls, no access to an atomic value is made in $L$,
then the variant only includes a unit of work if it calls two or more variants access atomic values. This %
prevents \emph{high-level data races} (HLDR)~\cite{DBLP:conf/nddl/ArthoHB03}.
}

Recall that, following Dolby~\emph{et al.}, we consider a \emph{unit of work} to be a sequence of instructions (of a method) that performs an access (read or write) to an atomic value and thus must be perceived as an atomic operation to the remainder computation.
So, we consider \emph{units of work} for every method accessing atomic values, comprising the method's code from the first to the last access to an atomic value.
More precisely, the boundaries of a unit of work begin at the first instruction~$I_1$ in the method's body, not within calls, that accesses an atomic value, and end at the last instruction~$I_2$ in the method's body, not within calls, that accesses an atomic value. Note that the unit of work includes all accesses to atomic values done in the context of calls between the execution of~$I_1$ and~$I_2$.

The concept bears similarities with transactional memory~\cite{DBLP:conf/podc/ShavitT95}, as units of work may be interpreted as transactions over atomic values, 
and transactional memory may even be used 
to implement units of work.
The main difference lies in the fact that units of work arise from the annotation of data, rather than from the explicit delimitation of transactions.
Also, in \rccc, atomicity means that the operation is perceived to be indivisible (no intermediate state is visible), rather than to be executed in its entirety or not at all. We do not have a notion of commit and rollback, as exists in transactions. In our context, atomicity is a concept closer to isolation in ACID transactional systems.

Analysing the \icode{BaseList_na} specialisation of class  \icode{BaseList},
we have that all valid variants of method \texttt{add} feature a unit of work that begins in the comparison \texttt{this.head == null}, performed in line 23 of Listing~\ref{lst:baselist}, and ends at the assignment \texttt{this.tail = node}, done in line 30 (as presented in Listing~\ref{lst:locks}).
The same happens in the variants of methods  \texttt{get} and \texttt{equals}, that also read fields \texttt{this.head} and  \texttt{this.next}.
So, the code of these variants will have to feature synchronisation code that ensures that
the execution of their units of work does not overlap in time when accessing the same memory positions.

\begin{lstlisting}[numbers=none, caption={Unit of work for method \icode{add} of the  \icode{BaseList_na} specialisation of
\icode{BaseList} (Listing~\ref{lst:list}).}, label=lst:locks, frame=tb, float]
void add(T element) {
  N node = factory.get();
  node.value = element;
  node.next = null;
  // START of unit of work 
  if (this.head == null) { ... } else { ... }
  this.tail = node;
  // END of unit of work 
 }
\end{lstlisting}

}

\off{
\paragraph*{Semantic Property.}
As %
referred in the introduction to \S\ref{sec:language},
the atomicity annotations in \rccc do not have an impact in the formal operational semantics of the language. Their role is to inform the analysis, aiming to enforce exclusive access to atomic values, among other properties. The analysis and synthesis phase thus clarifies \emph{which values are to be treated atomically}. %
Intuitively, atomic values can only be accessed in units of work, whose execution has to be perceived as an atomic operation to the remainder computation.
In other words, the execution of  units of work that access the same atomic values may not overlap in time (must occur in mutual exclusion).
The definition of unit of work is however open and can be adjusted to different levels of refinement.

In order to capture the intended semantics of unit of work, we formulate a key property for an abstract definition of unit of work: the \emph{unit of work context of a thread collection} $T$ is defined as the \emph{union} of unit of work contexts of its sub-threads.%

Assume defined the set of unit of work regions within which a single thread is computing, which we refer to as  \emph{unit of work context} of that thread, and let it be represented as the set of abstract object locations $\iota$ to which the unit of work regions refer. Further assume that the unit of work context of a single thread $T$ can be extracted from $T$ and is written $\lfloor{T}\rfloor$. 
We extend the notion of \emph{unit of work context} to thread collections $T$ as the set of unit of work collections within which the sub-threads of $T$ are currently computing.
\begin{definition}[Unit of Work Context of a Thread Collection] 
The \emph{unit of work context} of a thread collection $T$ is defined as the union of unit of work contexts of its sub-threads, i.e.:
\[\lfloor{T_1 \mathop{||} T_2 \outputeqs{ e}}\rfloor = 
\lfloor{T_1}\rfloor \cup \lfloor{T_2}\rfloor\]
\end{definition}

\label{subsec:excl-units}
The property we wish to enforce is: \emph{thread collections preserve exclusivity of atomic units of work throughout their computation}. We are thus specifically interested in whether the unit of work regions of locations associated to atomically qualified objects are entered \emph{exclusively} by a single thread at all times.
We say that a thread collection guarantees exclusivity of units of work with respect to a set of abstract object locations, if it is single-threaded or if the unit of work contexts of all threads are disjoint (their intersection is empty) and each thread has also exclusivity of units of work with respect to the same set.
\begin{definition}[Exclusivity of Units of Work] 
Given a set of abstract object locations $A$, we say that a thread collection $T$ guarantees \emph{exclusivity of units of work with respect to $A$}, written $T !^A$ if:
\begin{itemize}
    \item $T=(\mathcal{L},e)$, or
    \item $T=T_1 \mathop{||} T_2 \outputeqs{e}$ and $\lfloor{T_1}\rfloor \cap \lfloor{T_2}\rfloor \cap A = \emptyset$ and $T_1 !^A$ and $T_2 !^A$.
\end{itemize}
\end{definition}

Given a configuration, the set of abstract object locations that are qualified as atomic can be retrieved from the heap:
\begin{definition}[Atomic abstract object locations]
\[\dom_{\atomick}(H) = \{\iota ~|~ H(\iota) = (C,F,L) \textit{ and } \kw{fst}\circ\ootype^{-1}(C) = \atomick\}\]
\end{definition}

We can now define the property of a thread collection respecting exclusivity of atomic units of work as simply to require that exclusivity to be preserved by the reduction relation of the language.
\begin{definition}[Operationally Atomic Exclusive Threads Property]
\label{app:def-op-atom-excl}
A set $\mathcal{E}$ of thread collections is said to be a set of \emph{operationally atomic exclusive thread collections} if the following holds for any $T \in \mathcal{E}$:
\[\cfg{H}{V}{T} \rightarrow \cfg{H'}{V'}{T'} \textit{ and } T!^{\dom_{\atomick}(H)} \textit{ implies } T'!^{\dom_{\atomick}(H')} \textit{ and } T' \in \mathcal{E}\]
\end{definition}
We have that there exists a set of operationally atomic exclusive thread collections, such as the set of single threads. The union of a family of sets of operationally atomic exclusive thread collections is a set of operationally atomic exclusive thread collections. Therefore, there exists the largest set of operationally atomic exclusive thread collections. We say that a thread collection \emph{is operationally atomic exclusive} if it belongs to this set.

The above abstract semantic property provides a framework that can be made concrete by the definition of the unit of work regions in a given thread, as well as of an algorithm for injection of locks that respect the corresponding regions.
The soundness of this mechanism can then be proven with respect to a concrete property, rooted in the operational semantics of the language.%

}

\subsection{Atomic Types and Method Variants}
\label{sec:atomic_types_and_method_variants}

Atomic-qualified (\textit{atomic}, for short) and \regular types define two different type trees, and thus are not convertible into each other.
If a class $C$ implements an interface $I$, then the atomic type \icode{@Atomic C} (that we  denote by $\atomicS~C$) will implement interface
 $\atomicS~I$. 
Equivalently, if a class $C$ extends a class $D$
then $\atomicS~C$ will extend $\atomicS~D$. 
As a result, an object with atomic type cannot be cast into a \regular type, and, by this way, give rise to uncontrollable atomicity violations. %

Concerning {atomic} class types, a decision must be made about the atomicity of unqualified fields of the same type of the hosting class, 
such as field \texttt{next} from class \texttt{NodeOfAtomic} in Listing~\ref{lst:conclist}.
The  atomicity of such fields may be inherited from the class itself,  preserving the type equality between the field and the hosting class, or simply remain unaltered, breaking this equality.
Currently, we chose to preserve the equality, causing the  atomic class type to be  slightly different from the one of its \regular  counterpart. 
Ergo, in Listing~\ref{lst:conclist},  
the type of \texttt{head.next} is  \atomicS \texttt{NodeOfAtomic}, allowing for the \texttt{NodeOfAtomic} type to be used both in 
the implementation of both concurrent and regular lists of atomic objects.
As such behavior may not always be desired, %
one may consider inferring the atomicity of these fields. %

Contrarily to interfaces, the atomicity of method parameters and return types is not explicitly conveyed in method definitions. 
This allows for the use of the same  definition in multiple atomicity contexts:
the same method name defines different methods, with different signatures and behaviors, depending of the atomic nature of its parameters.
Thus, the atomicity of parameters types must be matched, for each method call, with the atomicity of the types of the values assigned to them.
The atomicity of the types of local variables and of the return type of a method may depend on the atomicity of the types of the values that have been (or will be) assigned to a field or a parameter.
A method may hence have several different signatures, what we refer to as \textit{methods variants}, corresponding to combinations of
atomicity qualifiers
 for the object itself
(\thisk, because, for a class $D$, instances of both  $D$ and $\atomicS\ D$ may be created), 
the method's parameters and return type.
For each variant, the atomicity of  all non-qualified local variables is inferred during the  compilation process, having as base the atomicity of fields and the given atomicity combination (more details in \S\ref{sec:variants}).
Accordingly, some variants may not be type-safe.
For instance, the  validity of the variants of
method \texttt{add} from Listing~\ref{lst:conclist} depends on the {atomic}-qualification of the \texttt{element} parameter.
If the parameter is \regular it cannot be
assigned to field \texttt{value} that is of an atomic type.
 It is thus the parameter that defines the validity of the variant in this particular case. 
This is not always the case:
in method \icode{containsAll}, all variants are valid because the method does not impose any atomicity constraints on either \thisk, the parameter and the return value.

\subsection{Leveraging Type Parameters}
\label{sec:type-parameters}
Method variants provide the framework for writing atomicity-agnostic methods. To take the extra step and remove most (if not all) \rccc annotations from class implementations,
we leverage on \emph{type parameters}.
The insight is to encode the atomicity of the type as an attribute of the type parameter.
Looking again at the \icode{BaseList} class of Listing~\ref{lst:baselist},
the atomicity of type parameter \icode{N} will determine if the list is of atomic or \regular nodes. In turn, the atomicity of type parameter \texttt{T} will determine if the list stores atomic or \regular values of type \icode{T}.

The Java implementation %
 supports the {atomic}-qualification of  type parameters  at type definition (\icode{class C<@Atomic T> \{...\}}), as well as at type usage, such as on subtyping, variable declaration or instance creation (\icode{new C<@Atomic Integer>(...)}). 
For a given object, each of its type parameters is atomic if it is {atomic}-qualified at either of the alternatives. As with the atomicity of the class types themselves, the  atomic-qualification of type parameters only impacts directly on the atomicity of  fields. The one of the parameters, return values and local variables is inferred as before.

There are cases where atomicity encompasses several fields of unrelated types. For example, in an atomic data structure based on an array, the array, as well as the accounting of the number of elements in the array, must be atomic.
To avoid having to parameterize the class in several type parameters that simply convey the same atomicity information (or to have type parameters of ``kind'' atomicity), we, once again, make use of the \icode{@AtomicityOf} annotation
to express that a field must have the same atomicity of another field or type parameter. 
In the next example:
\begin{lstlisting}[basicstyle=\ttfamily\scriptsize]
class ArrayList<T, A> {
   @AtomicityOf("A") int count;
   T @AtomicityOf("A") [] array;
   ...
}  
\end{lstlisting}
fields \icode{count} and \icode{array}
have the atomicity of  type parameter \icode{A}, whose existence has the sole purpose of defining the atomicity of such fields.  
Java does not support  annotations as type arguments.
Accordingly, 
to create an instance of the class to store strings in an atomic array, one writes\linebreak \icode{new ArrayList<String, @Atomic Object>()} or simply  
\icode{new ArrayList<String, Atomic>()} without 
the \icode{@} prefix because we want to denote the type of the \icode{Atomic} annotation rather than the annotation itself. 
To create an instance of the class to store strings in an non-atomic array, we may use any (non-atomic-qualified) type other than \icode{Atomic}. 
For the sake of code readability we defined the \icode{NotAtomic} type: 
 \icode{new ArrayList<String, NotAtomic>()}.

\section{Programming with \rccc}
\label{sec:programming}

Programming with \rccc requires a specification that includes annotating interfaces and class fields.
We discuss herein the impact of such annotations. 

\textbf{Interface Annotation.}
The annotation of interfaces
augments the original specification 
with new method signatures. %
By indicating the combinations of the types'
atomicities, we only generate signatures that
have to be fulfilled by the classes that implement the interface.

The use of type parameters abstracts away most of the \icode{@Atomic} annotations, but the programmer still has to place the \icode{@All} and \icode{@AtomicityOf} annotations whenever, respectively, both atomic and \regular types are to be supported in a parameter or return type, and/or there is a need to relate the atomicities of multiple types in a signature.

\begin{table}
\centering
\caption{Valid variants and units of work resulting from different placements of  annotation \icode{@Atomic} in the subtyping of \icode{BaseList<T, N>} in Listing~\ref{lst:baselist}. For the sake of conciseness, valid variants are represented by tuples of four elements (method\_identifier, atomicity qualifiers of \thisk, atomicity qualifiers of  the parameter, atomicity qualifiers of the return value), where the atomicity qualifiers are represented by \texttt{a}, for \atomicS, and \texttt{n}, for \natomicS.}
\label{tab:atomic}
\vspace{-8pt}
\resizebox{\linewidth}{!}{
\begin{tabular}{ | l | l | p{4.8cm} | p{4cm} | }
\hline
 Type & Variables with atomic type & Valid variants  & Units of work \\
 \hline
 \hline
 \icode{SeqList<T>} 
 & 
 	--  
&  
  \multirow{2}{=}{(\icode{add}, n, n, \{a, n\}),   (\icode{get}, n, \{a, n\}, n) and 
   (\icode{containsAll}, n, \{a, n\},  \{a, n\}) }
&    
    \icode{containsAll} when \icode{other} is atomic or has atomic fields \\
     \cline{1-2}   \cline{4-4}     
  \icode{ConcurrentList<T>}  & 
		\icode{head} and \icode{tail}   
  &  &  
 \icode{add}, \icode{get} and  \icode{containsAll} \\
 \hline
  \icode{SeqList<@Atomic T>} 
  &
  \icode{Node.value}
   & 
   \multirow{2}{=}{(\icode{add}, n, a, \{a, n\}),   (\icode{get}, n, \{a, n\}, a) and 
   (\icode{containsAll}, n, \{a, n\},  \{a, n\}) }
   & 
       \icode{containsAll} when \icode{other} is atomic or has atomic fields    \\
      \cline{1-2}     \cline{4-4} 
    \icode{ConcurrentList<@Atomic T>}  
    &
     \icode{head}, \icode{tail} and \icode{Node.value} & &   \icode{add}, \icode{get} and  \icode{containsAll}   \\
 \hline 
\icode{@Atomic SeqList<T>} 
&
instance of the class (\thisk)
&
\multirow{2}{=}{(\icode{add}, a, n, \{a, n\}),   (\icode{get}, a, \{a, n\}, n) and 
   (\icode{containsAll}, a, \{a, n\},  \{a, n\}) }
&
\multirow{4}{=}{\icode{add}, \icode{get} and \icode{containsAll}} 
\\
      \cline{1-2}     
\icode{@Atomic ConcurrentList<T>} 
&
\thisk, \icode{head} and \icode{tail}  
&
&
\\
\cline{1-3}   
\icode{@Atomic SeqList<@Atomic T>}
&
\thisk and \icode{Node.value}
&
\multirow{2}{=}{(\icode{add}, a, a, \{a, n\}),   (\icode{get}, a, \{a, n\}, a) and 
   (\icode{containsAll}, a, \{a, n\},  \{a, n\}) }
&
\\
      \cline{1-2}     
\icode{@Atomic ConcurrentList<@Atomic T>} 
&
\thisk, \icode{head}, \icode{tail} 
&
&
\\
& and \icode{Node.value}  & & \\
 \hline 
\end{tabular}
}
\end{table}

\textbf{Class Implementation.}
Programming classes with \rccc
requires reasoning about which  memory objects are prone to data races
and {atomic}-qualify the ones on  which %
all accesses must be done within units of work.
This is usually done by qualifying fields but,  may also require qualifying local variables whenever these are shared by multiple threads, as may happen in Java.

The placement of the \icode{@Atomic} annotation 
allows for a more coarse or fine grained concurrency control.
In  our running list example, the \icode{@Atomic} annotation
of fields \icode{head} and \icode{tail} 
depends on the annotation of the type parameters.
The use of type parameters renders the code \rccc-annotation-free, but that does not mean that the programmer does not have to reason about which fields need to be atomic in a concurrent setting. 
It just means that the
concurrency control may be injected through subtyping, as demonstrated in Listing~\ref{lst:conclist}. The \icode{@Atomic} field annotations are replaced by the use of the type parameter.

Table~\ref{tab:atomic} showcases the impact of  \rccc-related  subtyping   by using the derived \icode{SeqList}
and \icode{ConcurrentList} types of Listing~\ref{lst:conclist}.
The use of type \icode{ConcurrentList} (cf. second row), that
{atomic}-qualifies fields \icode{head} and \icode{tail},
ensures that the list is one of atomic nodes;   
so, every access to these fields must 
ensure the atomicity of the list's operations.
Therefore, units of work emerge in methods \icode{add}, \icode{get} and \icode{containsAll}.
In the latter, the enclosed unit of work also encompasses the accesses to the atomic fields of the \icode{other} list.
The qualification of the fields has no impact on the variant validity. 
Note, however, that qualifying just one of the fields would preclude the existence of valid variants, since there would not exist a combination of atomicities that would guarantee that the type of \icode{head} and \icode{tail} are the same.
  
The {atomic}-qualification of \icode{Node}'s \icode{value} field, by qualifying the class' type parameter (\icode{T}) (cf. third and fourth rows), 
makes the list one that stores objects of atomic type.
Comparing with the remainder alternatives, we observe that the qualification has no impact on the units of work.
This happens  because the methods do not access the state of the objects they store nor modify the contents of field \icode{value}. 
The units of work will appear in methods
that access elements stored in the list.
In the table, the set of  valid variants of 
the classes with \icode{@Atomic T} is naturally adjusted to receive objects of atomic type.

Instead of {atomic}-qualifying fields of a class ($C$)'s implementation, one can simply create an object of type \atomicS $C$ (cf. rows 5 to 8).
Applying this approach to the list implementations produces types on which every access to the list falls within a unit of work.
The resulting units of work are the same as \icode{ConcurrentList} but their boundaries include the whole code of each method variant rather than
only the list of instructions that encloses all accesses to atomic values. 
Regarding variant validity, the impact is confined to the restriction of the valid variants to the ones  on which \thisk is atomic, instead of being of a \regular type.
The impact would be larger if any of the methods related one of the their parameters with a field of the class' type, as the atomicity of the parameter would have to match the one of the class.

\section{Static Analysis and Synthesis}
\label{sec:variants}
In this section, we describe the static analysis performed by \rccc to infer all atomic-qualifiers of the program's types, and to consistently synthesise the code for the valid variants of every method. 
To formally prove the soundness of the
analysis
we build on \oolong~\cite{oolong-ACR2019}, an object-oriented multi-threaded programming calculus. It is a principled language 
with minimal syntax, a formal static and dynamic semantics, and a computer-aided proof of type soundness (progress and type preservation for well-typed programs) developed in Coq and publicly available (it is thus re-usable, and the mechanisation of our results in Coq build on it).
Concretely, we developed a variant of \oolong, that we refer to as \rccc-\oolong, that is a simplified version of \rccc-Java.
Crucially, the operational semantics of the language is exactly the same of \oolong, and we thus omit its presentation here.%
\footnote{The interested reader can check Section 4, Figures 8 and 9, of \oolong main reference~\cite{oolong-ACR2019}.}

\subsection{\rccc-\oolong}
\label{sec:oolong}

\begin{table}[tp]
\caption{Syntax of \rccc-\oolong programs.  %
}
\label{tab:syntaxpos}
\centering
\resizebox{\linewidth}{!}{
\begin{tabular}{lcl}
  \hline
  $P$ & $::=$ & \Ids~\Cds~$e$ \hfill \textit{(Programs)}\\
  \Id & $::=$ & \kw{interface} $I$ \SB \IMsigs \FB  $~|$ \kw{interface} $I$ \kw{extends} $I_1, I_2$ \hfill \textit{(Interfaces)} \\
     \IMsig & $::=$ & \Msig $~|~$ 
     \Msig\ [\Sas] \hfill \textit{(Interface Signatures)} \\
     \Sa  & $::=$ &  $\atan{q_1}{q_2}$ \hfill  \textit{(Signature Annotations)}  \\
  \Cd & $::=$ & \kw{class} $C$ \kw{implements} $I$ \SB \Fds~\Mds \FB \hfill \textit{(Classes)}\\
  \Fd & $::=$ & $f : t$ $~|~$ $f :\ $\atomicS$ t $ \hfill \textit{(Fields)} \\
  \Md & $::=$ & \kw{def} \Msig \SB$e$\FB \hfill \textit{(Methods)} \\
    \Msig & $::=$ & $m(x : t_1) : t_2$ \hfill \textit{(Signatures)} \\
  $e$ & $::=$ & $v$
        $~|~$ $x$
        $~|~$ $x.f$
        $~|~$ $x.f$ \kc{=} $e$
        $~|~$ $x.m(e)$ 
        $~|~$  \kw{let} $x$ \kc{=} $e_1$ \kw{in} $e_2$
        $~|~$ \kw{new} $C$         
\qquad \qquad \hfill \textit{(Expressions)} \\
 & $~|~$ & \kw{new} \atomicS $C$
        $~|~$ $(t)~e$ 
        $~|~$   
      \kw{finish}\SB \kw{async}\SB$e_1$\FB~\kw{async}\SB$e_2$\FB \FB\kc{;} $e_3$ \\
  $v$ & $::=$ & \kw{null}
\hfill \textit{(Values)} \\
    $t$ & $::=$ & $C$ $~|~$ $I$ $~|~$ $\mathbf{Unit}$ \hfill \textit{(Base Types)} \\
             $q$ & $::=$ & \atomicS $|$ \natomicS \hfill \textit{(Atomicity Qualifiers)} \\
  \hline
\end{tabular}
}
\end{table}

\rccc-\oolong extends \oolong to support the declaration of atomic fields and the instantiation of atomic objects, via the \atomicS keyword, and signature annotations, which become the only constructs to express concurrency restrictions. Accordingly, all lock related operations of \oolong are removed.

Table~\ref{tab:syntaxpos} presents the grammar of the language. It uses the following meta-identifiers: 
$C \in \classN$ ranges over class names, $I \in \interfaceN$ over interface names,
$f \in \fieldN$ over field names, $m \in \methodN$ over method names, $x$ and $y \in \varN$ over local variables. %
Programs consist of (lists of) %
interfaces $Ids$ and classes $Cds$, and a starting expression $e$ (that we refer to as \textsf{main}). The remaining expressions are standard, or have a restricted use for the sake of simplification, as in \oolong.
Types are class or interface names, or \textbf{Unit} (used as the type of assignments). Types are assigned to fields, method return values and method parameters. The typing environment maps variables to types and abstract locations to class types.

The list of atomicity combinations   that must be supported in the implementations of each interface method is here
 expressed as signature annotations, which
are sequences of elements $\atan{q_1}{q_2}$, where $q_1$ denotes the atomicity of the parameter and $q_2$ the atomicity of the return type.
As in \rccc-Java, the programmer may also  atomic-qualify the type in class instantiation
to address the case of objects that are assigned to a local variable and then shared among threads. 
This happens in the following example, for the new \icode{BaseList_na}:
\begin{lstlisting}[ numbers=none]
let x = new atomic BaseList_na in let y = x in 
 finish { async { x.add(new Object) } async { y.add(new Object) }} ; null
\end{lstlisting}

Listing~\ref{lst:oolist} presents the \rccc-\oolong version 
of the generated \icode{BaseList_na} specialisation of 
class \icode{BaseList<T, N>}.
For the sake of code readability, in the example we use the sequential composition operator ('\textbf{;}', defined in~\cite{oolong-ACR2019}) to 
omit some of the \letk expressions. We also assume the existence of the \ifk instruction, equality (denoted by ==), and classes \icode{Boolean} and \icode{Integer}.
The code is very similar to what would be a Java version of a concurrent list of \regular objects.
The differences are mainly in interface \icode{List}.
Given that expressions of type \kw{Unit} may be assigned to variables, 
method \icode{add} must support the return of both \atomicS and \natomicS values. Also, as there are no primitive values, method \icode{get} receives an object that may be either \atomicS or \natomicS.

\begin{table}[t]
\begin{lstlisting}[multicols=2, escapechar=|]
interface Object {}

interface List_n {
 add(element : Object) : Unit [$\atan{\natomicS}{\natomicS}$, $\atan{\natomicS}{\atomicS}$]
 get(pos : Integer) : Object [$\atan{\natomicS}{\natomicS}$, $\atan{\atomicS}{\natomicS}$]
 equals(other : List) : Boolean [$\atan{\natomicS}{\natomicS}$, $\atan{\natomicS}{\atomicS}$, $\atan{\atomicS}{\natomicS}$, $\atan{\atomicS}{\atomicS}$]
} 

class Node_n implements Object { 
 value : Object
 next : Node 
 prev : Node 
}

class BaseList_na implements List_n{
 head : atomic Node_n |\label{line:atomichead}|
 tail : atomic Node_n |\label{line:atomictail}|
     
 def add(element : Object) : Unit {
  let node = new Node_n in 
    node.value = element; 
    node.next = null; 
    if (this.head == null) {
     node.prev = null; 
     this.head = node;  
    } else {
     node.prev = tail; 
     tail.next = node;
    } ; this.tail = node
 }
    
 def get(pos : Integer) : Object { ... }    
 def equals(other : List) : Boolean { ... }
}
\end{lstlisting}
\vspace{6pt}
\hrule
\captionof{lstlisting}{\rccc-\oolong version of 
generated class \icode{BaseList_na} and of its dependencies.}
\label{lst:oolist}
\end{table}

\subsection{\rccc Analysis}

The processing of an \rccc-\oolong program
$P_\textrm{orig}$, written in the language defined in Table~\ref{tab:syntaxpos} (\oolong without locks and with signature and atomicity annotations), %
produces an \oolong program (still without locks) with 
safe (\ie type-safe and preventing atomicity violations) code that makes the atomicity qualification of every type in $P_\textrm{orig}$ explicit.
For each type $t$ in $P_\textrm{orig}$, the generated code includes the \regular and atomic versions of $t$.
Moreover, the class types have their original methods unfolded into the corresponding valid variants, and all method calls are explicitly resolved into a valid variant.
The process consists of four stages, as illustrated in Fig.~\ref{fig:stages}.
The first starts by checking if, \rccc annotations aside, the input program is a well-typed \oolong program, and produces a program decorated with local types ($\preprocess$). It may fail if an \oolong typing error occurs.
The second stage computes the valid variants for each method of the program and, for each of these, the atomicities of  its local variables.
This stage may also fail if any of the variants required for the program to execute is not valid. When that %
happens, it returns an empty set 
causing the entire analysis to fail (\func{validVariants}).
Then, the third stage checks if the set of valid variants of each class includes all the signatures defined in the class' \rccc-annotated interface ($\consistent$).
Naturally, it may also  fail if the condition is not met. The first three stages guarantee the correct typing of the program, producing what we call \textit{the solution} ($\sol$) that is used by the fourth stage to generate the final \oolong program ($\vi{P}$).
Stage~1 checks if, \rccc annotations aside, the input program is a well-typed \oolong program, and may fail if an \oolong typing error occurs.
Stage~2 computes the valid variants for each method of the program and, for each of these, the atomicities of  its local variables.
The stage may also fail if any of variants required for the program to execute is not valid. When that %
happens, \func{validVariants} returns an empty set, causing the entire analysis to fail.
Lastly, Stage~3 checks if the set of valid variants of each class include all the signatures defined in the class' \rccc-annotated interface.
Naturally, it may also  fail if the condition is not met.

The first three stages guarantee the correct typing of the program, producing what we call \textit{the solution} ($\sol$) that is used by Stage~4 to generate the final \oolong program.
The complete analysis of an original program is a partial function, denoted $\func{\rccc{}Analysis}$, and defined as follows:
\begin{definition}[\rccc Analysis]\label{def:rc3-analysis}
Consider an original program $P_\textrm{orig}$, that is an annotated \rccc-\oolong program written in the language of Table~\ref{tab:syntaxpos}. If
\begin{description}
    \item[Stage 1:]  $\preprocess(P_\textrm{orig}) = P$
	\item[Stage 2:] $\func{validVariants}(P) = \sol \neq \emptyset$
	\item[Stage 3:] $\consistent(\sol, P) = \truek$
    \item[Stage 4:] $\vi{P}(\sol, P) = P^+$
\end{description}
then $\func{\rccc{}Analysis}(P_\textrm{orig}) = (P,\sol, P^+)$.
\end{definition} 
The final program is guaranteed to be well-typed and behaviourally equivalent to the program resulting of Stage 1 (see \S\ref{sec:soundness}).
The remainder of this section %
 explains each of the 4 stages.

\subsubsection{Stage 1 - The \oolong Type System}

The static analysis is only applied to programs that, \rccc annotations aside, are well-typed \oolong programs.
This property is guaranteed by a pre-processing of the original program: 
\[\footnotesize \preprocess(P_\textrm{orig}) = 
\begin{cases}
\func{dress}(P_\textrm{orig}, \func{\oolong{}TS}(\func{strip}(P_\textrm{orig}))) & \text{if } 
\func{\oolong{}TS}(\func{strip}(P_\textrm{orig})) \text{ succeeds}   \\
 \bot & \text{otherwise.}
\end{cases}
\]
The first step is to $\func{strip}$\footnote{Function defined in 
Appendix~\ref{app:aux}.
\label{note1}} the code from all \atomicS and interface signature annotations.
The resulting program is then submitted to \func{\oolong{}TS}, an instrumented version of the original \oolong type system~\cite{oolong-ACR2019}:
the original verification includes program well-formedness and the matching of the types of the various components of the program; %
the added instrumentation %
incorporates in the type system rules  %
a code %
decoration process that 
assigns types to local variables.
So, in \rccc-\oolong, the syntax of the \letk construct is actually %
\kw{let}~$x:t$~\kc{=}~$e_1$ \kw{in} $e_2$.
This decoration has no impact on the semantics of the \kw{let} instruction -- it can be trivially shown that \oolong's \emph{Progress and Preservation}  results still hold for the decorated language.
The resulting program is lastly \textit{redressed} with the original \rccc annotations to produce the program that will be passed to the subsequent stages of the analysis.%
\footnotemark [\value{footnote}]

\subsubsection{Stage 2  - Valid Method Variants}
\label{sec:valid_variants}
\begin{table}[t]
\caption{Atomicity Inference Output
}
\label{tab:eqsystem}
\begin{footnotesize}
\hrule
$\begin{array}{r c l l}
\nu \in \natval & = & \atomick ~|~  \natomick \\
\natx, \natv{y}  \in \natN && \\ %
\omega \in \valVal & = & \validk ~|~  \invalidk \\
\mu \in \FQmethodN  & ::= & \nu_1.t.m.\nu_2.\nu_3 \\ %
\sol \in \NCSSolution & = & (\natN \rightarrow \natval) \cup (\FQmethodN \rightarrow \valVal) \\
\alpha \in \natV & ::= &\nu ~|~
               \natx ~|~ 
               \fqv{\check x}{\mu}   \\
\eta \in \solexpr & ::= &  \ateq{\alpha_1}{\alpha_2} ~|~ (\natv{\mu},  \omega) ~|~   \eta_1 \vee \eta_2 ~|~ \eta_1 \wedge \eta_2 ~|~ ( \natv{\mu},  \omega)\rightarrow \eta  ~|~ (\eta) \\
\nes \in \NatSys &  =  & \wp(\solexpr) \\
\FQmethodV & ::= & \natx_1.t.m.\natx_2.\natx_3 ~|~ \natx_1.t.m.\natx_2.\natx_3@\mu\\
\mns \in \textsf{Method Call Systems} & = & \wp(\FQmethodV)\\
\mes \in \MetSys &  = & \FQmethodV \rightarrow \NatSys \times \textsf{Method Call Systems} \\
\natv{\mu}  \in \valVar &&
    \\[5pt]
\end{array}$
\end{footnotesize}
\hrule
\end{table}

The purpose of the second stage is to compute the valid variants of each method and, for each of these, obtain the atomicity of its local variables.
The process is defined in function \func{validVariants} that receives the program returned by function \preprocess~and outputs the \textit{solution} ($\sol \in \NCSSolution$, syntax is defined in Table~\ref{tab:eqsystem}) -- a map from variant identifiers to validity values and from atomicity variables to atomicity values:
\[\footnotesize \func{validVariants}(P)  
=  
\begin{cases}
\sol & \text{if } 
\solve(\nes) = (\truek, \sol)   \\
 \emptyset & \text{otherwise.}
\end{cases} \quad {\begin{array}{l}
\text{where:} ~~\vdash{P}\outputeqs{\mes} \text{ and }\\ ~~\inter(\mes) = \nes
\end{array}}
\]
The \func{validVariants} function %
is defined by the composition of three other (partial) functions. 
The first ($\vdash$) performs a type-based generation of atomicity-related information for each method in the input program.
The second (\inter) uses %
this information to build a constraint system with the restrictions that have to be met, in order for a given variant to be valid.
The constraint system is then passed to \solve\ that makes use of a SAT solver to obtain the solution for both variant validity and atomicity of local variables.
It returns a pair with the information of whether the given system has a solution, and if so, the solution itself; if there is no solution (the 'otherwise' case in the definition), \solve\ returns a pair with \falsek in the first position.

The first two steps are performed independently for each method and use only local information, allowing for separate compilation.
The last requires the information collected for the entire program and hence must be done during the linking stage.

\renewcommand{\rulename}[1]{[\textsc{#1}]}
\subsubsection{Type-Based Generation of Atomicity-Related Information}

The generated atomicity-related information is a map that, for each method of the source program, provides: 
   (a) the set of constraints imposed on the atomicity of the method's local variables; and 
   (b) the set of the method variants that the method will call in its execution.
As defined in Table~\ref{tab:eqsystem}, atomicity values $\nu \in \natV$, as used in the analysis, consist of  $\atomick$ (atomic) and $\natomick$ (non-atomic).
Many times it is not possible to infer an expression's atomicity information in a small step analysis, as the atomicity of local variables is inherited from the expression assigned to them, which may be a method parameter or the result of a method call. To refer to such atomicity in subsequent expressions we use atomicity variables $\check x, \natv y \in \natN$. %
The same holds for variants, which are identified by five components: 
atomicity of the object itself (\thisk), name of the class, name of the method, 
atomicity of the method's  parameter, and atomicity of the method's return value.
The latter two atomicities, in particular, are not always possible to infer  in a small step analysis.
Thus, to refer to method variants in method calls and to use them as keys to the generated atomicity-related information, we leverage on atomicity variables to set up variant variables of the form $\natx_1.t.m.\natx_2.\natx_3 \in \FQmethodV$.

\renewcommand{\ottdrulewfXXprogram}[1]{\ottdrule[#1]{%
\ottpremise{
\forall \, \ottnt{Cd} \, \!\!\in\! \, \ottnt{Cds}  \ottsym{.}  \vdash  \ottnt{Cd} \outputeqs{\mes_{\ottnt{Cd}}}
\quad
   \vdash_{\natx_3}	 \ottnt{e} \outputeqs  (\nes,\mns)
\quad \natx_1,  \natx_2,  \natx_3\ \fresh   
   }}%
{
\vdash  \ottnt{Ids} ~ \ottnt{Cds} ~ \ottnt{e}  %
\outputeqs{
\bigcup_{\ottnt{Cd} \in \ottnt{Cds}}.\mes_{\ottnt{Cd}} \cup \{  \natx_1.\kw{Unit}.\texttt{main}.\natx_2.\natx_3 \mapsto (\nes,\mns)\} }}{%
{\ottdrulename{ai\_program}}{}%
}}

\newcommand{\ottdrulewfXXinterfaces}[1]{\ottdrule[#1]{%
}{
	\vdash  \kwool{interface} \, \ottmv{I}  \ldots 
\outputeqs{  \{ I.m.\nu_1.\nu_2 \mapsto \emptyset\  |\ \forall\ m.\nu_1.\nu_2 : t_1 \mapsto t_2\ \in  \func{msigs}(I) \} }}{
{\ottdrulename{ai\_interfaces}}{}%
}}

\renewcommand{\ottdrulewfXXinterface}[1]{\ottdrule[#1]{%
	\ottpremise{\forall \, \ottnt{Msig} \, \!\!\in\! \, \mathit{Msigs}  \ottsym{.}\   \kwool{this}  \ottsym{:}  \ottmv{I}  \vdash  \ottnt{Msig} \outputeqs{\ottnt{\mes}_{\ottnt{Msig}}}}%
}{
	\vdash  \kwool{interface} \, \ottmv{I}  \,\startblock\,  \ottnt{Msigs}  \,\finishblock\, 
	\outputeqs{\bigcup_{\ottnt{Msig} \in \ottnt{Msigs}} \ottnt{\mes}_{\ottnt{Msig}}}}{
{\ottdrulename{ai\_interface}}{}%
}}

\newcommand{\ottdrulewfXXinterfaceextends}[1]{\ottdrule[#1]
{%
   \ottpremise{
   \vdash  \kwool{interface} \, \ottmv{I_i} \ldots \outputeqs{\mes_i}
   \quad
   \ottnt{\mes'_i}=\ottnt{\mes_i}[I_i\mapsto I]
   }%
}
	{
	\vdash  \kwool{interface} \, \ottmv{I}\  \kwool{extends} \, \ottmv{I_1}, \ottmv{I_2}
	\outputeqs{\ottnt{\mes'_1} \cup \ottnt{\mes'_2}}}{
{\ottdrulename{ai\_iextends}}{}%
}}

\newcommand{\ottdefnwfXXinterface}[1]{\begin{ottdefnblock}[#1]{$\vdash  \ottnt{Id} \outputeqs{\mes} $}{}
\ottusedrule{\ottdrulewfXXinterfaces{}}
\end{ottdefnblock}}

\newcommand{\ottdrulewfXXmsig}[1]{\ottdrule[#1]{%
  }{
 \thisk : I \vdash  \ottmv{m}.\nu_1.\nu_2  \ottsym{(}  \ottmv{x}  \ottsym{:}  \ottnt{t_1}  \ottsym{)}  \ottsym{:}  \ottnt{t_2}\, 
\outputeqs{  \{I.m.\nu_1.\nu_2(x : t_1) : t_2) \mapsto \emptyset \}  }
}{%
{\ottdrulename{ai\_msig}}{}%
}
}

\newcommand{\ottdefnwfXXmsig}[1]{\begin{ottdefnblock}[#1]{$\vdash  \ottnt{Msig} \outputeqs{\mes} $}{}
\ottusedrule{\ottdrulewfXXmsig{}}
\end{ottdefnblock}}

\renewcommand{\ottdrulewfXXclass}[1]{\ottdrule[#1]{%
\ottpremise{\forall \, \ottnt{Md} \, \!\!\in\! \, \mathit{Mds}  \ottsym{.}\   \thisk : \natv x.C  \vdash \ottnt{Md} \outputeqs{\ottnt{\mes}_{\ottnt{Md}}
\quad
\natv x\ = \natvar(\thisk) %
}
}%
  }{
\vdash  \kwool{class} \, \ottmv{C} \, \kwool{implements} \, \ottmv{I}  \,\startblock\,  \ottnt{Fds} \, \mathit{Mds}  \,\finishblock\, %
\outputeqs{ \bigcup_{\ottnt{Md} \in \ottnt{Mds}} {\ottnt{\mes}_{\ottnt{Md}}}}}{
{\ottdrulename{ai\_class}}{}%
}}

\renewcommand{\ottdefnwfXXClass}[1]{\begin{ottdefnblock}[#1]{$\vdash  \ottnt{Cd} \outputeqs{\mes} $}{}
\ottusedrule{\ottdrulewfXXclass{}}
\end{ottdefnblock}}

\renewcommand{\ottdrulewfXXfield}[1]{\ottdrule[#1]{%
\ottpremise{}}{
\kwool{this}  \ottsym{:}  \ottmv{C} \vdash  \ottmv{f}  \ottsym{:}  \ottnt{t} \outputeqs{\{(C.f , \natureof(t))\}}}%
{\ottdrulename{ai\_field}}{}%
}

\renewcommand{\ottdefnwfXXField}[1]{\begin{ottdefnblock}[#1]{$\vdash  \ottnt{Fd} \outputeqs{(\mes, \mns)} $}{}
\ottusedrule{\ottdrulewfXXfield{}}
\end{ottdefnblock}}

\renewcommand{\ottdrulewfXXmethod}[1]{\ottdrule[#1]{%
	\thisk  \ottsym{:}  \ottmv{\natx_1.C},
	\ottmv{x}  \ottsym{:}  \ottnt{\natv x_2.t_1}   \vdash_{\natv x_3}  \ottnt{e}\  \outputeqs{(\nes,\mns)} 
	\quad
	 \natx_2 = \natvar(x) 
	 \quad
 	\natv x_3\ \fresh
 }{
\thisk  \ottsym{:}  \ottmv{\natx_1.C} \vdash  \kwool{def} \, m  \ottsym{(}  \ottmv{x}  \ottsym{:}  \ottnt{t_1}  \ottsym{)}  \ottsym{:}  \ottnt{t_2}  \,\startblock\,  \ottnt{e}  \,\finishblock\,  \outputeqs{  \{ \natx_1.C.m.\natv x_2.\natv x_3 \mapsto (\nes,\mns)\} } 
 }{%
{\ottdrulename{ai\_method}}{}%
}
}

\renewcommand{\ottdefnwfXXMethod}[1]{\begin{ottdefnblock}[#1]{$\Upsilon \vdash \ottnt{Md} \outputeqs{\mes} $}{}
\ottusedrule{\ottdrulewfXXmethod{}}
\end{ottdefnblock}}

\renewcommand{\ottdefnwfXXProgram}[1]{\begin{ottdefnblock}[#1]{$\vdash  \ottnt{P}   \outputeqs{\mes}$}{}
\ottusedrule{\ottdrulewfXXprogram{}}
\end{ottdefnblock}}

\renewcommand{\ottdrulewfXXvar}[1]{\ottdrule[#1]{%
\ottpremise{\Upsilon(x)=\natx.t}}{
 \Upsilon \vdashel \ottmv{x}\ \outputeqs{(\{ \ateq{\natx}{\natlhs}\}, \emptyset)}}%
{\ottdrulename{ai\_var}}{}%
}

\renewcommand{\ottdrulewfXXlet}[1]{
\ottdrule[#1]{%
  \natv x = \natvar(x) 
    \quad
  \Upsilon \vdashx{\natv x} \ottnt{e_{{\mathrm{1}}}} \outputeqs{(\nes_1, \mns_1)} 
  \quad
  \Upsilon \ottsym{,} \ottmv{x} \ottsym{:} \ottnt{\natv x.t}  \vdashe \ottnt{e_2} \outputeqs{(\nes_2, \mns_2)} 
}{
  \Upsilon \vdashe  \kwool{let}\,  \ottmv{x : t} = \ottnt{e_{{\mathrm{1}}}} ~\kwool{in}~ \ottnt{e_2}\ \outputeqs{(\nes_1 \cup \nes_2, \mns_1 \cup \mns_2)}
}{%
{\ottdrulename{ai\_let}}{}%
}
}

\renewcommand{\ottdrulewfXXcall}[1]{
\ottdrule[#1]{%
	\Upsilon(x)=\natx_1.t
	\quad
	\natv x_2 = \nextvar 
	\quad 
	\natv x_3 = \nextvar
	\quad
    \Upsilon \vdashx{\natx_2} \ottnt{e} \outputeqs{(\nes, \mns)}
}{
\Upsilon  \vdashe  x.m(e)  \outputeqs{(\nes \cup  %
	\{ \ateq{\natv x_3}{\natlhs} \}, 
	\mns \cup \{{\natx_1.t.m.\natv x_2.\natv x_3 }\}) }}{%
{\ottdrulename{ai\_call}}{}%
}
}

\renewcommand{\ottdrulewfXXcast}[1]{\ottdrule[#1]{%
	\Upsilon \vdashe \ottnt{e}\ \outputeqs{(\nes,\mns)}%
	\quad
	\natx = \nextvar 
}{
\Upsilon \vdashe \ottsym{(} \ottnt{t} \ottsym{)}\ \ottnt{e}\ \outputeqs{(\nes \cup \{ \ateq{\natx}{\natlhs} \},\mns)}}{%
{}{\ottdrulename{ai\_cast}}%
}}

\renewcommand{\ottdrulewfXXselect}[1]{\ottdrule[#1]{%
	\Upsilon(x)=\natx.C \quad
}{
\Upsilon \vdashe \ottmv{x} \ottsym{.} \ottmv{f} \outputeqs{(\{
(\ateq{\natx}{\atomick} \wedge \ateq{\natlhs}{\func{fatom}(C, \atomick, f)}) \vee (\ateq{\natx}{\natomick} \wedge \ateq{\natlhs}{\func{fatom}(C, \natomick, f)})
\},\emptyset)}}{%
{}{\ottdrulename{ai\_select}}%
}}

\renewcommand{\ottdrulewfXXloc}[1]{\ottdrule[#1]{%
\ottpremise{\vdash \Upsilon}%
\ottpremise{\Upsilon \ottsym{(} \iota \ottsym{)} \ottsym{=} \ottnt{C}}%
\ottpremise{ \regularV(\ottnt{C}) <: \ottnt{t} }%
}{
\Upsilon \vdash \iota \ottsym{:} \ottnt{(t,\natureof(C))} \outputeqs{(\emptyset,\emptyset)}}{%
{}{\ottdrulename{ai\_loc}}%
}}

\renewcommand{\ottdrulewfXXnull}[1]{\ottdrule[#1]{%
}{
\Upsilon \vdashe \kwool{null} \outputeqs{(
\{ \ateq{\natv y}{\natv y} \}, \emptyset)}}{%
{}{\ottdrulename{ai\_null}}%
}
}

\renewcommand{\ottdrulewfXXupdate}[1]{\ottdrule[#1]{%
\ottpremise{\Upsilon(x)=\natx.C
\quad
\natx_1\ \fresh
\quad
\Upsilon \vdashx{\natx_1} \ottnt{e}\ \outputeqs (\nes,\mns) \ottlinebreakhack 
}%
}{
\Upsilon \vdashe \ottmv{x} \ottsym{.} \ottmv{f} \ottsym{=} \ottnt{e}\ 
  \outputeqs{
  (\nes \cup \{ \ateq{\natlhs}{\natlhs}, 
(\ateq{\natx}{\atomick} \wedge \ateq{\natx_1}{\func{fatom}(C, \atomick, f)}) \vee (\ateq{\natx}{\natomick} \wedge \ateq{\natx_1}{\func{fatom}(C, \natomick, f)}) \}, \mns)}}{%
{}{\ottdrulename{ai\_update}}%
}}

\renewcommand{\ottdrulewfXXnew}[1]{\ottdrule[#1]{%
\natx = \nextvar
}{
\Upsilon \vdashe \kwool{new} \, \ottmv{C}\ \outputeqs{(\{ \ateq{\natx}{\natlhs} \},  \emptyset)}}{%
{}{\ottdrulename{ai\_new}}%
}}

\newcommand{\ottdrulewfXXnewat}[1]{\ottdrule[#1]{%
}{
\Upsilon \vdashe \kwool{new}\, \atomicS \, \ottmv{C}\ \outputeqs{(\{ \ateq{\natlhs}{\atomick} \},  \emptyset)}}{%
{}{\ottdrulename{ai\_newat}}%
}}

\renewcommand{\ottdrulewfXXfj}[1]{\ottdrule[#1]{%
\ottpremise{
\Upsilon \vdashx{\natx_1} \ottnt{e_{{\mathrm{1}}}}\ \outputeqs{(\nes_{{\mathrm{1}}},\mns_{{\mathrm{1}}}})
\quad
\Upsilon \vdashx{\natx_2}  \ottnt{e_{{\mathrm{2}}}}\ \outputeqs{(\nes_{{\mathrm{2}}},\mns_{{\mathrm{2}}}})
	\natx_1,
	\natx_2\ \fresh
\quad
\Upsilon \vdashe \ottnt{e}\ \outputeqs{(\nes,\mns)}%
	}
}{
\Upsilon \vdashe \kwool{finish} \, \,\startblock\, \, \kwool{async} \, \,\startblock\, \ottnt{e_{{\mathrm{1}}}} \,\finishblock\, \, \kwool{async} \, \,\startblock\, \ottnt{e_{{\mathrm{2}}}} \,\finishblock\, \,\finishblock\, \ottsym{;} \ottnt{e} 
\outputeqs{(\nes_{{\mathrm{1}}} \cup \nes_{{\mathrm{2}}} \cup \nes,\mns_{{\mathrm{1}}} \cup \mns_{{\mathrm{2}}} \cup \mns)}}{%
{\ottdrulename{ai\_fj}}{}%
}}

\renewcommand{\ottdrulewfXXlock}[1]{\ottdrule[#1]{%
\ottpremise{\Upsilon \vdash \ottmv{x} \ottsym{:} \ottnt{(t_{{\mathrm{2}}},\natlhs_{{\mathrm{2}}})} \outputeqs{\nes_{{\mathrm{2}}}}}%
\ottpremise{ \Upsilon \vdash \ottnt{e} \ottsym{:} \ottnt{(t,\natlhs)} \outputeqs{\nes} \!\!\! }%
}{
\Upsilon \vdash  \kwool{lock} ( \ottmv{x} ) \kwool{\,in\,} \ottnt{e}  \ottsym{:} \ottnt{(t,\natlhs)} \outputeqs{\nes}}{%
{\ottdrulename{ai\_lock}}{}%
}}

\renewcommand{\ottdrulewfXXlocked}[1]{\ottdrule[#1]{%
\ottpremise{\Upsilon \vdash \ottnt{e} \ottsym{:} \ottnt{(t,\natlhs)}\outputeqs{\nes}}%
\ottpremise{\Upsilon \ottsym{(} \iota \ottsym{)} \ottsym{=} \ottnt{t_{{\mathrm{2}}}}}%
}{
\Upsilon \vdash  \kwool{locked}_{ \iota } \{ \ottnt{e} \}  \ottsym{:} \ottnt{(t,\natlhs)}\outputeqs{\nes}}{%
{\ottdrulename{ai\_locked}}{}%
}}

\newcommand{\ottdrulewfXXif}[1]{\ottdrule[#1]{%
\ottpremise{\Upsilon \vdash \ottnt{e_{{\mathrm{1}}}} \ottsym{:} \ottnt{(t_1,\natlhs_1)} \outputeqs{\nes_1}}%
\ottpremise{\Upsilon \vdash \ottnt{e_{{\mathrm{2}}}} \ottsym{:} \ottnt{(t_{{\mathrm{1}}},\natlhs_{{\mathrm{2}}})} \outputeqs{\nes_{{\mathrm{2}}}}}%
\ottpremise{\Upsilon \vdash \ottnt{e_{{\mathrm{3}}}} \ottsym{:} \ottnt{(t_{{\mathrm{3}}},\natlhs_{{\mathrm{3}}})} \outputeqs{\nes_{{\mathrm{3}}}}}%
\ottpremise{\Upsilon \vdash \ottnt{e_{{\mathrm{4}}}} \ottsym{:} \ottnt{(t_{{\mathrm{3}}},\natlhs_{{\mathrm{4}}})} \outputeqs{\nes_{{\mathrm{4}}}}}%
}{
\Upsilon \vdash  \ifk\ (e_{{\mathrm{1}}} == e_{{\mathrm{2}}})\ e_{{\mathrm{3}}}\ \elsek\ e_{{\mathrm{4}}}  \ottsym{:} \ottnt{(t_3,\natlhs_\mathrm{3})} 
\outputeqs{\nes_{{\mathrm{1}}} \cup \nes_{{\mathrm{2}}} \cup \nes_{{\mathrm{3}}} \cup \nes_{{\mathrm{4}}} \cup \{(\natlhs_1, \natlhs_2), (\natlhs_\mathrm{3},\natlhs_\mathrm{4}) \}}}{%
{\ottdrulename{ai\_if}}{}%
}}

\newcommand{\ottdrulewfXXseq}[1]{\ottdrule[#1]{%
\ottpremise{\Upsilon \vdash \ottnt{e_{{\mathrm{1}}}} \ottsym{:} \ottnt{(t_{{\mathrm{1}}},\natlhs_{{\mathrm{1}}})} \outputeqs{\nes_{{\mathrm{1}}}}}%
\ottpremise{\Upsilon \vdash \ottnt{e_{{\mathrm{2}}}} \ottsym{:} \ottnt{(t_{{\mathrm{2}}},\natlhs_{{\mathrm{2}}})} \outputeqs{\nes_{{\mathrm{2}}}}}%
}{
\Upsilon \vdash  \ottnt{e_{{\mathrm{1}}}} ; \ottnt{e_{{\mathrm{2}}}}  \ottsym{:} \ottnt{(t_{{\mathrm{2}}},\natlhs_{{\mathrm{2}}})} \outputeqs{\nes_{{\mathrm{1}}} \cup \nes_{{\mathrm{2}}}}}{%
{\ottdrulename{ai\_seq}}{}%
}}

\renewcommand{\ottdefnwfXXExpression}[1]{\begin{ottdefnblock}[#1]{$\Upsilon \vdash_{\natlhs} \ottnt{e}\ \outputeqs{(\nes,  \mns)} $}{}
\ottusedrule{\ottdrulewfXXvar{}} \quad
\ottusedrule{\ottdrulewfXXnull{}}\\[5pt]
\ottusedrule{\ottdrulewfXXlet{}}\\[5pt]
\ottusedrule{\ottdrulewfXXcall{}}\quad  \ottusedrule{\ottdrulewfXXnew{}}\\[5pt]
\ottusedrule{\ottdrulewfXXnewat{}} \quad \ottusedrule{\ottdrulewfXXcast{}}\\[5pt] 
\ottusedrule{\ottdrulewfXXupdate{}}\\[5pt] 
\ottusedrule{\ottdrulewfXXselect{}}\\[5pt] 
\ottusedrule{\ottdrulewfXXfj{}}\\[5pt]
\end{ottdefnblock}}

\begin{table}[t]
\caption{Generation of a program's atomicity information.}  %
\label{tab:ni}
\begin{rulesize}
\hrule 
\ottdefnwfXXProgram{}\\[5pt]
\ottdefnwfXXClass{}\\[5pt]
\ottdefnwfXXMethod{}\\[5pt]
\hrule 
\ottdefnwfXXExpression{}
\hrule 
$\func{fatom}(t, \nu, f) = 
\begin{cases}
\atomick & \fields(t)(f) = \atomicS\ t \vee  (\fields(t)(f) = t \wedge \nu = \atomick) \\
\natomick & \text{otherwise}  
\end{cases}$ \\
Function \func{fields}(C)(f) is defined as in \cite{oolong-ACR2019} and returns the possibly qualified type of field $f$ of class $C$.\\
Function \func{interfaceOf}(C) consults the code to return the interface implemented by class $C$. 
\hrule 
\end{rulesize}
\end{table}

The rules of the type system  are presented in Table~\ref{tab:ni}.
The variant identifiers $\natx_1.t.m.\natx_2.\natx_3$ used as keys in rule
\rulename{ai\_method} have their 
$\natx_1$ and $\natx_2$ components constructed from program variable names, respectively, \thisk (in rule \rulename{ai\_class}) and the method parameter, by using function 
$\natvar$ that generates an atomicity variable from program variable.
The identifier $\natx_3$ is a fresh variable but could also be constructed from a reserved identifier, such as \returnk.
Rule \rulename{ai\_program} unites the information gathered for each class with the one generated for the \emph{main} expression, which is also stored in the table under a special key.

Rules for expressions receive: (a) a
typing environment with elements of the form $x : \natx.t$, where $\natx$ and $t$ denote, respectively, the atomicity and type of variable $x$; and (b) an atomicity variable ($\natv y$) that carries the atomicity of the expression evaluation's recipient, be it a variable, a field or a method call argument.
The output is a pair, whose %
first element ($\nes$) is a set of expressions that impose  constraints on the values of atomicity variables, while the second ($\mns$) is the aforementioned set of method variants that the expression calls. 

The atomicity of a local variable is given by the expression assigned to it.
To that end, rule~\rulename{ai\_let} creates a new atomicity variable ($\natx$) to represent the atomicity of variable $x$ and uses it in the typing of $e_1$ (as the recipient's atomicity) and $e_2$ (in the typing environment).
The \emph{recipient's atomicity} variable is actively  used in several other rules. 
An example is  rule \rulename{ai\_var}, where it is used to ensure  that the atomicity of the recipient is the same  of the variable.
In rules where it is not relevant, 
such as \rulename{ai\_null} --  \nullk  does not have a defined atomicity and, hence, may be passed to any recipient --
we simply add the constraint $\ateq{\natv y}{\natv y}$ to ensure that all atomicity variables are represented in the constraint set. 
A simple example   of the application of these initial rules follows, where
	 $\natx_2 = \natvar(x)$,  
	 $\natv x_3\ \fresh$,
	  $\hat y = \natvar(y)$, 
	 and $\hat z = \natvar(z)$; the generated constraints ensure that the atomicity of the parameter, of the return type, and of variable~$y$ bound by the first let-expression are all equal:
\begin{align*}
  & \thisk : \natx_1.C \vdash \defk\ m(x : t) : t\ \{ \letk\ y : t = x\ \ink\ \letk\ z : t = \nullk\ \ink\ y \}  
  \outputeqs{} \\
  & \qquad \{\natx_1.C.m.\natx_2.\natx_3 \mapsto ( \{  \ateq{\natv y}{\natx_2}, \ateq{\natv z}{\natv z}, \ateq{\natv y}{\natx_3} \}, \mns)\}
\end{align*}
When calling a method, several variants may be available.
The choice of the one to call is determined by the atomicity of the target object, of the expression passed as argument, and of the recipient.
Accordingly, to compose the identifier of the variant to add to the set of calls, rule~\rulename{ai\_call}
obtains the first three components ($\natx_1$, $t$ and $m$) from the expression and the typing environment.
The atomicity of the parameter  ($\natx_2$)  and of the return value  ($\natx_3$)  require  fresh variables, which are used to match the referred atomicities with the one of the expression passed as argument
and with the one of the recipient of the call's return value.
We will need to refer to variables $\natx_2$ and $\natx_3$ in subsequent stages of our analysis and hence need to know which identifiers have been generated.
The usual approach is to store this information on a data structure, along with context information, and propagate it along the analysis.
For the sake of simplicity and readability, in this presentation we opt for a different solution.
We make use of the  \nextvar\ deterministic  variable generator that, once reset,  
always generates the same sequence of variables.
Given that we will visit the source program always by the same order, if every rule generates the same amount of variables, we will always obtain the same identifier at the same points of the analysis. %
We make use of this generator every time we need  (to remember) an atomicity variable that cannot be constructed from the code nor is reflected in the variant identifier, 
such as in rule \rulename{ai\_new}, where it used to represent the atomicity of the class being instantiated, or in rule 
\rulename{ai\_cast}, where it is used to guarantee that the atomicity of the type used is the same of the one of the receiver and of the  expression being cast.

The constraints imposed by rules~\rulename{ai\_select} and~\rulename{ai\_update} are based on the atomicity of the field under analysis, which depends on the atomicity of the target object.  
When this object is not atomic ($\ateq{\natx}{\natomick})$, the atomicity of the field is the one coded in the program.
Otherwise, as explained in \S\ref{sec:atomic_types_and_method_variants} and defined in 
function \func{fatom},
the field also becomes atomic if it is of the same type of the enclosing class.
The application of the rules to a simplified version of method \texttt{add} from Listing~\ref{lst:list} follows:

 \begin{footnotesize}
 \begin{align}
 \begin{split}
 & \thisk : \natx_1.\texttt{BaseList\_na} \vdash \defk\  \texttt{add}(\texttt{element} : \texttt{Object}) : \kw{Unit}\ \{ 
  \letk\ \texttt{node : Node} = \newk\ \texttt{Node}\ \ink \\ 
& \quad   \letk\ \_ : \kw{Unit} = \texttt{node.value = element}\ \ink\ (\dots)\ \thisk.\texttt{tail = node} \}  
  \outputeqs{} \\
&  \{\natx_1.\texttt{BaseList\_na}.\texttt{add}.\natx_2.\natx_3 \mapsto ( \{  \\
& \qquad 
\ateq{\natv{\texttt{node}}}{\natx_4},  
\qquad \qquad \qquad \qquad  \qquad \qquad  \qquad \quad \qquad \qquad 
\text{Added by } \rulename{ai\_new}, 
\text{for } \natx_4\ = \nextvar \\
&
  \qquad \qquad \qquad \qquad \qquad \qquad \qquad \qquad \qquad \quad \qquad \qquad
  \qquad \quad
 \text{ and } \natv{\texttt{node}} = \natvar(\texttt{node})\\
& \qquad 
\ateq{\natx_6}{\natx_6}, 
(\ateq{\natv{\texttt{node}}}{\atomick} \wedge \ateq{\natx_5}{\natomick})
  \vee (\ateq{\natv{\texttt{node}}}{\natomick} \wedge \ateq{\natx_5}{\natomick}),
  \quad  
  \text{Added by } \rulename{ai\_update}, \\
  &  
  \qquad \qquad \qquad \qquad \qquad \qquad \qquad \qquad \qquad \quad \qquad
  \qquad \qquad \qquad  
  \text{for } \natx_5\ \text{fresh  and } \natx_6 = \natvar(\texttt{\_})\\
&  \qquad 
\ateq{\natx_5}{\natx_2}, 
\qquad \qquad \qquad \qquad \qquad \qquad \qquad \qquad \qquad \qquad
\text{Added by } \rulename{ai\_var} \\
  & \qquad 
  \ateq{\natx_3}{\natx_3},
  (\ateq{\natx_1}{\atomick} \wedge \ateq{\natx_7}{\atomick})
  \vee (\ateq{\natx_1}{\natomick} \wedge \ateq{\natx_7}{\atomick}), 
  \qquad  \quad
  \text{Added by } \rulename{ai\_update}, \text{for } \natx_7\ \text{fresh} \\
    & \qquad  
  \ateq{\natx_7}{\natv{\texttt{node}}}
\qquad \qquad \qquad \qquad \qquad \qquad \qquad \qquad \qquad \ \quad 
  \text{Added by } \rulename{ai\_var} \\
 & \quad \}, \emptyset)\}
 \end{split} \label{eq:1}
\end{align}
 \end{footnotesize}

We thus have that the atomicity of local variable \texttt{node}
must be the same as of the type instantiated in \texttt{\newk Node} ($\natx_4)$ and of 
\texttt{\thisk.tail} ($\natx_7$), whose atomicity is always $\atomick$, independently of the atomicity of 
\thisk ($\natx_1)$. Also the atomicity of parameter \texttt{element} ($\natx_2$) must be the same of \texttt{node.value} ($\natx_5$), which is always $\natomick$, independently of the atomicity of \texttt{node}.
Given that the method's return type is \kw{Unit}, no constraints are imposed on its atomicity ($\natx_3$).

\paragraph{Variant Constraints.}
Function \inter\ takes the atomicity-information previously generated for each method (collected in $\mes$) and builds a single (global) constraint system, comprising a variant validity constraint per every possible variant of each method. 
Validity of each variants is represented by a unique validity variable, used in the validity constraints so as to be solved. %

\begin{footnotesize}
\setlength{\mathindent}{0cm}
\begin{align*}
\inter : \MetSys  \rightarrow\ & \NatSys  \\
 \inter(\mes)   =\ &
 \func{map}(\lambda x.\vvMCS(x), \mes) %
\end{align*}
\begin{align*}
 \vvMCS(\natx_1.\kw{Unit}.\texttt{main}.\natx_2.\natx_3 \mapsto (\nes,\mns))   = ( 
 \valeq{\valvar(\maine)}{\validk} \land \nes@\maine \land \func{bindCall}(\mns, \maine)) \\
  \text{ where } \maine = \mainid \\
 \vvMCS(\natx_1.C.m.{\natx}_2.{\natx}_3 \mapsto (\nes,\mns))   =\  \hspace{-0.6cm}
\bigwedge_{\begin{subarray}{l}{\nu_1,\nu_2,\nu_3 \in \{\atomick,\natomick\}}\\
\mu = \nu.C.m.\nu_1.\nu_2 
    \end{subarray}}  \hspace{-0.6cm}
 (\valeq{\valvar(\mu)}{\validk}) \rightarrow
\Big((\nes \cup \{ \ateq{\natx_1}{\nu_1}, \ateq{\natx_2}{\nu_2}, \ateq{\natx_3}{\nu_3}\})@\mu  \\[-20pt]
 {\hspace{4cm}
\wedge~ \func{bindCall}(\mns, \mu)
\Big)}
\end{align*}
\begin{align*}
\func{bindCall}(\mns,\mu) =\ & 
\hspace{-.7cm}
\bigwedge_{\natx_1.C.m.{\natx}_2.{\natx}_3 \in \mns}
\Big( \bigvee_{\nu_1,\nu_2,\nu_3 \in \{\atomick,\natomick\}}
 \hspace{-0.5cm}
 \big( \valeq{\valvar(\nu_1.C.m.\nu_2.\nu_3)}{\validk} \land \ateq{\natx_1@\mu}{\nu_1}\land\ateq{{\natx}_2@\mu}{\nu_2}\land\ateq{{\natx}_3@\mu}{\nu_3}\big)
\Big)\\
\valvar(\mu) \ & \text{generates a validity variable $\natv \mu$ from variant identifier $\mu$.} \\
\nes@\mu \ &  \text{defined in Appendix~\ref{app:aux}, qualifies all occurrences of unqualified atomicity} \\ & \text{variables ($\natx$) with the given variant identifier $\mu$: $\natx@\mu$}.
\end{align*}
 \end{footnotesize}

The \emph{main} expression has a single variant, as it does not belong to any class, has no input parameter and the result of its evaluation may simply be discarded.
We represent this variant with special identifier $\mainid$, shortened to \maine.
The validity of \maine\ is required to ensure that the program has an initial expression, meaning that all the constraints ($\nes$) imposed by the program's original initial expression 
must be satisfied, and the variants called must be valid.
For global uniqueness, the local variables in  $\nes$ are qualified with the variant's id.%
\footnote{This is not important in the case of \maine, because there is only one variant, but is needed for methods in general.}
The constraints needed to guarantee that, for each method to call,  there is a valid variant that satisfies the atomicity restrictions  
imposed by the caller are generated by function \func{bindCall}.
For each variant variable received ($\natx_1.C.m.{\natx}_2.{\natx}_3  \in \mns$), the function
adds a constraint to ensure that at least one variant (with identifier 
$\nu_1.C.m.\nu_2.\nu_3$ for $\nu_1,\nu_2$ and $\nu_3 \in \{\atomick,\natomick\}$) 
of the target method  is both valid and matches the atomicity restrictions for the object, the parameter and the return value.
The resulting constraints are conjugated to ensure satisfiability of all calls.

The case for methods is similar, having its differences grounded on the fact that methods have  $2^n$ possible variants (for $n=3$ in \rccc-\oolong) and not all of these have to be valid.
The validity of a variant implies the satisfiability of the restrictions imposed by the method's body, complemented by the constraints imposed by
the variant on the atomicity of the object, parameter and return value, as well as the  satisfiability of the restrictions imposed by \func{bindCall}.

\paragraph{Solving the Constraint System.}
The global constraint system only makes use of equality, implication, conjunction and disjunction operations over two sets of binary values.
It can thus easily be transformed into a Boolean satisfiability problem
by mapping values $\atomick$ and \validk into \truek and, $\natomick$ and  \invalidk into \falsek.
The model resulting from the system's satisfiability
assigns  Boolean values to all validity and atomicity variables, producing what we call the \emph{solution}.
For the example of Eq.~\ref{eq:1}, independently of the variant, we have that $\natv{\texttt{node}} = \natx_4 = \natx_7 = \atomick$ and that 
$\natx_5 = \natx_2 = \natomick$. So, only variants on which $\natx_2 = \natomick$ are valid. 
For those we have (omitting auxiliary variables):

\begin{footnotesize}
\begin{align*}
\hspace{-0.3cm} \{\ &
\atomick.\texttt{BaseList\_na.add}.\natomick.\atomick = \validk, 
\atomick.\texttt{BaseList\_na.add}.\natomick.\atomick.\texttt{node} = \atomick,
\atomick.\texttt{BaseList\_na.add}.\natomick.\atomick.\natx_4 = \atomick, \\
\hspace{-0.3cm}&
\natomick.\texttt{BaseList\_na.add}.\natomick.\atomick = \validk,
\natomick.\texttt{BaseList\_na.add}.\natomick.\atomick.\texttt{node} = \atomick,
\natomick.\texttt{BaseList\_na.add}.\natomick.\atomick.\natx_4  = \atomick, \\
\hspace{-0.3cm}&
\atomick.\texttt{BaseList\_na.add}.\natomick.\natomick = \validk, 
\atomick.\texttt{BaseList\_na.add}.\natomick.\natomick.\texttt{node} = \atomick,
 \atomick.\texttt{BaseList\_na.add}.\natomick.\natomick.\natx_4  = \atomick, \\
\hspace{-0.3cm}& 
\natomick.\texttt{BaseList\_na.add}.\natomick.\natomick = \validk, 
\natomick.\texttt{BaseList\_na.add}.\natomick.\natomick.\texttt{node} = \atomick,
\natomick.\texttt{BaseList\_na.add}.\natomick.\natomick.\natx_4  = \atomick\
\}
\end{align*}
\end{footnotesize}
We show the values for $\natx_4$, because it is the variable that will be used in the code generation  to known the type to instantiate in the $\newk\ \texttt{Node}$ expression, which in this case will be \atomicS \texttt{Node}.

\subsubsection{Stage 3 - Interface Implementation}

An \rccc-\oolong interface may feature signature annotations
that enable the programmer to explicitly convey the atomicity of  the methods parameter and return type.
The third stage of the \rccc analysis has the goal of guaranteeing that
the set of valid variants computed for each class $C$ includes the signatures that result from the parsing of the interface $I$ implemented by $C$.
 To that end,  we define function \consistent\ that receives the   solution obtained from \func{validVariants} and the source code.
The solution is used to retrieve the signatures of the valid variants 
of the class' methods, while the code is used  the retrieve the signatures originally defined in the program's classes and  interfaces.
These signatures are represented 
by triples of the form $ (\nu_1, \nu_2, m(x: t_1) : t_2)$,
where $\nu_1$ and $\nu_2$ denote, respectively,  the atomicity of the parameter and of the return value, and $m(x: t_1) : t_2$
denotes the method's original non-annotated signature.

\begin{footnotesize}
\begin{align*}
\hspace{-0.3cm}\consistent(\sol, \Ids\ \Cds\ e) =\ & \forall \classk\ C\ \kw{implements}\ I\ \{ \Mds \}  \in \Cds\ ~.~
\forall \nu \in \natval\ ~.~ \\
\hspace{-0.3cm}& \hspace{-0.5cm}\bigcup_{\IMsig \in \func{msigs}(I)} \hspace{-0.5cm} \any(\IMsig)  
\subseteq   \hspace{-0.5cm}
\bigcup_{\Msig \in \func{msigs}(C)} \hspace{-0.5cm} \variantmsigs(\sol, \nu, C, \Msig) \\[5pt]
\hspace{-0.3cm}\any(m(x : t_1) : t_2\ \Sas) =\ &
\{ (\func{atom}(q_1), \func{atom}(q_2), m(x: t_1) : t_2)   ~|~  \atan{q_1}{q_2} \in \Sas\} \\
\hspace{-0.3cm}\func{atom}(\atomicS) =\ & \atomick  \qquad \func{atom}(\natomicS) = \natomick \\
\hspace{-0.3cm}\variantmsigs(\sol, \nu, C, m(x : t_1) : t_2) =\ &
    \{\
(\nu_1, \nu_2, m(x_1 : t_1) : t_2) ~|~  \sol(\valvar(\nu.C.m.\nu_1.\nu_2))=\validk \}
\end{align*}
\end{footnotesize}
The set of signatures defined by interface \texttt{List} of our running example is:

\begin{footnotesize}
\begin{align*}
\hspace{-0.4cm}\{\ & (\natomick, \natomick, \texttt{add}(x : \texttt{Object}) : \kw{Unit}),  
(\natomick, \atomick, \texttt{add}(x : \texttt{Object}) : \kw{Unit}), 
(\natomick, \natomick, \texttt{get}(x : \texttt{Integer}) : \texttt{Object}),  \\
\hspace{-0.4cm}&
 (\atomick, \natomick, \texttt{get}(x : \texttt{Integer}) : \texttt{Object}),
(\natomick, \natomick, \texttt{equals}(x : \texttt{List}) : \texttt{Boolean}), 
(\natomick, \atomick, \texttt{equals}(x : \texttt{List}) : \texttt{Boolean}), \\
\hspace{-0.4cm}&
(\atomick, \natomick, \texttt{equals}(x : \texttt{List}) : \texttt{Boolean}), 
(\atomick, \atomick, \texttt{equals}(x : \texttt{List}) : \texttt{Boolean}) \
 \}
\end{align*}
\end{footnotesize}

On the other hand, in agreement with the discussion about valid variants in \S\ref{sec:atomic_types_and_method_variants}, we have that the set of signatures for methods \texttt{add}, \texttt{get} and \texttt{equald} deemed valid for either 
\texttt{BaseList\_na} and \texttt{\atomicS BaseList\_na} is also the one above.
Ergo, given that the latter set is included in the former, the \texttt{Collection} interface is correctly implemented by  both 
\texttt{BaseList\_na} and \texttt{\atomicS BaseList\_na}.

\subsubsection{Stage 4 - Code Generation}\label{sec:code_generation}

The purpose of the final code generation stage is five-fold: 
1 - replace original method signatures in interfaces
by the  ones resulting from the parsing of the atomicity annotations;
2 - replace the method definitions in classes by the  variants considered valid in the solution generated by Stage 2 (\func{validVariants});
3 - replace the method names by the right variant in all method calls;
and 
4 - types and qualifiers by new types from a set that is implicitly partitioned into atomic and no-atomically qualified types.
The resulting code  will be cleared from all ambiguities with regard to variable atomicity. %

\begin{table}[t]
\caption{Code generation}
\label{tab:rc3_variant_injection}
\begin{rulesize}	
\hrule
\begin{align*}
\vi{P}(\sol, %
\Ids\ \Cds\ e)  =\ & \bigcupplus_{\Id \in \Ids}
\vi{I}(Id) \ \uplus \bigcupplus_{\Cd \in \Cds} \vi{C}(\sol, %
Cd)\ \uplus \vi{E}(\sol, \maine, e) \\
\vi{I}(\kw{interface}\ I\ \{ \IMsigs \})  =\ & 
	\bigcupplus_{\nu \in \{\atomick,\natomick\}}
	\kw{interface}\ \ootype(\nu, I)\ \{\bigcupplus_
	{ %
	\IMsig \in \IMsigs
    }
	 \oosig (\any(\IMsig))  \} \\
\vi{I}(\kw{interface}\  I\ \extendsk\  I_1, I_2)  =\ & 
	\bigcupplus_{\nu \in \{\atomick,\natomick\}} \kw{interface}\  \ootype(\nu, I)\ \extendsk\   \ootype(\nu, I_1),  \ootype(\nu, I_2)  \\ 
\vi{C}(\sol, %
\classk\ C\  \kw{implements}\ I\ \{ \Fds\ \Mds \} )  =\ \hspace{-4em}&\hspace{4em} 
   \bigcupplus_{\nu \in \{\atomick,\natomick\}} \classk\ \ootype(\nu, C)\   \kw{implements}\ \ootype(\nu, I)\ \Big\{ \\[-7pt]
   & \hspace{10em} \bigcupplus_{ \Fd \in \Fds} \vifield(\nu,
   \Fd)  %
   ~\uplus \bigcupplus_{\Md \in \Mds} \vi{Md}(\sol, %
   \nu, C, \Md)  \Big\}   \\
\vifield(C, \nu, f : t)  =\ &
 \begin{cases}
  f: \ootype(\atomick, t)\ & \textrm{ if } \func{fatom}(C, \nu,  f)  = \atomick \\ 
   f: \ootype(\natomick, t)\ & \textrm{ otherwise }
  \end{cases} \\
\vi{Md}(\sol, %
\nu, D, \defk\ m(x : t_1) : t_2   \{ e \})  =\ & 
\bigcupplus_{\sol(\valvar(\nu.D.m.\nu_1.\nu_2))=\validk
}
\defk\ \Msig'\ \{\  \vie(\sol, \nu.D.m.\nu_1.\nu_2, e)\ \} \\
& \textrm{where } \Msig' = \oosig(\nu_1, \nu_2, m(x : t_1) : t_2) \\
\vie(\sol,  \mu, \letk\ x : t = e_1\ \ink\ e_2) =\ & \letk\ x : \ootype (\nu_1, t) =
  	\vie(\sol,  \mu, e_1)\ \ink\ \vie(\sol,  \mu, e_2)	 
  	 \\ & \textrm{where } 
  		\natx = \natvar(x) \text{ and } \nu_1 = \sol(\natx@\mu) \\
\vie(\sol,  \mu, x.m(e)) =\ & x.m'(\vie(\sol,  \mu, e)) \
	\\ &
	\begin{aligned}
	    \textrm{where } &
			\natx_1 = \nextvar; 
			\natx_2 = \nextvar, 
			\nu_1 = \sol(\natx_1@\mu),
			\nu_2 = \sol(\natx_2@\mu) \\
	    \textrm{and } &
	        m' = \oomn(m,\nu_1, \nu_2)
	\end{aligned}
\\
\vie(\sol,  \mu, \newk\ C) =\ & \newk\ C'  \quad \textrm{where } \natx = \nextvar, \nu = \sol(\natx@\mu) \text{ and } C' = \ootype(\nu, C)\\
\vie(\sol,  \mu, \newk\ \atomicS\ C) =\ & \newk\ C' \quad \textrm{where } C' = \ootype(\atomick, C)\\
\vie(\sol,  \mu, x.f = e) =\ & x.f = \vie(\sol,  \mu, e) \\
\vie(\sol,  \mu,  (t)\ e) =\ & (t')\ \vie(\sol, \mu, e) \\
 & 
    \textrm{where } \natx = \nextvar, \nu = \sol(\natx@\mu)
    \text{ and } t' = \ootype(\nu, t)
 \\
\vie(\sol,   \mu, \kw{finish} \{
    \kw{async} \{ e'_1 \} &\ 
    \kw{async} \{ e'_2 \}
    \}; e ) \\
    =\ & %
 \kw{finish} \ \{
   \kw{async} \{ \vie(\sol, \mu, e'_1) \} %
   \ \kw{async} \{ \vie(\sol, \mu, e'_2) \}
   \};\  \vie(\sol,   \mu, e) \\
 \vie(\sol,   \mu, x.f) =\ & x.f \\
 \vie(\sol,   \mu, x) =\ & x \\
 \vie(\sol, \mu, v) =\ & v 
 \end{align*}
\hrule
\end{rulesize}
\end{table}

The \vi{P} code generation function is presented in Table \ref{tab:rc3_variant_injection}.
It is parametric on the solution produced by Stage 2 and on the source program obtain in Stage 1, 
and makes use of functions \ootype, \oosig~and \oomn, defined in Appendix~\ref{app:aux}.
These functions generate \oolong identifiers for the types and method names to include in the final program, namely
\ootype\ generates  type identifiers for pairs (atomicity,  type),
\oosig\ generates  method signature for triples (atomicity, atomicity, method signature) and
\oomn\ generates  method names for triples (atomicity, atomicity, method name).

The \vi{P} function processes the definitions of every type $t$ (interface or class) from  the input program, and generates the two correspondent types \ootype($\natomick$, $t$) and  \ootype($\atomick$, $t$).
In the case of interfaces, the signatures that compose the type are 
obtained from function \any\ that parses the programmers signature annotations.
Concerning classes, the atomicity of fields is defined by function \func{fatom} (see \S\ref{sec:valid_variants})
and the method definition list is obtained by consulting the valid variants of the class' methods in \sol.

The generation of expressions, for the methods' body and the \textit{main} expression, is given by function \vi{E}.
The case for \letk requires the retrieval of the atomicity of type $t$ in the expression.
For that, we query the solution for the atomicity value assigned to  atomicity variable $\natx$, obtained (as
in Table~\ref{tab:ni}) from the \natvar\ generator. %
The cases for method calls, \newk and casts also require information stored in the solution. In these cases (also as in Table~\ref{tab:ni})
the atomicity variables are obtained from the \nextvar\ generator.
In the particular case of method calls, the identifier of the method to call is replaced by the one of the correct variant, \text{i.e.} the one with the signature compatible with the atomicities of the parameter
and of the return value. This identifier is generated from the atomicity values computed in the solution for the atomicity variables created in rule \rulename{ai\_call} from Table~\ref{tab:ni}.
The remainder rules are straightforward code translations.

We note that this final stage does not fail due to the fact that, when passing the solution \sol\ to \vi{E}, we know that \sol\ is defined on all the variables generated by \func{natVar} and \func{nextVar} and tagged with a valid variant $\nu.D.m.v_1.v_2$ (\cf \vi{Md}).

\section{Soundness}
\label{sec:soundness}

The algorithm presented in the previous section is elaborate enough to require a rigorous statement, and proof, of soundness. In this section %
we show that the $\func{\rccc{}Analysis}$ process guarantees type and behavioural soundness, \ie that (when successful) it preserves typeability and atomicity annotations, and moreover, it produces a program behaviourally equivalent to the original one.

The type soundness of the automatically generated program has been proved in Coq. The definitions and proofs weight about 18 kloc, for a total of 297 definitions, 310 lemmas, one theorem (type preservation for programs), and six axioms about unimplemented aspects; the axiomatisation of the constraint solver relies on standard assumptions relating the input to the generated solution 
(see file README in the mechanisation\footnote{AtomiS-Coq proof of type preservation (2022), URL:\url{https://zenodo.org/record/6346649} and \url{https://zenodo.org/record/6382015}} artefact for more details). The proof uses standard tactics 
from Coq's library.

The following definitions and results%
are stated for programs $P_\textrm{orig}$, \ie annotated \rccc-\oolong programs written in the language of Table~\ref{tab:syntaxpos}, for which the analysis succeeds and produces a final \oolong program, according to Definition~\ref{def:rc3-analysis}. 

\subsection{Type Soundness and Mechanisation}
We start by formalising how programs that are generated by the analysis are typeable by the \oolong type system, while attributing the same base types to fields and methods, and atomicities are consistent with those annotated in the original code. To lighten the notation, in the following we say that %
type $\ootype(\nu,t)$ is a compound of atomicity $\nu$ and type $t$.

\subsubsection{Type Soundness}
\label{sec:ts}
The Preservation of Base Types theorem states that the \rccc{} analysis transforms programs that,   when stripped of atomicity annotations, are typable with a certain type, into (\oolong) programs that are also typeable with a type that is a compound of the former.
It thus assumes that $\func{\rccc{}Analysis}$ succeeds, and typability respects the \oolong type system, which we denote by $\vdash_{\textit{Ool}}$.
\begin{theorem}[Preservation of Base Types] \label{th:preservation}
Consider an original program $P_\textrm{orig}$ %
such that

$\func{\rccc{}Analysis}(P_\textrm{orig}) = (P,\sol,P^+)$.
If $\func{strip}(P_\textrm{orig})=S$ is typeable with type $t$, then the final (\oolong) program $P^+$ is also typeable with a type that is a compound of $t$:
for all $t$ such that $\vdash_{\textit{Ool}} S : t$, there exists $\nu$, such that $\vdash_{\textit{Ool}} P^+ : \ootype(\nu, t)$.
\end{theorem}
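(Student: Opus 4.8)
The plan is to prove the statement by lifting the \oolong typing derivation of the stripped program $S$ to a derivation for the generated program $P^+$, reading off from the solution $\sol$, at every node, the atomicity tag that turns each base type $t$ into its compound $\ootype(\nu,t)$. The essential observation is that code generation ($\vi{P}$ of Table~\ref{tab:rc3_variant_injection}) does not alter the base-type skeleton: for each interface or class it emits exactly the two tagged copies $\ootype(\natomick,\cdot)$ and $\ootype(\atomick,\cdot)$, whose inheritance and subtyping structure is isomorphic to that of the original declaration. Hence I would induct on the derivation of $\vdash_{\textit{Ool}} S : t$ (equivalently, on the decorated program $P$, whose base-type skeleton coincides with $S$) and, in parallel, track the atomicity-generation judgement $\Upsilon \vdash_{\natlhs} e \outputeqs (\nes,\mns)$ that Stage~2 produced for the corresponding expression, so that the atomicities given by $\sol$ can be attached to each sub-derivation.

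Before the induction I would establish a handful of structural lemmas. First, that $\ootype$ preserves subtyping within each tree: $t_1 <: t_2$ in $S$ implies $\ootype(\nu,t_1) <: \ootype(\nu,t_2)$ in $P^+$ for every fixed $\nu$, with no cross-tree subtyping (matching the discussion of \S\ref{sec:atomic_types_and_method_variants} that atomic and regular types form disjoint trees). Second, that field and signature lookups are preserved modulo tagging, i.e. $\func{fields}$ and $\func{msigs}$ on $\ootype(\nu,t)$ return the tagged versions of the original entries, where a field's tag is precisely the one computed by $\func{fatom}$. Third, that $\vi{I}$ and $\vi{C}$ produce well-formed interface and class declarations: the interface signatures emitted by $\any$ are exactly the ones the valid variants must implement, and this is where Stage~3's check $\consistent(\sol,P)=\truek$ is invoked to guarantee that every required interface signature is matched by a valid variant of the implementing class.

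The heart of the argument is a correspondence lemma for expressions: if $\Upsilon \vdash_{\natlhs} e \outputeqs (\nes,\mns)$, if $\sol$ satisfies the qualified instance $\nes@\mu$, and if $e$ has \oolong type $t$ under the base environment underlying $\Upsilon$, then $\vie(\sol,\mu,e)$ has type $\ootype(\sol(\natlhs@\mu),t)$ under the environment obtained by tagging each $x:\natx.t'$ with $\ootype(\sol(\natx@\mu),t')$. I would prove this by induction on $e$, checking each clause of $\vie$ against the matching rule of Table~\ref{tab:ni}. The cases for $\kw{let}$, $\newk$ and casts are routine once one knows that $\vie$ reads the very same atomicity variables (via $\natvar$ and the deterministic $\nextvar$ generator) that the generation rules recorded, so the chosen tags agree with the recorded constraints. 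The delicate clauses are field access/update and method call: for $x.f$ and $x.f = e$ the disjunctive constraints of $\rulename{ai\_select}$/$\rulename{ai\_update}$ together with $\func{fatom}$ force the recipient tag to equal the field's tag in whichever branch $\sol$ selects; for $x.m(e)$ the constraints of $\func{bindCall}$ guarantee that the variant name produced by $\oomn$ denotes an actually-generated, valid variant whose parameter and return tags match the argument and recipient tags, so the \oolong call rule applies.

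Assembling these, $\vi{Md}$ yields, for every valid variant $\nu.D.m.\nu_1.\nu_2$, a method whose body types against $\oosig(\nu_1,\nu_2,m(x:t_1):t_2)$, and $\vie$ applied to the main expression yields a term of type $\ootype(\nu,t)$ with $\nu = \sol(\natx_3@\maine)$ serving as the witness for the existential; closing under $\vi{P}$ then gives $\vdash_{\textit{Ool}} P^+ : \ootype(\nu,t)$. I expect the main obstacle to be exactly the method-call and field-inheritance cases of the correspondence lemma, where one must reconcile the constraint-satisfaction semantics of $\sol$ with the syntactic lookup semantics of the \oolong type system --- in particular, proving that the tag inherited by a self-typed field (the $\func{fatom}$ ``same class'' branch) is consistent along every path through the generated class hierarchy. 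This is also the part most sensitive to the fresh-variable bookkeeping, and it is presumably where the bulk of the 18\,kloc Coq development, together with the standard axioms relating the constraint system to its computed solution, is spent.
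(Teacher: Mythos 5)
Your proposal is correct and takes essentially the same route as the paper's proof, which exists only as the Coq mechanisation described in \S\ref{sec:mech}: the mechanisation's refinement of the \oolong type system to pair types (atomicity qualifier, base type) is precisely your tagged-environment correspondence lemma in hand-proof form, and, like you, it proceeds by induction over the program structure, reads the tags off the solver's solution (axiomatised to satisfy the generated constraints), and uses Stage~3 only to discharge well-formedness of class/interface implementation. The delicate points you single out---the deterministic \func{nextvar} bookkeeping aligning Table~\ref{tab:ni} with $\vie$, and the field-lookup and \func{bindCall} cases of the call rule---are indeed where the bulk of the mechanised development is spent.
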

\off{\begin{theorem}[Preservation of Base Types]\label{th:preservation} %
Consider an original program $P_\textrm{orig}$ %
such that  $\func{\rccc{}Analysis}(P_\textrm{orig}) = (P,\_,P^+)$.
If $\func{strip}(P_\textrm{orig})$ is typeable with type $t$, then the final (\oolong) program $P^+$ is also typeable with a type that is a compound of $t$.
\end{theorem}
}

The Consistency of Types and Atomicity Inference theorem states that the program transformation performed by the AtomiS analysis produces
programs whose field and signature types are consistent with those of the original \rccc annotations.  More specifically, the field types of the final program are a compound of the atomicity and types of the fields given by the original program to fields with the same name in corresponding classes, and the signature types of methods in the final program are a compound of the atomicity and types of the methods given by the original program to methods with the corresponding name in corresponding classes.
\begin{theorem}[Consistency of Types and Atomicity Inference] \label{th:consistency}
Consider an original program $P_\textrm{orig}$, that is annotated according to the \rccc model such that  $\func{\rccc{}Analysis}(P_\textrm{orig}) = (P,\sol,P^+)$.
The types of the final program $P^+$, of its signatures and of its fields are generated from a compound version of those in $P$ that is consistent with $P_\textrm{orig}$'s atomicity annotations, i.e.:
    \begin{enumerate}
    \item for all $C^+,f,t^+$ such that $\ottkw{fields}_{P^+}(C^+)(f) = t^+$, there exist $C,\nu,t$ such that $C^+=\ootype(\nu,C)$, and either
    \begin{enumerate}
        \item $\ottkw{fields}_{P_\textrm{orig}}(C)(f) = t$ and $t^+=\ootype(\natomick,t)$, or
        \item  $\ottkw{fields}_{P_\textrm{orig}}(C)(f) = \atomicS~t$ and $t^+=\ootype(\atomick,t)$;
    \end{enumerate}%
    \item for all $t^+,m^+,t_1^+,t_2^+$ s.t.  $\ottkw{msigs}_{P^+}(t^+)(m^+) = x : t_1^+ \rightarrow t_2^+$, there exist $t, m, q_1, t_1, q_2, t_2, \nu$ s.t. $\ottkw{msigs}_{P_\textrm{orig}}(t)(m) =  x : q_1~t_1 \rightarrow q_2~t_2$ and $t^+=\ootype(\nu,t)$, with $t_1^+=$ $\ootype(\func{atom}(q_1),t_1)$, and $t_2^+=\ootype(\func{atom}(q_2),t_2)$ and also $m^+ = \oomn(m,\func{atom}(q_1),\func{atom}(q_2))$.
    \end{enumerate}
\end{theorem}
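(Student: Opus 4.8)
The plan is to prove both items by unfolding the code-generation function $\vi{P}$ of Table~\ref{tab:rc3_variant_injection} and tracing how each field type and each method signature of $P^+$ is produced from the corresponding declaration of $P_\textrm{orig}$; since $\func{strip}$ and $\preprocess$ leave field and signature base types untouched, it is equivalent to trace back to $P$. The structural observation that makes this tractable is that the declared types of $P^+$ are fixed entirely by the syntax-directed, non-recursive clauses for $\vi{I}$, $\vi{C}$, $\vifield$ and $\vi{Md}$: the only genuinely recursive translation, $\vie$, rewrites method \emph{bodies} and never alters a declared field type or signature. Hence no induction on expressions is required, and both items reduce to a finite case analysis over the generating clauses.

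For item 1, I would first note that every class of $P^+$ is emitted by a single clause of $\vi{C}$: from a source class declaration for $C$ (implementing some $I$, with field list $\Fds$) and each $\nu \in \{\atomick,\natomick\}$ it produces the class $\ootype(\nu,C)$. Thus any $C^+$ with $\ottkw{fields}_{P^+}(C^+)(f)=t^+$ is of the form $\ootype(\nu,C)$, which supplies the witnesses $\nu$ and $C$. The field $f$ of that class is $\vifield(C,\nu,\Fd)$, where $\Fd$ is the source declaration of $f$ in $C$; unfolding $\vifield$ gives $f:\ootype(\atomick,t)$ or $f:\ootype(\natomick,t)$ according to the value of $\func{fatom}(C,\nu,f)$, with $t$ the base type of $\Fd$. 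Matching the shape of $\Fd$ (namely $f:t$ versus $f:\atomicS t$) against $\func{fatom}$ then yields the two alternatives (a) and (b).

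For item 2, I would split on the origin of the signature. Interface signatures are emitted by $\vi{I}$, which for each source signature with annotations $\Sas$ produces, via $\any$, one entry $\oosig(\func{atom}(q_1),\func{atom}(q_2),m(x:t_1):t_2)$ per $\atan{q_1}{q_2}\in\Sas$, placed inside $\ootype(\nu,I)$. Reading off the definitions of $\oosig$ and $\oomn$ gives exactly $t_1^+=\ootype(\func{atom}(q_1),t_1)$, $t_2^+=\ootype(\func{atom}(q_2),t_2)$ and $m^+=\oomn(m,\func{atom}(q_1),\func{atom}(q_2))$, with witness $\ottkw{msigs}_{P_\textrm{orig}}(I)(m)=x:q_1\,t_1\rightarrow q_2\,t_2$ and $t^+=\ootype(\nu,I)$. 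Class signatures are emitted by $\vi{Md}$, which for each valid variant $(\nu_1,\nu_2)$, i.e. $\sol(\valvar(\nu.D.m.\nu_1.\nu_2))=\validk$, produces $\oosig(\nu_1,\nu_2,m(x:t_1):t_2)$; the same read-off yields the required compound form, and the existence of a matching annotated source signature is furnished by the Stage-3 precondition $\consistent(\sol,P)=\truek$, which guarantees that each valid class variant is covered by an interface annotation $\atan{q_1}{q_2}$ with $\func{atom}(q_1)=\nu_1$ and $\func{atom}(q_2)=\nu_2$.

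I expect the main obstacle to lie in item~1, in reconciling $\func{fatom}$ with the clean dichotomy (a)/(b). The function $\func{fatom}(C,\nu,f)$ returns $\atomick$ not only for $\atomicS$-qualified fields but also for an unqualified field whose type is the enclosing class itself when $\nu=\atomick$ (the ``same-type field inherits the host's atomicity'' behaviour of \S\ref{sec:atomic_types_and_method_variants}). To push item~1 through as stated one must either argue that this second clause cannot apply to the fields under consideration, or refine (a) to record that such a same-type unqualified field is compiled to $\ootype(\atomick,t)$ precisely when the host object is atomic. A secondary point requiring care is the exact meaning of $\ottkw{msigs}_{P_\textrm{orig}}$ on class (rather than interface) types, so that the existential witnesses $q_1,q_2$ in item~2 are well defined for class variants that exceed the interface's demands; it is exactly the Stage-3 invariant $\consistent(\sol,P)=\truek$ that closes this gap.
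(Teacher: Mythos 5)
Your route is genuinely different from the paper's: the paper does not give a pen-and-paper proof of this theorem at all, but obtains it (together with Theorem~\ref{th:preservation}) from the Coq-mechanised qualified-type-preservation result over a refined type system whose types are pairs of atomicity qualifiers and base types (\S\ref{sec:mech}); your finite case analysis over the generating clauses of $\vi{I}$, $\vi{C}$, $\vifield$ and $\vi{Md}$ is the natural elementary alternative for a statement that only concerns declared fields and signatures of $P^+$, and your interface case of item~2 is correct as written. You also correctly isolated the real obstacle in item~1: $\func{fatom}(C,\nu,f)=\atomick$ also fires for an \emph{unqualified} field whose type is the host class itself when $\nu=\atomick$ (e.g.\ the generated $\ootype(\atomick,\texttt{Node})$ has $\texttt{next}:\ootype(\atomick,\texttt{Node})$ while $P_\textrm{orig}$ declares $\texttt{next}:\texttt{Node}$ bare), so the literal (a)/(b) dichotomy fails there. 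But your proposal only names two possible repairs without committing to either, so item~1 is not actually discharged as stated; the mechanised statement, which carries the $\func{fatom}$ condition explicitly through pair types, is what closes this in the paper.

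The clear-cut error is in your item-2 class case: you invoke Stage~3 in the wrong direction. $\consistent(\sol,P)=\truek$ asserts $\bigcup_{\IMsig\in\func{msigs}(I)}\any(\IMsig)\subseteq\bigcup_{\Msig\in\func{msigs}(C)}\variantmsigs(\sol,\nu,C,\Msig)$, i.e.\ every interface-annotated signature is realised by some valid variant --- \emph{not} that every valid variant is covered by an annotation $\atan{q_1}{q_2}$. Since $\vi{Md}$ emits a definition for \emph{every} $(\nu_1,\nu_2)$ with $\sol(\valvar(\nu.D.m.\nu_1.\nu_2))=\validk$, the final program can contain class signatures for valid variants that exceed the interface's demands, and even for class methods that appear in no interface at all (OOlong's \textsc{wf\_class} forces the interface's msigs to be implemented but does not forbid extra methods in $\Mds$); for those, your existential witnesses $q_1,q_2$ have no source, because class signatures in $P_\textrm{orig}$ carry no qualifiers. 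To close this you would need either the reverse inclusion (which Stage~3 does not provide), or a reading of $\ottkw{msigs}_{P_\textrm{orig}}$ in which the qualifiers are supplied by the inferred variant itself --- which is, in effect, what the paper's refined pair-type system does in Coq.
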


\off{
\begin{theorem}[Consistency of Types with Atomicity Annotations]\label{th:consistency}
Consider an original program $P_\textrm{orig}$, that is annotated according to the \rccc model such that\linebreak  $\func{\rccc{}Analysis}(P_\textrm{orig}) = (P,\_,P^+)$.
The field and signature's types of the final program $P^+$ are consistent with those in $P_\textrm{orig}$, i.e.:
\begin{enumerate}
    \item types given by $\ottkw{fields}_{P^+}$ to fields in $P^+$ are a compound of the atomicity and types of the fields given by $\ottkw{fields}_{P_\textrm{orig}}$ to fields with the same name in corresponding classes of $P_\textrm{orig}$.
    \item types given by $\ottkw{msigs}_{P^+}$ to methods in $P^+$ are a compound of the atomicity and types of the methods given by $\ottkw{msigs}_{P_\textrm{orig}}$ to methods with the corresponding name in corresponding classes of $P_\textrm{orig}$.
\end{enumerate}
\end{theorem}
}

Note that the Progress and Preservation results that hold for the \oolong language and type system ensure that \emph{the output of the analysis never goes wrong} in what regards both base types and atomicities.

\subsubsection{Proof Mechanisation}
\label{sec:mech}

The proofs are based on a refinement of the \oolong type system, that represents an internal step of the analysis, where types are formalised as pairs of atomicity qualifiers and base types.
The definition of the source language and typing system largely 
reuses code in ~\cite{oolong-ACR2019}, while we narrowed programs %
in order to avoid repetitions: 

\begin{lstlisting}[language=coq,basicstyle=\ttfamily\footnotesize,numbers=none,breaklines=true]
Definition program := ( { cds : list classDecl | NoDup (map pclassName cds) /\ Forall NoDupMethods cds } * { ids : list interfaceDecl | NoDup (map pinterfaceName ids) } * expr) %
\end{lstlisting}

The constraint generator in Table~\ref{tab:ni} has been implemented as an inductive type, denoted \texttt {programConstraints}, mimicking a partial function: the definition relates the program and the variables received in input to a domain of Variant IDs and a partial map produced in output;  the map, or 
\textit{ID  environment} (denoted type \texttt {muvarEnv} ), associates  the domain's IDs to pairs $(\nes,\mns)$, while variables are used to implement the ``freshness'' mechanism required by \nextvar: 

\begin{lstlisting}[language=coq,basicstyle=\ttfamily\footnotesize,numbers=none,breaklines=true]
Check programConstraints: program -> list hatVar -> list nat -> MethodVarSystems -> muvarEnv -> Prop
\end{lstlisting}

The domain and the ID environment are passed to \texttt{variantConstraints} in order to produce the constraint system to be passed to the solver, which in turn relies on \texttt {vvMcs} to analyse all possible atomicity combinations  $(v_1,v_2,v_3)$ and generate entries of the form $\valvar(v_1.t.m.v_2.v_3)= \validk \rightarrow \eta$, where $t$  and $m$ are the (cast of the) type and the method name of \texttt{muv}, respectively, and $\eta$ is a conjunction of pairs qualified with $\mu$:

\begin{lstlisting}[language=coq,basicstyle=\ttfamily\footnotesize,numbers=none,breaklines=true]
Definition variantConstraints (valVar : valVarT) (domain : MethodVarSystems) (mue : muvarEnv) : list NatureConstraintSystems := map (fun muv => match mue muv with | Some (acs, vns) => vvMcs valVar muv acs vns | _ => nil end) domain.    
\end{lstlisting}

The Coq formulation of type preservation is stated below, where for short we omit some (minor) context hypotheses. It ensures that if $P$ has type $t$ (that is the type of the main expression), then the  generated program $P'$ has type (corresponding to the non-atomic version of) $t$, where \texttt{valVar} is a partial map from  IDs to variables, and \texttt{fqmMain} is the ID reserved for the main:
       
\begin{lstlisting}[language=coq,basicstyle=\ttfamily\footnotesize,numbers=none,breaklines=true]
Check T_preserve:forall..., wfProgram P t ->...->programConstraints P fv sn domain mue ->makeV P = (domV, valVar) ->variantConstraints valVar domain mue = acs ->solver (concat acs) = sol ->valVar fqmMain = Some valvar ->sol valvar = valid ->viP P sol valVar fv = Some P' ->wfProgramOO P' (setNonAtomic (castT t))
\end{lstlisting}

\subsection{Behavioural Soundness}
\label{sec:bis}
We have seen that our approach performs a program transformation that consistently fleshes out the atomicity qualifiers of every type in the program, while unfolding classes and methods according to their determined qualified types. It remains to assert that the transformation does not affect the original behaviour, \ie that the final program does everything the original one does, and nothing more, according to a notion of indistinguishability. Intuitively, we wish to consider the correspondence between types and method names occurring in the final program, and those from which they originated in the original code.

Bisimulations are often used to relate pairs of concurrent programs that exhibit the same behaviour according to some criteria, step-by-step, by establishing a full correspondence between the possible outcomes of both programs in such a way that they will also simulate each other. In our case we need a relation between thread collections that is based on the notion of heap correspondence. To establish it, we define \emph{syntactic correspondences} between 
\emph{thread collections} and between 
\emph{program contexts}, understood as the list of interfaces and classes of a program.
These correspondences are defined as pairs of
mappings between types and between method names that are used in programs and configurations:
\begin{definition}[Expression and Thread Collection Correspondence] Given a pair

$\kw{MAP} = (\kw{methodMap},\kw{typeMap})$ of maps, where\newline $\kw{methodMap}:\methodN\rightarrow\methodN$
and $\kw{typeMap}:\Type\rightarrow\Type$, and two program contexts $PC_1$ and $PC_2$,
we say that there is a syntactic correspondence between:
\begin{enumerate}
\item expressions $e_1$ and $e_2$, well-formed with respect to $PC_1$ and $PC_2$ respectively, if $\vdash_{\text{MAP}}^{\text{PC}_1,\text{PC}_2}  \ottnt{e_1} \propto \ottnt{e_2}$
    \item 
thread collections $T_1$ and $T_2$, well-formed with respect to $PC_1$ and $PC_2$ respectively, if $\vdash_{\text{MAP}}^{\text{PC}_1,\text{PC}_2}  \ottnt{T_1} \propto \ottnt{T_2}$
\end{enumerate}
according to the rules in Table~\ref{fig:type-expr}.
\end{definition}

\newcommand{\ottdrulewfSCprogramContext}[1]{
\ottdrule[#1]{
\ottpremise{PC_1 = \ottnt{Ids_1} ~ \ottnt{Cds_1} ~ \text{[]}, PC_2 = \ottnt{Ids_2} ~  \ottnt{Cds_2} ~ \text{[]}
}
\ottpremise{\exists \text{MAP}=\text{(methodMap, typeMap) such that:}
}
\ottpremise{\qquad\text{methodMap() from sets of method names in Ids2 U Cds2 to method names in Ids1 U Cds1}
}
\ottpremise{\qquad\text{typeMap() from sets of types in PC2 to types in PC1}
}
\ottpremise{
	Ids_1 = \bigcup_{Id_2 \in Ids_2} ~ \{Id_1 ~|~ 
	\vdash_{\text{MAP}}^{\text{PC}_1,\text{PC}_2}  \ottnt{Id_1} \propto \ottnt{Id_2}\}
}
\ottpremise{
	Cds_1 = \bigcup_{Cd_2 \in Cds_2} ~ \{Cd_1 ~|~ 
	\vdash_{\text{MAP}}^{\text{PC}_1,\text{PC}_2}  \ottnt{Cd_1}  \propto \ottnt{Cd_2}\}
}
}%
{
	\vdash_{\text{MAP}}  PC_1 \propto PC_2 ~ \text{[]}
}
{\ottdrulename{sc\_program\_context}}{}
}%

\newcommand{\ottdrulewfSCprogram}[1]{
\ottdrule[#1]{%
\ottpremise{\vdash_{\text{MAP}}  \ottnt{Ids_1} ~ \ottnt{Cds_1} ~ \text{[]} \propto \ottnt{Ids_2} ~ \ottnt{Cds_2} ~ \text{[]}}
	\ottpremise{
    \vdash_{\text{MAP}}^{\text{PC}_1,\text{PC}_2}  \ottnt{e_1} \propto \ottnt{e_2}
	}
}%
{    \vdash_{\text{MAP}}^{\text{PC}_1,\text{PC}_2}  \ottnt{Ids_1} ~ \ottnt{Cds_1} ~ \ottnt{e_1} \propto \ottnt{Ids_2} ~ \ottnt{Cds_2} ~ \ottnt{e_2}
}  
{\ottdrulename{sc\_program}}{}%
}

\newcommand{\ottdefnwfSCProgramContext}[1]{
\begin{ottdefnblock}[#1]{$\vdash  \ottnt{PC}_1 \propto \ottnt{PC}_2$}{}
\ottusedrule{\ottdrulewfSCprogramContext{}}
\end{ottdefnblock}
}

\newcommand{\ottdefnwfSCProgram}[1]{
\begin{ottdefnblock}[#1]{$\vdash \ottnt{P_1} \propto \ottnt{P_2}$}{}
\ottusedrule{\ottdrulewfSCprogram{}}
\end{ottdefnblock}
}

\newcommand{\ottdrulewfSCinterface}[1]{\ottdrule[#1]{%
\ottpremise{
\ottnt{Msigs_1} = \bigcup_{\forall \, \ottmv{m_2}  \ottsym{(}  \ottmv{x_2}  \ottsym{:}  \ottnt{s_2}  \ottsym{)}  \ottsym{:}  \ottnt{s_2'} \, \!\!\in\! \, 
\ottnt{Msigs_2}} 
~\{\text{methodMap}(\ottmv{m_2})  \ottsym{(} \ottmv{x_2}  \ottsym{:}  \text{typeMap}(\ottnt{s_2})  \ottsym{)}  \ottsym{:}  \text{typeMap}(\ottnt{s_2'}) \, \!\!\in\! \, 
\ottnt{Msigs_1}\} 
}
}{
\vdash_{\text{MAP}}^{\text{PC}_1,\text{PC}_2}  \kw{interface} \, \text{typeMap}(I_2)  \,\startblock\, \ottnt{Msigs_1} \,\finishblock\, \propto \kw{interface} \, I_2 \,\startblock\, \ottnt{Msigs_2} \,\finishblock\,}{%
{\ottdrulename{sc\_interface}}{}%
}
}

\newcommand{\ottdrulewfSCinterfaceORextends}[1]{\ottdrule[#1]{%
}{
\vdash_{\text{MAP}}^{\text{PC}_1,\text{PC}_2}  \kw{interface} \, \text{typeMap}(I_2) \, \kw{extends} \, \text{typeMap}(I_2')  \ottsym{,}  \text{typeMap}(\ottmv{I_2}'')
\propto
\kw{interface} \, I_2 \, \kw{extends} \, I_2'  \ottsym{,}  \ottmv{I_2''}}{%
{\ottdrulename{sc\_interface\_extends}}{}%
}
}

\newcommand{\ottdefnwfSCInterface}[1]{
\begin{ottdefnblock}[#1]{$\vdash  \ottnt{Id_1} \propto \ottnt{Id_2}$}{}
\ottusedrule{\ottdrulewfSCinterface{}}\\[5pt]
\ottusedrule{\ottdrulewfSCinterfaceORextends{}}
\end{ottdefnblock}
}

\newcommand{\ottdrulewfSCclass}[1]{\ottdrule[#1]{%
\ottpremise{ \forall \, \ottmv{m_2}  \ottsym{(}  \ottmv{x_2}  \ottsym{:}  \ottnt{s_2}  \ottsym{)}  \ottsym{:}  \ottnt{s_2'} \, \!\!\in\! \, \ottkw{msigs}_2 \, \ottsym{(}  \ottmv{J_2}  \ottsym{)}
~\ottsym{.} ~ 
 \kwool{def} \, m_2  \ottsym{(}  \ottmv{x_2}  \ottsym{:}  s_2  \ottsym{)}  \ottsym{:}  s_2'\, \,\startblock\, e_2 \,\finishblock\, \!\!\in\! \, \ottnt{Mds}_2 \ottlinebreakhack }
\ottpremise{\ottnt{Fds}_1 = \bigcup_{\ottnt{Fd}_2 \, \!\!\in\! \, \ottnt{Fds}_2}  ~ \{  \ottnt{Fd}_1 ~|~  \vdash_{\text{MAP}}^{\text{PC}_1,\text{PC}_2} \ottnt{Fd}_1 \propto \ottnt{Fd}_2 \}}
\ottpremise{\ottnt{Mds}_1 = \bigcup_{\ottnt{Md}_2 \, \!\!\in\! \, \ottnt{Mds}_2}  ~ \{  \ottnt{Md}_1 ~|~   \kwool{this}  \ottsym{:}  C_2 \vdash_{\text{MAP}}^{\text{PC}_1,\text{PC}_2} \ottnt{Md}_1 \propto \ottnt{Md}_2 \}}
}{
\vdash_{\text{MAP}}^{\text{PC}_1,\text{PC}_2}  \kwool{class} \, \text{typeMap}(C_2) \, \kwool{implements} \,  \text{typeMap}(I_2) \,\startblock\,  \ottnt{Fds}_1 \, \mathit{Mds}_1  \,\finishblock\, 
\propto
\kwool{class} \, C_2 \, \kwool{implements} \,  I_2 \,\startblock\,  \ottnt{Fds_2} \, \mathit{Mds_2}  \,\finishblock\,  }{
{\ottdrulename{sc\_class}}{}%
}
}

\newcommand{\ottdefnwfSCClass}[1]{
\begin{ottdefnblock}[#1]{$\vdash  \ottnt{Cd_1} \propto \ottnt{Cd_2}$}{}
\ottusedrule{\ottdrulewfSCclass{}}
\end{ottdefnblock}
}

\newcommand{\ottdrulewfSCfield}[1]{\ottdrule[#1]{%
}{
\vdash_{\text{MAP}}^{\text{PC}_1,\text{PC}_2}  \ottmv{f_2}  \ottsym{:}  \text{typeMap}(s) \propto f_2 \ottsym{:} s}{%
{\ottdrulename{sc\_field}}{}%
}}

\newcommand{\ottdefnwfSCField}[1]{\begin{ottdefnblock}[#1]{$\vdash_{\text{MAP}}^{\text{PC}_1,\text{PC}_2}  \ottnt{Fd}$}{}
\ottusedrule{\ottdrulewfSCfield{}}
\end{ottdefnblock}}

\newcommand{\ottdrulewfSCmethod}[1]{\ottdrule[#1]{%
 \vdash_{\text{MAP}}^{\text{PC}_1,\text{PC}_2}  \ottnt{e_1}  %
 \propto 
 \ottnt{e_2}  %
  }{
\vdash_{\text{MAP}}^{\text{PC}_1,\text{PC}_2}  \kwool{def} \, \text{methodMap}(m_2)  \ottsym{(}  \ottmv{x_2} \ottsym{)}  \ottsym{:}  \text{typeMap}(s_2)     \,\startblock\,  \ottnt{e_1} \,\finishblock\, \ottsym{:}  \text{typeMap}(s_2')
\propto
 \kwool{def} \, m_2 \ottsym{(}\ottmv{x_2}\ottsym{)}  \ottsym{:}  s_2   \,\startblock\,  \ottnt{e_2} \,\finishblock\, %
}{%
{\ottdrulename{sc\_method}}{}%
}
}

\newcommand{\ottdefnwfSCMethod}[1]{\begin{ottdefnblock}[#1]{ $%
\vdash_{\text{MAP}}^{\text{PC}_1,\text{PC}_2} \Md_1 \propto \Md_2$}{}
\ottusedrule{ \ottdrulewfSCmethod{}
}
\end{ottdefnblock}}

\off{
\begin{table*}
\caption{Syntactic Correspondence for programs.}  %
\label{fig:type-classesandinterfaces}
\begin{scriptsize}
\hrule
\ottdefnwfSCProgramContext{}\\[7pt]
\ottdefnwfSCProgram{}\\[7pt]
\ottdefnwfSCInterface{}\\[7pt]
\ottdefnwfSCClass{}\\[7pt]
\ottdefnwfSCField{}\\[7pt]
\ottdefnwfSCMethod{}\\[7pt]
\end{scriptsize}
\end{table*}
}

\newcommand{\ottdrulewfSCConfiguration}[1]{\ottdrule[#1]{%
 \vdash_{\text{MAP}}^{\text{PC}_1,\text{PC}_2}  \ottnt{H_1}  \propto \ottnt{H_2} 
 \qquad
 \vdash_{\text{MAP}}^{\text{PC}_1,\text{PC}_2}  \ottnt{T_1}  \propto \ottnt{T_2} 
  }{
\vdash_{\text{MAP}}^{\text{PC}_1,\text{PC}_2}  
 \cfg{H_1}{V}{T_1}
\propto
 \cfg{H_2}{V}{T_2}
}{%
{\ottdrulename{sc\_configuration}}{}%
}
}

\newcommand{\ottdefnwfSCConfiguration}[1]{\begin{ottdefnblock}[#1]{ $\Gamma \vdash_{\text{MAP}}^{\text{PC}_1,\text{PC}_2} \Md_1 \propto \Md_2$}{}
\ottusedrule{ \ottdrulewfSCConfiguration{}
}
\end{ottdefnblock}}

\newcommand{\ottdrulewfSCHeap}[1]{\ottdrule[#1]{%
 \dom(H_1)=\dom(H_2) 
 \qquad
 \forall \iota \in \dom(H_1) ~.~
H_2(\iota)=(C_2,F,L) \Leftrightarrow H_1(\iota)=(\text{typeMap}(C_2),F,L)

  }{
\vdash_{\text{MAP}}^{\text{PC}_1,\text{PC}_2}  
H_1
\propto
H_2
}{%
{\ottdrulename{sc\_heap}}{}%
}
}

\newcommand{\ottdefnwfSCHeap}[1]{\begin{ottdefnblock}[#1]{ $\Gamma \vdash_{\text{MAP}}^{\text{PC}_1,\text{PC}_2} H_1 \propto H_2$}{}
\ottusedrule{ \ottdrulewfSCHeap{}
}
\end{ottdefnblock}}

\newcommand{\ottdrulewfSCAsync}[1]{\ottdrule[#1]{%
\vdash_{\text{MAP}}^{\text{PC}_1,\text{PC}_2}  
\ottnt{T_{{\mathrm{1}}}}
\propto
\ottnt{T_{{\mathrm{2}}}} 
\qquad
\vdash_{\text{MAP}}^{\text{PC}_1,\text{PC}_2}  
\ottnt{T_{{\mathrm{1}}}}'
\propto
\ottnt{T_{{\mathrm{2}}}}'
\qquad
\vdash_{\text{MAP}}^{\text{PC}_1,\text{PC}_2}  
\ottnt{e_{{\mathrm{1}}}}
\propto
\ottnt{e_{{\mathrm{2}}}} 
  }{
\vdash_{\text{MAP}}^{\text{PC}_1,\text{PC}_2}  
\ottnt{T_{{\mathrm{1}}}}  \mathop{||}  \ottnt{T_{{\mathrm{1}}}}'  \rhd  \ottnt{e_1}
\propto
\ottnt{T_{{\mathrm{2}}}}  \mathop{||}  \ottnt{T_{{\mathrm{2}}}}'  \rhd  \ottnt{e_2}
}{%
{\ottdrulename{sc\_async}}{}%
}
}

\newcommand{\ottdefnwfSCAsync}[1]{\begin{ottdefnblock}[#1]{ $\Gamma \vdash_{\text{MAP}}^{\text{PC}_1,\text{PC}_2} T_1 \propto T_2$}{}
\ottusedrule{ \ottdrulewfSCAsync{}}
\ottusedrule{ \ottdrulewfSCException{} \qquad
\ottdrulewfSCThread{}
}
\end{ottdefnblock}}

\newcommand{\ottdrulewfSCThread}[1]{\ottdrule[#1]{%
\vdash_{\text{MAP}}^{\text{PC}_1,\text{PC}_2}  
\ottnt{e_{{\mathrm{1}}}}
\propto
\ottnt{e_{{\mathrm{2}}}} 
  }{
\vdash_{\text{MAP}}^{\text{PC}_1,\text{PC}_2}  
\ottsym{(}  \mathcal{L}  \ottsym{,}  \ottnt{e_1}  \ottsym{)} 
\propto
\ottsym{(}  \mathcal{L}  \ottsym{,}  \ottnt{e_2}  \ottsym{)}
}{%
{\ottdrulename{sc\_thread}}{}%
}
}

\newcommand{\ottdrulewfSCException}[1]{\ottdrule[#1]{%
}{
\vdash_{\text{MAP}}^{\text{PC}_1,\text{PC}_2}  
\textbf{EXN}
\propto
\textbf{EXN}
}{%
{\ottdrulename{sc\_exn}}{}%
}
}

\off{
\begin{table*}
\caption{Syntactic Correspondence for configurations.}  %
\label{fig:type-classesandinterfaces}
\begin{scriptsize}
\hrule
\ottdefnwfSCConfiguration{}\\[7pt]
\ottdefnwfSCHeap{}\\[7pt]
\ottdefnwfSCAsync{}\\[7pt]
\end{scriptsize}
\end{table*}
}

\newcommand{\ottdrulewfSCvar}[1]{\ottdrule[#1]{%
}{
	\vdash_{\text{MAP}}^{\text{PC}_1,\text{PC}_2}  \ottmv{x}  %
	\propto \ottmv{x}  %
}{%
{\ottdrulename{sc\_var}}{}%
}}

\newcommand{\ottdrulewfSCvarBase}[1]{\ottdrule[#1]{%
\ottpremise{ \vdash  \Gamma  \!\!\! }%
\ottpremise{ \Gamma  \ottsym{(}  \ottmv{x}  \ottsym{)}  \ottsym{=}  \ottnt{t}  \!\!\! }%
\ottpremise{ \regularV(\ottnt{t})=\intk  \!\!\! }%
}{
\Gamma  \vdash  \ottmv{x}  \ottsym{:}  \ottmv{(t,\bot)} \outputeqs{\emptyset}}{%
{\ottdrulename{sc\_varbase}}{}%
}}

\newcommand{\ottdrulewfSCbase}[1]{\ottdrule[#1]{%
\ottpremise{ \vdash  \Gamma  \!\!\! }%
\ottpremise{ n \in\intk\!\!\! }%
}{
\Gamma  \vdash  \ottmv{n}  \ottsym{:}  \ottmv{(\intk,\bot)} \outputeqs{\emptyset}}{%
{\ottdrulename{sc\_base}}{}%
}}

\newcommand{\ottdrulewfSCintOp}[1]{\ottdrule[#1]{%
\ottpremise{ \Gamma  \vdash  \ottnt{e_{{\mathrm{1}}}}  \ottsym{:}  
\ottnt{(t_{{\mathrm{1}}}, \bot)} %
\outputeqs{\nes_1}}%
\ottpremise{ \Gamma  \vdash  \ottnt{e_{{\mathrm{2}}}}  \ottsym{:}  
\ottnt{(t_{\mathrm{2}}, \bot )} %
\outputeqs{\nes_2} }
\ottpremise{ \regularV(t_{\mathrm{1}}) = \intk}
\ottpremise{ \regularV(t_{\mathrm{2}}) = \intk}
\ottpremise{\textsf{op}\in \{ +, *, -, /\}}
}{
\Gamma  \vdash  \ottmv{e_{{\mathrm{1}}}\, \textsf{op}\, e_{{\mathrm{2}}}}\  \ottsym{:}  \ottmv{(\intk, \bot)} \outputeqs{\nes_1 \cup \nes_2}}{%
{\ottdrulename{sc\_op}}{}%
}}

\newcommand{\ottdrulewfSClet}[1]{
\ottdrule[#1]{%
	\vdash_{\text{MAP}}^{\text{PC}_1,\text{PC}_2}  \ottnt{e_{{\mathrm{1}}}}  %
	\propto \ottnt{e_{{\mathrm{2}}}}  %
	\quad
	\vdash_{\text{MAP}}^{\text{PC}_1,\text{PC}_2}  \ottnt{e_1'}  %
	\propto
	\ottnt{e_2'}  %
}{
	\vdash_{\text{MAP}}^{\text{PC}_1,\text{PC}_2}   \kwool{let}\,   \ottmv{x} :\text{typeMap}(t_2)  
	=  \ottnt{e_{{\mathrm{1}}}} ~\kwool{in}~ \ottnt{e_1'}  %
	\propto
	\kwool{let}\,  \ottmv{x}:t_2 %
	=  \ottnt{e_{{\mathrm{2}}}} ~\kwool{in}~ \ottnt{e_2'}   %
}{%
{\ottdrulename{sc\_let}}{}%
}}

\newcommand{\ottdrulewfSCcall}[1]{\ottdrule[#1]{%
  \vdash_{\text{MAP}}^{\text{PC}_1,\text{PC}_2}  \ottnt{e_1}  %
  \propto \ottnt{e_2}  %
	\\
}{
	 \vdash_{\text{MAP}}^{\text{PC}_1,\text{PC}_2}  
	 x.\text{methodMap}(m_2)  \ottsym{(}  \ottnt{e_1}  \ottsym{)}
	 \propto
	 x.m_2  \ottsym{(}  \ottnt{e_2}  \ottsym{)}  %
 }{%
{\ottdrulename{sc\_call}}{}%
}}

\newcommand{\ottdrulewfSCcast}[1]{\ottdrule[#1]{%
	\vdash_{\text{MAP}}^{\text{PC}_1,\text{PC}_2}  \ottnt{e_1}  %
	\propto \ottnt{e_2} %
}{
 \vdash_{\text{MAP}}^{\text{PC}_1,\text{PC}_2}  
 \ottsym{(} \text{typeMap}(t_2) \ottsym{)}  \ottnt{e_1} 
 \propto
 \ottsym{(}  t_2  \ottsym{)}  \ottnt{e_2}  %
}{%
{\ottdrulename{sc\_cast}}{}%
}}

\newcommand{\ottdrulewfSCselect}[1]{\ottdrule[#1]{%
}{
	\vdash_{\text{MAP}}^{\text{PC}_1,\text{PC}_2}  \ottmv{x}  \ottsym{.}  \ottmv{f}  %
	\propto
	\ottmv{x}  \ottsym{.}  \ottmv{f}  %
}{%
{\ottdrulename{sc\_select}}{}%
}}

\newcommand{\ottdrulewfSCloc}[1]{\ottdrule[#1]{%
}{
	\Gamma  \vdash_{\text{MAP}}^{\text{PC}_1,\text{PC}_2}  \iota  %
	\propto
	\iota  %
}{%
{\ottdrulename{sc\_loc}}{}%
}}

\newcommand{\ottdrulewfSCsequence}[1]{\ottdrule[#1]{%
\ottpremise{ \Gamma  \vdash  \ottnt{e_{{\mathrm{1}}}}  \ottsym{:}  
\ottnt{(\kwool{Unit}}, \bot)
\outputeqs{\nes_1}}%
\ottpremise{ \Gamma  \vdash  \ottnt{e_{{\mathrm{2}}}}  \ottsym{:}  
\ottnt{(t, \eta)} 
\outputeqs{\nes_2} }
}{
\Gamma  \vdash  \ottmv{e_{{\mathrm{1}}}\, \ottsym{;}\,  e_{{\mathrm{2}}}}\  \ottsym{:}  \ottmv{(t, \eta)} \outputeqs{\nes_1 \cup \nes_2}}{%
{\ottdrulename{sc\_seq}}{}%
}}

\newcommand{\ottdrulewfSCvoid}[1]{\ottdrule[#1]{%
\ottpremise{\vdash  \Gamma}%
}{
\Gamma  \vdash  \kwool{void}  \ottsym{:}  \ottnt{(\kwool{Unit},\bot)} \outputeqs{\emptyset}}{%
{\ottdrulename{sc\_void}}{}%
}}

\newcommand{\ottdrulewfSCnull}[1]{\ottdrule[#1]{%
}{
\vdash_{\text{MAP}}^{\text{PC}_1,\text{PC}_2}  \kwool{null}  %
\propto
\kwool{null}  %
}{%
{\ottdrulename{sc\_null}}{}%
}}

\newcommand{\ottdrulewfSCupdate}[1]{\ottdrule[#1]{%
	\vdash_{\text{MAP}}^{\text{PC}_1,\text{PC}_2}  \ottnt{e_1}  %
	\propto
	\ottnt{e_2}  %
}{
\Gamma  \vdash_{\text{MAP}}^{\text{PC}_1,\text{PC}_2}  \ottmv{x}  \ottsym{.}  \ottmv{f}  \ottsym{=}  \ottnt{e_1} %
\propto
 \ottmv{x}  \ottsym{.}  \ottmv{f}  \ottsym{=}  \ottnt{e_2} %
}{%
{\ottdrulename{sc\_update}}{}%
}}

\newcommand{\ottdrulewfSCnew}[1]{\ottdrule[#1]{%
}{
	\vdash_{\text{MAP}}^{\text{PC}_1,\text{PC}_2}  
	\kwool{new} \, \text{typeMap}(C_2) %
	\propto
	\kwool{new} \, C_2 %
}{%
	{\ottdrulename{sc\_new}}{}%
}}

\newcommand{\ottdrulewfSCfj}[1]{\ottdrule[#1]{%
\ottpremise{%
\vdash_{\text{MAP}}^{\text{PC}_1,\text{PC}_2}  \ottnt{e_{{\mathrm{1}}}}' 
\propto
\ottnt{e_{{\mathrm{2}}}}'
\qquad
\vdash_{\text{MAP}}^{\text{PC}_1,\text{PC}_2}  \ottnt{e_{{\mathrm{1}}}}''
\propto
  \ottnt{e_{{\mathrm{2}}}}''
\qquad%
\vdash  \ottnt{e_1}
\propto e_2
}%
}{
\vdash_{\text{MAP}}^{\text{PC}_1,\text{PC}_2}  \kwool{finish} \, \,\startblock\, \, \kwool{async} \, \,\startblock\,  \ottnt{e_{{\mathrm{1}}}}'  \,\finishblock\, \, \kwool{async} \, \,\startblock\,  \ottnt{e_{{\mathrm{1}}}}''  \,\finishblock\,  \,\finishblock\,  \ottsym{;}  \ottnt{e_1} 
\propto
\kwool{finish} \, \,\startblock\, \, \kwool{async} \, \,\startblock\,  \ottnt{e_{{\mathrm{2}}}}'  \,\finishblock\, \, \kwool{async} \, \,\startblock\,  \ottnt{e_{{\mathrm{2}}}}''  \,\finishblock\,  \,\finishblock\,  \ottsym{;}  \ottnt{e_2} 
}{%
	{\ottdrulename{sc\_spawn}}{}%
}}

\newcommand{\ottdrulewfSClock}[1]{\ottdrule[#1]{%
\ottpremise{\Gamma  \vdash  \ottmv{x}  \ottsym{:}  \ottnt{(t_{{\mathrm{2}}},\eta_{{\mathrm{2}})}} \outputeqs{\nes_{{\mathrm{2}}}}}%
\ottpremise{ \Gamma  \vdash  \ottnt{e}  \ottsym{:}  \ottnt{(t,\eta)} \outputeqs{\nes} \!\!\! }%
}{
\Gamma  \vdash   \kwool{lock} ( \ottmv{x} ) \kwool{\,in\,}  \ottnt{e}   \ottsym{:}  \ottnt{(t,\eta)} \outputeqs{\nes}}{%
{\ottdrulename{sc\_lock}}{}%
}}

\newcommand{\ottdrulewfSClocked}[1]{\ottdrule[#1]{%
\ottpremise{\Gamma  \vdash  \ottnt{e}  \ottsym{:}  \ottnt{(t,\eta)}\outputeqs{\nes}}%
\ottpremise{\Gamma  \ottsym{(}  \iota  \ottsym{)}  \ottsym{=}  \ottnt{t_{{\mathrm{2}}}}}%
}{
\Gamma  \vdash   \kwool{locked}_{ \iota } \{  \ottnt{e}  \}   \ottsym{:}  \ottnt{(t,\eta)}\outputeqs{\nes}}{%
{\ottdrulename{sc\_locked}}{}%
}}

\newcommand{\ottdrulewfSCif}[1]{\ottdrule[#1]{%
\ottpremise{\Gamma  \vdash  \ottnt{e_{{\mathrm{1}}}}  \ottsym{:}  \ottnt{(t_1,\eta_1)} \outputeqs{\nes_1}}%
\ottpremise{\Gamma  \vdash  \ottnt{e_{{\mathrm{2}}}}  \ottsym{:}  \ottnt{(t_{{\mathrm{1}}},\eta_{{\mathrm{2}}})} \outputeqs{\nes_{{\mathrm{2}}}}}%
\ottpremise{\Gamma  \vdash  \ottnt{e_{{\mathrm{3}}}}  \ottsym{:}  \ottnt{(t_{{\mathrm{3}}},\eta_{{\mathrm{3}}})} \outputeqs{\nes_{{\mathrm{3}}}}}%
\ottpremise{\Gamma  \vdash  \ottnt{e_{{\mathrm{4}}}}  \ottsym{:}  \ottnt{(t_{{\mathrm{3}}},\eta_{{\mathrm{4}}})} \outputeqs{\nes_{{\mathrm{4}}}}}%
}{
\Gamma  \vdash   \ifk\ (e_{{\mathrm{1}}} == e_{{\mathrm{2}}})\ e_{{\mathrm{3}}}\ \elsek\ e_{{\mathrm{4}}}   \ottsym{:}  \ottnt{(t_3,\eta_\mathrm{3})} 
\outputeqs{\nes_{{\mathrm{1}}} \cup \nes_{{\mathrm{2}}} \cup \nes_{{\mathrm{3}}} \cup \nes_{{\mathrm{4}}}  \cup \{(\eta_1, \eta_2), (\eta_\mathrm{3},\eta_\mathrm{4}) \}}}{%
{\ottdrulename{sc\_if}}{}%
}}

\newcommand{\ottdrulewfSCseq}[1]{\ottdrule[#1]{%
\ottpremise{\Gamma  \vdash  \ottnt{e_{{\mathrm{1}}}}  \ottsym{:}  \ottnt{(t_{{\mathrm{1}}},\eta_{{\mathrm{1}}})} \outputeqs{\nes_{{\mathrm{1}}}}}%
\ottpremise{\Gamma  \vdash  \ottnt{e_{{\mathrm{2}}}}  \ottsym{:}  \ottnt{(t_{{\mathrm{2}}},\eta_{{\mathrm{2}}})} \outputeqs{\nes_{{\mathrm{2}}}}}%
}{
\Gamma  \vdash   \ottnt{e_{{\mathrm{1}}}} ; \ottnt{e_{{\mathrm{2}}}}   \ottsym{:}  \ottnt{(t_{{\mathrm{2}}},\eta_{{\mathrm{2}}})} \outputeqs{\nes_{{\mathrm{1}}} \cup \nes_{{\mathrm{2}}}}}{%
{\ottdrulename{sc\_seq}}{}%
}}

\newcommand{\ottdefnwfSCExpression}[1]{\begin{ottdefnblock}[#1]{$\vdash_{\text{MAP}}^{\text{PC}_1,\text{PC}_2}  \ottnt{e_1} \propto \ottnt{e_2}$ and  $\vdash_{\text{MAP}}^{\text{PC}_1,\text{PC}_2}  \ottnt{T_1} \propto \ottnt{T_2}$
}{}\\[5pt]
\ottusedrule{\ottdrulewfSCvar{}} \qquad
\ottusedrule{\ottdrulewfSClet{}}\\[5pt]
\ottusedrule{\ottdrulewfSCcall{}}\qquad
\ottusedrule{\ottdrulewfSCcast{}}\\[5pt]
\ottusedrule{\ottdrulewfSCselect{}}\qquad
\ottusedrule{\ottdrulewfSCloc{}}\qquad
\ottusedrule{\ottdrulewfSCnull{}}\\[5pt]
\ottusedrule{\ottdrulewfSCupdate{}}\qquad
\ottusedrule{\ottdrulewfSCnew{}}\\[5pt]
\ottusedrule{\ottdrulewfSCfj{}}\\[15pt]
\ottusedrule{ \ottdrulewfSCAsync{}}
\ottusedrule{ \ottdrulewfSCException{} \qquad
\ottdrulewfSCThread{}}
\end{ottdefnblock}}

\begin{table}
\caption{Syntactic Correspondence for Expressions and Thread Collections}
\label{fig:type-expr}
\begin{scriptsize}
\hrule\text{}\\[2pt]
\ottdefnwfSCExpression{}\\[5pt]
 \hrule
\end{scriptsize}
\end{table}

\renewcommand{\ottdrulewfXXtXXasync}[1]{\ottdrule[#1]{%
\ottpremise{\Gamma  \vdash  \ottnt{T_{{\mathrm{1}}}}  \ottsym{:}  \ottnt{(t_{{\mathrm{1}}},\eta_{{\mathrm{1}}})}\outputeqs{\nes_{{\mathrm{1}}}}}%
\ottpremise{ \Gamma  \vdash  \ottnt{T_{{\mathrm{2}}}}  \ottsym{:} \ottnt{(t_{{\mathrm{2}}},\eta_{{\mathrm{2}}})}\outputeqs{\nes_{{\mathrm{2}}}}}%
\ottpremise{\Gamma  \vdash  \ottnt{e}  \ottsym{:}  \ottnt{(t,\eta)}\outputeqs{\nes}}%
}{
\Gamma  \vdash   \ottnt{T_{{\mathrm{1}}}}  \mathop{||}  \ottnt{T_{{\mathrm{2}}}}  \rhd  \ottnt{e}   \ottsym{:}  \ottnt{(t,\eta)}\outputeqs{\nes_{{\mathrm{1}}} \cup \nes_{{\mathrm{2}}} \cup \nes}}{%
{\ottdrulename{wf\_t\_async}}{}%
}}

\renewcommand{\ottdrulewfXXtXXthread}[1]{\ottdrule[#1]{%
\ottpremise{\Gamma  \vdash  \ottnt{e}  \ottsym{:}  \ottnt{(t,\eta)}\outputeqs{\nes}}%
}{
\Gamma  \vdash  \ottsym{(}  \mathcal{L}  \ottsym{,}  \ottnt{e}  \ottsym{)}  \ottsym{:}  \ottnt{(t,\eta)}\outputeqs{\nes}}{%
{\ottdrulename{wf\_t\_thread}}{}%
}}

\renewcommand{\ottdrulewfXXtXXexn}[1]{\ottdrule[#1]{%
\ottpremise{\vdash  \ottnt{t}}%
\ottpremise{\vdash  \Gamma}%
}{
\Gamma  \vdash  \textbf{EXN}  \ottsym{:}  \ottnt{(t,\bot)} \outputeqs{\emptyset} }{%
{\ottdrulename{wf\_t\_exn}}{}%
}}

\renewcommand{\ottdrulewfXXcfg}[1]{\ottdrule[#1]{%
\ottpremise{\Gamma  \vdash  \ottnt{H}}%
\ottpremise{ \Gamma  \vdash  \ottnt{V}  \ottlinebreakhack }%
\ottpremise{\Gamma  \vdash  \ottnt{T}  \ottsym{:}  \ottnt{(t,\eta)}\outputeqs{\nes}}%
\ottpremise{\ottnt{H}  \vdash_{\mbox{\tiny lock} }  \ottnt{T}}%
}{
\Gamma  \vdash  \langle  \ottnt{H}  \ottsym{;}  \ottnt{V}  \ottsym{;}  \ottnt{T}  \rangle  \ottsym{:}  \ottnt{(t,\eta)}\outputeqs{\nes}}{%
{\ottdrulename{wf\_cfg}}{}%
}}

\off{
\begin{table}
\caption{Typing of \rccc configurations}
\label{fig:rc3_typing_rules_configs}
\begin{scriptsize}
\hrule
\ottdefnwfXXHeap{}\ottdefnwfXXFields{}\ottdefnwfXXVars{}\ottdefnwfXXThreads{}\ottdefnwfXXLocking{}\ottdefnwfXXCfg{}
 \hrule
\end{scriptsize}
\end{table}
}

A Program Context Correspondence is defined as a pair of maps, the first relating method names, and the second relating types, which establishes a correspondence from a program context (which can be thought of as the final one) into another (which can be thought of as the original one) in such a way that: the types given to fields of each class in the original program correspond to those given to fields in  corresponding classes of the final program; and  the name, types and method bodies of methods in the original program correspond to those given to methods in corresponding classes of the final program.
\begin{definition}[Program Context Correspondence] \label{def:program-context-correspondence}
A pair of maps\newline $\kw{MAP} = (\kw{methodMap},\kw{typeMap})$, where $\kw{methodMap}:\methodN\rightarrow\methodN$
and $\kw{typeMap}:\Type\rightarrow\Type$, is said to establish a correspondence from program context $PC_2$ (and implicit $\ottkw{fields}_2$, $\ottkw{methods}_2$) to program context $PC_1$ (and implicit $\ottkw{fields}_1$, $\ottkw{methods}_1$), written $PC_1\leq_{\kw{MAP}} PC_2$, 
if:
\begin{enumerate}
    \item for all $C_2,f$, we have that\\ $\ottkw{fields}_1(\kw{typeMap}(C_2))(f) = \kw{typeMap}(\ottkw{fields}_2(C_2)(f))$;
    \item for all $t_2,m_2$ such that $\ottkw{methods}_2(t_2)(m_2) = x : t_2' \rightarrow t_2'',e_2$, for some $t_2',t_2''$, we have that\\
    $\ottkw{methods}_1(\kw{typeMap}(t_2))(\kw{methodMap}(m_2)) =  x : \kw{typeMap}(t_2') \rightarrow \kw{typeMap}(t_2''), e_1$ and $\vdash_{\kw{MAP}}^{\text{PC}_1,\text{PC}_2} e_1 \propto e_2$.
\end{enumerate}
\end{definition}

\off{
\begin{definition}[Program Context Correspondence]\label{def:program-context-correspondence} %

A pair
$\kw{MAP}=(\kw{methodMap},$ $\kw{typeMap})$ of maps, where
$\kw{methodMap}:\methodN\rightarrow\methodN$
and
$\kw{typeMap}:\Type\rightarrow\Type$,
\noindent is said to establish a correspondence from program context $PC_2$ to program context $PC_1$, written $PC_1\leq_{\kw{MAP}} PC_2$, 
if:
\begin{enumerate}
    \item the types given to fields of each class in the original program ($PC_1$) correspond to those given to fields in corresponding classes of the final program ($PC_2$);
    \item the name, types and method bodies of methods in the original program correspond to those given to methods in corresponding classes of the final program.
\end{enumerate}
\end{definition}
}

\off{
\begin{definition}[Thread Correspondence]
We say that $MAP=(\kw{methodMap},\kw{typeMap})$ establishes a correspondence from thread collection $T_1$ to $T_2$ if $\vdash_{\kw{MAP}}^{\text{PC}_1,\text{PC}_2} T_1 \propto T_2$.
\end{definition}
}

We consider herein the dynamic semantics of \oolong, and its run-time constructs, which we assume well-formed (Figure 6 and~7 in~\cite{oolong-ACR2019}). Oolong's \emph{configurations} $\cfg HVT$ include heaps $H$ mapping abstract locations to objects, variable maps $V$, and collections of threads~$T$.

A program context correspondence is further said to define a correspondence from a heap and its implicit program context (which can be thought of as the final one) to another heap and its implicit program context (which can be thought of as the original one) if the two heaps have the same domain, and the heaps assign corresponding classes, the same field map and lock status to all locations.
\begin{definition}[Heap Correspondence] \label{def:heap-correspondence}
A program context correspondence $\kw{MAP}$ from $PC_2$ to $PC_1$ is said to define a correspondence from heap $H_2$ (and implicit program context $\text{PC}_2$) to heap $H_1$ (and implicit program context $\text{PC}_1$), written $H_1\leq_{\kw{MAP}}^{\text{PC}_1,\text{PC}_2} H_2$, iff:
\begin{enumerate}
    \item $\dom(H_1)=\dom(H_2)$, and
    \item for all $\iota \in \dom(H_1)$:
$H_1(\iota)=(\kw{typeMap}(C_2),F,L)$ iff $H_2(\iota) = (C_2,F,L)$.
\end{enumerate}
\end{definition}
\off{
\begin{definition}[Heap Correspondence] %
\label{def:heap-correspondence}
A program context correspondence $\kw{MAP}$ from $PC_2$ to $PC_1$ is said to define a correspondence from heap $H_2$ (and implicit program context $\text{PC}_2$) to heap $H_1$ (and implicit program context $\text{PC}_1$), written $H_1\leq_{\kw{MAP}}^{\text{PC}_1,\text{PC}_2} H_2$, iff:
\begin{enumerate}
    \item the two heaps have the same domain, and
    \item the heaps assign corresponding classes, the same field map and lock status to all locations.
\end{enumerate}
\end{definition}
}

\noindent This means that, if $\kw{MAP}=(\kw{methodMap},\kw{typeMap})$ is a program context correspondence, then $\kw{typeMap}$ is homomorphic with respect to $\ottkw{fields}$, and $\kw{methodMap}$ is homomorphic with respect to $\ottkw{msigs}$. Furthermore, if $\kw{MAP}$ is a heap correspondence, then it is homomorphic with respect to the structure of the heap. 
\off{
\begin{definition}[Abstraction Correspondence] \label{def:abstraction-correspondence}
Consider a pair of maps %
$MAP=(\kw{methodMap},\kw{typeMap})$ and of program contexts $PC_1$, $PC_2$ such that $PC_1 \leq_{\kw{MAP}} PC_2$.
and  %
$H_1\leq_{\kw{MAP}} H_2$, and a pair of thread collections $T_1$, $T_2$, running in program contexts $PC_1$,$PC_2$ (respectively) such that $\vdash_{\kw{MAP}}^{\text{PC}_1,\text{PC}_2} T_2 \propto T_1$.

For all pairs of and heaps $H_1$, $H_2$ and variable maps $V$ such that $H_1 \leq_{\kw{MAP}} H_2$,
for all $H_1,H_2,V$:
\begin{enumerate}
    \item $T_1~\mathcal{B}~T_2$  and $H_1 \leq_{\kw{MAP}} H_2$ and:
    \begin{enumerate}
        \item $\cfg{H_1}{V}{T_1} \rightarrow \cfg{H_1'}{V'}{T_1'}$ implies that exist $H_2’,V’,T_2’$ such that\\
$\cfg{H_2}{V}{T_2} \rightarrow \cfg{H_2'}{V'}{T_2'}$ and $H_1’ \leq_{\kw{MAP}} H_2’$ and $T_1’~\mathcal{B}~T_2’$;
\item $\cfg{H_2}{V}{T_2} \rightarrow \cfg{H_2'}{V'}{T_2'}$ implies that exist $H_1’,V’,T_1’$ such that\\
$\cfg{H_1}{V}{T_1} \rightarrow \cfg{H_1'}{V'}{T_1'}$ and $H_1’ \leq_{\kw{MAP}} H_2’$ and $T_1’~\mathcal{B}~T_2’$.
    \end{enumerate}
\end{enumerate}

, and a pair of thread collections $T_1$, $T_2$, running in program contexts $PC_1$,$PC_2$ (respectively) such that $\vdash_{\kw{MAP}}^{\text{PC}_1,\text{PC}_2} T_2 \propto T_1$.  Then, exists $\cfg{H_1'}{V'}{T_1'}$ such that  %
\begin{enumerate}
    \item $PC_1 \leq_{\kw{MAP}} PC_2$; %
    \item $\vdash_{\kw{MAP}}^{\text{PC}_1,\text{PC}_2} T_2 \propto T_1$
\end{enumerate}
\end{definition}
}
\off{
\begin{definition}[Abstraction Correspondence]
We say that a pair of maps %
$MAP=(\kw{methodMap},\kw{typeMap})$ 
is an abstraction correspondence from thread collection $T_1$, running in program contexts $PC_1$, to $T_2$, running in program contexts $PC_2$, if: %
\off{We say that a pair of maps %
$MAP=(\kw{methodMap},\kw{typeMap})$ 
is an abstraction correspondence from program $P_1$ to $P_2$ if for all $\cfg{H_1}{V}{T_1}$, $\cfg{H_2}{V}{T_2}$,
running in program contexts $PC_1$ of $P_1$ and $PC_2$ of $P_2$, respectively, written $P_1\leq_{\kw{MAP}} P_2$:}
\begin{enumerate}
    \item $PC_1 \leq_{\kw{MAP}} PC_2$; %
    \item $\vdash_{\kw{MAP}}^{\text{PC}_1,\text{PC}_2} T_2 \propto T_1$
\end{enumerate}
\end{definition}
}

The following bisimulation relation
is designed to relate thread collections that derive from programs between which there is a program context correspondence, and which preserve a heap correspondence throughout all possible execution paths.
It is inspired by the notion of abstraction homomorphism, which relates processes with structurally simpler but semantically equivalent ones~\cite{Castellani87}. The configuration steps in the definition below derive from the small-step dynamic semantics of \oolong (\cf Figures 8 and 9 in~\cite{oolong-ACR2019}).
\begin{definition}[Bisimulation]
\label{def:almost-bisimulation}
Given a program context correspondence $\kw{MAP}$ from $PC_2$ to $PC_1$, a $(\kw{MAP},PC_1,PC_2)$-bisimulation is a binary relation $\mathcal{B}$ on well-formed %
thread collections, that satisfies, for all $T_1,T_2,H_1,H_2,V$:
\begin{itemize}
    \item if $T_1~\mathcal{B}~T_2$  and $H_1 \leq_{\kw{MAP}}^{\text{PC}_1,\text{PC}_2} H_2$ then:
    \begin{enumerate}
        \item $\cfg{H_1}{V}{T_1} \rightarrow \cfg{H_1'}{V'}{T_1'}$ implies that exist $H_2',T_2'$ such that\\
$\cfg{H_2}{V}{T_2} \rightarrow \cfg{H_2'}{V'}{T_2'}$ and $H_1' \leq_{\kw{MAP}}^{\text{PC}_1,\text{PC}_2} H_2'$ and $T_1'~\mathcal{B}~T_2'$;
\item $\cfg{H_2}{V}{T_2} \rightarrow \cfg{H_2'}{V'}{T_2'}$ implies that exist $H_1',T_1'$ such that\\
$\cfg{H_1}{V}{T_1} \rightarrow \cfg{H_1'}{V'}{T_1'}$ and $H_1' \leq_{\kw{MAP}}^{\text{PC}_1,\text{PC}_2} H_2'$ and $T_1'~\mathcal{B}~T_2'$.
    \end{enumerate}
\end{itemize}
\end{definition}
The set of pairs of values is a $(\kw{MAP},PC_1,PC_2)$-bisimulation. Furthermore, the union of a family of $(\kw{MAP},PC_1,PC_2)$-bisimulations is a $(\kw{MAP},PC_1,PC_2)$-bisimulation. Consequently, the union of all $(\kw{MAP},PC_1,PC_2)$-bisimulations is a $(\kw{MAP},PC_1,PC_2)$-bisimulation, the largest $(\kw{MAP},PC_1,PC_2)$-bisimulation, which we denote by  ${\simeq_{\kw{MAP}}^{\text{PC}_1,\text{PC}_2}}$.
\off{One can show that the largest $(\kw{MAP},PC_1,PC_2)$-bisimulation, denoted by ${\simeq_{\kw{MAP}}^{\text{PC}_1,\text{PC}_2}}$, exists.
}

The final result states that when a program undergoes the \func{\rccc{}Analysis}, the initial and final programs simulate each other by means of a $(\kw{MAP},PC_1,PC_2)$-bisimulation as defined above.
To prove it we will define a concrete bisimulation and use an auxiliary Replacement lemma.

\begin{lemma}[Replacement w.r.t. $\propto$]\label{anexo:replacement}\text{}
\begin{enumerate}
    \item If $\vdash_{\kw{MAP}}^{\text{PC}_1,\text{PC}_2} eE_1 \propto E_2[e_2]$, then exist $E_1[\bullet],e_1$ such that $eE_1 = E_1[e_1]$ and $\vdash_{\kw{MAP}}^{\text{PC}_1,\text{PC}_2} e_1 \propto e_2$.
    Furthermore, if $\vdash_{\kw{MAP}}^{\text{PC}_1,\text{PC}_2} \hat{e}_1 \propto \hat{e}_2$, then $\vdash_{\kw{MAP}}^{\text{PC}_1,\text{PC}_2} E_1[\hat{e}_1] \propto E_2[\hat{e}_2]$.
    \item If $\vdash_{\kw{MAP}}^{\text{PC}_1,\text{PC}_2} E_1[e_1] \propto eE_2$, then exist $E_2[\bullet],e_2$ such that $eE_2 = E_2[e_2]$ and $\vdash_{\kw{MAP}}^{\text{PC}_1,\text{PC}_2} e_1 \propto e_2$.
    Furthermore, if $\vdash_{\kw{MAP}}^{\text{PC}_1,\text{PC}_2} \hat{e}_1 \propto \hat{e}_2$, then $\vdash_{\kw{MAP}}^{\text{PC}_1,\text{PC}_2} E_1[\hat{e}_1] \propto E_2[\hat{e}_2]$.
\end{enumerate}
\end{lemma}
\begin{proof} By induction on the structure of $E_2$ and $E_1$, respectively. 
\end{proof}

\off{\begin{lemma}[$\vie$ contained in $\propto$]\label{lemma:syntactic-correspondence}
For all $\sol$, $\maine$, $e$, we have that $\vdash_{\kw{MAP}}^{\text{PC}_1,\text{PC}_2} e \propto \vie(\sol,\maine,e)$ where $\kw{MAP}=(\kw{methodMap},\kw{typeMap})$, where $\kw{methodMap}=\kw{fst}\circ\oomn^{-1}$ and $\kw{typeMap}=\kw{snd}\circ\ootype^{-1}$.
\end{lemma}
\begin{proof}
By induction on the definition of $\vie(\sol,\nu,e)$:
\end{proof}
}

We prove that, for typable programs, $\propto$ is a Bisimulation:
\begin{lemma}[$\propto$ is a Bisimulation]Given a program context correspondence $\kw{MAP}$ from $PC_1$ to $PC_2$, $\propto$, when restricted to typable programs, %
is a $(\kw{MAP},PC_1,PC_2)$-bisimulation\label{lemma:concrete-bisimulation}.
\end{lemma}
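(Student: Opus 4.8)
The plan is to verify directly that $\propto$, restricted to typable thread collections, satisfies both clauses of Definition~\ref{def:almost-bisimulation}. The two clauses are mirror images of each other: part~1 of the Replacement Lemma~\ref{anexo:replacement} serves the forward direction and part~2 the backward one, and the homomorphism properties of $\kw{typeMap}$ and $\kw{methodMap}$ relate lookups in both directions, so I would carry out one direction in full and obtain the other by the symmetric argument. The real content sits in an \emph{expression-level simulation} statement: whenever $\vdash_{\kw{MAP}}^{\text{PC}_1,\text{PC}_2} e_1 \propto e_2$, $H_1 \leq_{\kw{MAP}}^{\text{PC}_1,\text{PC}_2} H_2$, and $\cfg{H_1}{V}{(\mathcal{L},e_1)} \rightarrow \cfg{H_1'}{V'}{(\mathcal{L}',e_1')}$, there exist $H_2',e_2'$ with $\cfg{H_2}{V}{(\mathcal{L},e_2)} \rightarrow \cfg{H_2'}{V'}{(\mathcal{L}',e_2')}$, $H_1' \leq_{\kw{MAP}}^{\text{PC}_1,\text{PC}_2} H_2'$ and $e_1' \propto e_2'$. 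I would prove this by induction on the derivation of the reduction, with a case analysis on the last rule applied.

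For the \rulename{dyn\_eval\_context} case I would invert $e_1 \propto e_2$ via Lemma~\ref{anexo:replacement} to write $e_1 = E_1[r_1]$ and $e_2 = E_2[r_2]$ with $r_1 \propto r_2$, apply the induction hypothesis to the redex step, and recompose the contexts (again by Replacement), which preserves $\propto$. The remaining cases are the redex rules. For \rulename{dyn\_eval\_select}, \rulename{dyn\_eval\_update} and \rulename{dyn\_eval\_new} I would use Heap Correspondence (Definition~\ref{def:heap-correspondence}): the field maps $F$ and lock statuses $L$ coincide on both heaps and field \emph{names} are preserved by $\kw{typeMap}$ (which is homomorphic with respect to $\ottkw{fields}$), so reads return the same value, writes yield identical field maps, and a freshly allocated object receives the same field map initialised to \nullk; choosing the fresh $\iota$ identically on both sides (legitimate since $\dom(H_1)=\dom(H_2)$) re-establishes heap correspondence. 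For the cast rules I would invoke preservation of subtyping under $\kw{typeMap}$, so that the subtyping premise of \rulename{dyn\_eval\_downcast}/\rulename{dyn\_eval\_upcast} enabled on one side is enabled on the other, both reducing to corresponding values.

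The \rulename{dyn\_eval\_call} case is the crux. Here I would use clause~2 of the Program Context Correspondence (Definition~\ref{def:program-context-correspondence}): the method looked up in $PC_1$ under $\kw{methodMap}(m_2)$ has a body $e_1$ with $\vdash_{\kw{MAP}}^{\text{PC}_1,\text{PC}_2} e_1 \propto e_2$, where $e_2$ is the body found in $PC_2$. The reduction renames \kw{this} and the parameter to fresh names and records the receiver and argument in $V'$; picking the same fresh names on both sides keeps the single shared $V'$ identical, while an auxiliary \emph{renaming lemma} (if $e_1 \propto e_2$ then $e_1[x\mapsto x'] \propto e_2[x\mapsto x']$, immediate by induction on $\propto$ since variables map to themselves under \rulename{sc\_var}) guarantees the substituted bodies remain in correspondence. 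The thread-collection level is then handled by a straightforward induction on the structure of $T_1 \propto T_2$: inversions on \rulename{sc\_async}, \rulename{sc\_thread} and \rulename{sc\_exn} reduce spawning, joining and exception propagation to the expression lemma together with componentwise correspondence.

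The main obstacle I anticipate is precisely \rulename{dyn\_eval\_call} combined with the bookkeeping of fresh names: one must show that the \emph{same} fresh locations and variables can be chosen on both sides, so that the shared $V'$ and the equal-domain requirement of heap correspondence are simultaneously met, and that renaming commutes with $\propto$. Once the renaming and subtyping-preservation lemmas are in place, the remaining redex cases (\rulename{dyn\_eval\_var}, \rulename{dyn\_eval\_let}, the null-pointer exception rules, and \rulename{dyn\_eval\_cast\_null}) are routine, since $\propto$ is purely structural away from type and method-name relabelling.
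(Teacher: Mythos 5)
Your proposal follows essentially the same route as the paper's proof: reduce the bisimulation claim to a one-step simulation statement (using preservation of typability under reduction), handle context steps via the Replacement Lemma~\ref{anexo:replacement}, do case analysis on the reduction rule applied to the minimal redex, and dispatch parallel compositions structurally through \rulename{sc\_async}, \rulename{sc\_thread} and \rulename{sc\_exn}. You in fact supply more of the detail than the paper does --- the renaming lemma for \ottdrulename{dyn\_eval\_call}, the observation that fresh names can be chosen identically on both sides because the variable map $V$ is shared, and the explicit use of clauses 1 and 2 of Definition~\ref{def:program-context-correspondence} for \ottdrulename{dyn\_eval\_new} and for method-body lookup --- which is precisely where the paper's proof is terse.

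One concrete fix is needed. Your expression-level simulation statement has conclusion $(\mathcal{L}',e_1')$, i.e., a single thread, but in \oolong{} a single thread can step to a thread \emph{collection} (\ottdrulename{dyn\_eval\_spawn}, \ottdrulename{dyn\_eval\_spawn\_context}) or to $\textbf{EXN}$ (the \ottdrulename{dyn\_npe\_select}, \ottdrulename{dyn\_npe\_update}, \ottdrulename{dyn\_npe\_call} rules). The paper accommodates this by letting the result of the redex step be an arbitrary $\hat{T}_2'$ and treating the spawn shape as a separate subcase. As stated, your lemma cannot even be applied in these cases: inverting \rulename{sc\_thread} hands you $e_1 \propto e_2$, but the induction hypothesis then promises a matching single-thread step that does not exist when the other side spawns or raises. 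Generalize the conclusion to an arbitrary resulting thread collection related by the rules of Table~\ref{fig:type-expr}, and all of your case analyses go through unchanged. A minor simplification: for the cast redexes you need not establish that $\kw{typeMap}$ preserves subtyping, since \ottdrulename{dyn\_eval\_cast} has no premises, so a matching cast step is always available on the other side regardless of the subtyping relation between the mapped types.
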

\begin{proof}
Since typability is preserved by reduction, it is enough t prove that if $PC_1\leq_{\kw{MAP}} PC_2$, and $H_1\leq_{\kw{MAP}}^{\text{PC}_1,\text{PC}_2} H_2$, and $\vdash_{\kw{MAP}}^{\text{PC}_1,\text{PC}_2} T1 \propto T2$, then:
\begin{enumerate}
    \item $\cfg{H_1}{V}{T_1} \rightarrow \cfg{H_1'}{V}{T_1'}$ implies\\
$\cfg{H_2}{V}{T_2} \rightarrow \cfg{H_2'}{V}{T_2'}$ and $\vdash_{\kw{MAP}}^{\text{PC}_1,\text{PC}_2} T1' \propto T2'$ and $H_1'\leq_{\kw{MAP}}^{\text{PC}_1,\text{PC}_2} H_2'$.
    \item $\cfg{H_2}{V}{T_2} \rightarrow \cfg{H_2'}{V}{T_2'}$ implies\\
$\cfg{H_1}{V}{T_1} \rightarrow \cfg{H_1'}{V}{T_1'}$ and $\vdash_{\kw{MAP}}^{\text{PC}_1,\text{PC}_2} T1' \propto T2'$ and $H_1'\leq_{\kw{MAP}}^{\text{PC}_1,\text{PC}_2} H_2'$.
\end{enumerate}
For (2) We distinguish the following cases (the cases for (1) are analogous):\begin{enumerate}
    \item If $T_2=(\mathcal{L},e_2)$, then exist $E_2[\bullet],\hat{e}_2,\hat{e}_2'$ such that $e_2=E_2[\hat{e}_2]$ and $\cfg{H_2}{V}{(\mathcal{L},\hat{e}_2)} \rightarrow \cfg{H_2'}{V'}{\hat{T}_2'}$, and:
    \begin{enumerate}
        \item %
        $\hat{T_2}'=(\mathcal{L},\hat{e}_2')$, and
        $T_2'=(\mathcal{L},E_2[\hat{e}_2'])$; %
        \item $\hat{e}_2=\kwool{finish} \, \,\startblock\, \, \kwool{async} \, \,\startblock\,  \ottnt{\hat{e}_2^1}  \,\finishblock\, \, \kwool{async} \, \,\startblock\,  \ottnt{\hat{e}_2^2}  \,\finishblock\,  \,\finishblock\, \ottsym{;}  \ottnt{\hat{e}_2^3}$, and  $\hat{T}_2'=(\mathcal{L},\hat{e}_2{}^1) \mathop{||}  (\mathcal{L},\hat{e}_2{}^2) \outputeqs {\hat{e}_2^3}$, and  $T_2'=\hat{e}_2^1 \mathop{||}  \hat{e}_2^2 \outputeqs {E_2[\hat{e}_2^3]}$.
    \end{enumerate}
    Assuming, without loss of generality, that $\hat{e}_2$ is the smallest in the sense that it cannot be written in terms of a non-empty context, we prove these cases using the Replacement Lemma~\ref{anexo:replacement}, and by case analysis on the transitions $\cfg{H_2}{V}{\hat{e}_2} \rightarrow \cfg{H_2'}{V'}{\hat{e}_2'}$.
    \item If $T_2=\hat{T}_2{}^1 \mathop{||} \hat{T}_2{}^2 \outputeqs{ \hat{e}_2 }$, and
    \begin{enumerate}
    \item $\cfg{H_2}{V}{\hat{T}_2{}^1} \rightarrow \cfg{H_2'}{V}{\hat{T}_2{}^1{}'}$
 and $T_2'=\hat{T}_2{}^1{}' \mathop{||} \hat{T}_2{}^2 \outputeqs{ \hat{e}_2 }$
    \item $\cfg{H_2}{V}{\hat{T}_2{}^2} \rightarrow \cfg{H_2'}{V}{\hat{T}_2{}^2{}'}$
 and $T_2'=\hat{T}_2{}^1 \mathop{||} \hat{T}_2{}^2{}' \outputeqs{ \hat{e}_2 }$
    \item $\hat{T}_2{}^1=(\mathcal{L},\hat{v}_2{}^1)$, $\hat{T}_2{}^2=(\mathcal{L}',\hat{v}_2{}^2)$ and $T_2' = (\mathcal{L},\hat{e}_2)$.
    \end{enumerate}
    These cases are easy to prove, using $\ottdrulename{sc\_async}$ and \ottdrulename{ dyn\_eval\_async\_*}.
    \end{enumerate} 
\end{proof}

Finally, to formulate the Behavioural Correspondence theorem we denote by $P = PC[e]$ a program $P$ that is composed of a program context $PC$, \ie its list of interfaces and classes, and a main expression $e$.
\begin{theorem}[Behavioural Correspondence] \label{th:soundness}
Consider an original program $P_\textrm{orig}$, that is annotated according to the \rccc model such that $\func{\rccc{}Analysis}(P_\textrm{orig}) = (P,Sol,P^+)$, where there are program contexts $PC$ and $PC^+$, and main expressions $e$ and $e^+$, such that $P=PC[e]$ and $P^+=PC^+[e^+]$.
Then, for $\kw{MAP}=(\kw{methodMap},\kw{typeMap})$, where $\kw{methodMap}=\kw{fst}\circ\oomn^{-1}$ and $\kw{typeMap}=\kw{snd}\circ\ootype^{-1}$, 
we have that $PC\leq_{\kw{MAP}} PC^+$ and ${e} {\simeq_{\kw{MAP}}^{\text{PC},\text{PC}^+}} {e^+}$.
\end{theorem}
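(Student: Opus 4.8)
The plan is to split the conclusion into its two conjuncts — the program context correspondence $PC\leq_{\kw{MAP}}PC^+$ and the bisimilarity $e\simeq_{\kw{MAP}}^{\text{PC},\text{PC}^+}e^+$ — and to obtain the latter from the already-available fact that $\propto$ is a bisimulation (Lemma~\ref{lemma:concrete-bisimulation}). Before either step I would check that $\kw{MAP}$ is well defined: since $\kw{methodMap}=\kw{fst}\circ\oomn^{-1}$ and $\kw{typeMap}=\kw{snd}\circ\ootype^{-1}$, this needs $\oomn$ and $\ootype$ to be injective on the identifiers occurring in $P^+$, which holds because each such identifier is generated by exactly one of these functions from a unique (atomicity, base) pair, so the projected inverses recover the original method name and base type. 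I also note that $P$ and $P_\textrm{orig}$ share the same interface and class declarations (preprocessing only decorates \letk-bindings with local types), so the statements of Theorem~\ref{th:consistency} about $\ottkw{fields}_{P_\textrm{orig}}$ and $\ottkw{msigs}_{P_\textrm{orig}}$ transfer to $P$.

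For $PC\leq_{\kw{MAP}}PC^+$ (Definition~\ref{def:program-context-correspondence}) the two homomorphism clauses are delivered by the Consistency theorem (Theorem~\ref{th:consistency}). Clause~(1) of that theorem writes every field type of $P^+$ as $\ootype(\func{atom}(q),t)$ of the corresponding original field, and every class of $P^+$ as $\ootype(\nu,C)$; applying $\kw{typeMap}$ therefore recovers $C$ and $t$, which is exactly the identity $\ottkw{fields}_{PC}(\kw{typeMap}(C^+))(f)=\kw{typeMap}(\ottkw{fields}_{PC^+}(C^+)(f))$ required of $\kw{typeMap}$. Clause~(2) gives, through $\oosig$ and $\oomn$, the matching of signature types and method names demanded by the first part of Definition~\ref{def:program-context-correspondence}(2). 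The remaining body condition $\vdash_{\kw{MAP}}^{\text{PC},\text{PC}^+} e_1\propto e_2$ I would derive from an auxiliary lemma: for all $\sol,\mu,e$, we have $\vdash_{\kw{MAP}}^{\text{PC},\text{PC}^+} e\propto\vie(\sol,\mu,e)$; indeed, by the clause for $\vi{Md}$ in Table~\ref{tab:rc3_variant_injection} each body of $P^+$ is $\vie(\sol,\mu,e_{\mathrm{orig}})$ for the corresponding original body.

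The auxiliary lemma is proved by structural induction on $e$, following the clauses of $\vie$: each clause either preserves the constructor and recurses on subexpressions, or rewrites a type via $\ootype(\nu,\cdot)$ or a name via $\oomn(\cdot,\nu_1,\nu_2)$, and in every case the matching rule of Table~\ref{fig:type-expr} (\rulename{sc\_let}, \rulename{sc\_call}, \rulename{sc\_new}, \rulename{sc\_cast}, \rulename{sc\_select}, \rulename{sc\_update}, \rulename{sc\_spawn}, and so on) applies precisely because $\kw{typeMap}$ and $\kw{methodMap}$ are the left inverses of those generators. Taking $\mu=\maine$ and $e$ the original main, and recalling that $\vi{P}$ sets $e^+=\vie(\sol,\maine,e)$, yields $\vdash_{\kw{MAP}}^{\text{PC},\text{PC}^+}e\propto e^+$. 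Now Theorem~\ref{th:preservation} makes both $P$ and $P^+$ typable, so Lemma~\ref{lemma:concrete-bisimulation} applies and $\propto$, restricted to typable programs, is a $(\kw{MAP},\text{PC},\text{PC}^+)$-bisimulation; since $\simeq_{\kw{MAP}}^{\text{PC},\text{PC}^+}$ is by construction the largest such bisimulation, we get $\propto\,\subseteq\,\simeq_{\kw{MAP}}^{\text{PC},\text{PC}^+}$, and combined with $e\propto e^+$ this gives $e\simeq_{\kw{MAP}}^{\text{PC},\text{PC}^+}e^+$.

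The genuinely hard part lives inside Lemma~\ref{lemma:concrete-bisimulation} rather than in the assembly above. Establishing that $\propto$ is preserved by reduction in both directions requires, after using the Replacement Lemma~\ref{anexo:replacement} to expose the active redex under an evaluation context and \rulename{sc\_async} to line up the parallel structure, a case analysis over every \oolong reduction rule. The delicate cases are method invocation and field access: there the step taken by $P^+$ resolves a renamed variant or a regenerated field type, and one must combine the heap correspondence $H_1\leq_{\kw{MAP}}^{\text{PC},\text{PC}^+}H_2$ with the body correspondence guaranteed by $PC\leq_{\kw{MAP}}PC^+$ to show the continuation remains $\propto$-related, checking that $\propto$ survives the substitutions and heap updates those rules perform; typability is maintained throughout because it is preserved by reduction.
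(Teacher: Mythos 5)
Your proposal is correct and follows essentially the same route as the paper's proof: invoke Lemma~\ref{lemma:concrete-bisimulation} to conclude that $\propto$ (on typable programs) is a $(\kw{MAP},PC,PC^+)$-bisimulation, hence contained in $\simeq_{\kw{MAP}}^{\text{PC},\text{PC}^+}$, and establish $\vdash_{\kw{MAP}}^{\text{PC},\text{PC}^+} e \propto \vie(\sol,\maine,e)$ by induction on the clauses of $\vie$, exactly as the paper does. Your additional steps---discharging the conjunct $PC\leq_{\kw{MAP}}PC^+$ via Theorem~\ref{th:consistency}, checking injectivity of $\oomn$ and $\ootype$ so that $\kw{MAP}$ is well defined, and citing Theorem~\ref{th:preservation} for the typability hypothesis of Lemma~\ref{lemma:concrete-bisimulation}---merely make explicit what the paper's proof leaves implicit.
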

\begin{proof}
By Lemma~\ref{lemma:concrete-bisimulation}, we have that the relation
\[B = \{(T_1, T_2) |  \vdash_{\kw{MAP}}^{\text{PC},\text{PC}^+} T_1 \propto T_2\}\]
is a $(\kw{MAP},PC,PC^+)$-bisimulation. 
We then show that $({e},e^+) \in B$. %
According to Stage 4 of the analysis, $P^+ = \vi{P}(\sol, P)$, with $e^+ = \vi{E}(\sol, \maine, e)$. We show by induction on the definition of $ \vi{E}$ that for all $\sol$, $\maine$, $e$, we have that $\vdash_{\kw{MAP}}^{\text{PC},\text{PC}^+} e \propto \vi{E}(\sol, \maine, e)$.
\end{proof}

\off{
\begin{theorem}[Behavioural Correspondence] %
\label{th:soundness}
Consider an original program $P_\textrm{orig}$, that is annotated according to the \rccc model such that $\func{\rccc{}Analysis}(P_\textrm{orig}) = (P,Sol,P^+)$, where there exist program contexts $PC$ and $PC^+$, and main expressions $e$ and $e^+$, such that $P=PC[e]$ and $P^+=PC^+[e^+]$.
Then, for $\kw{MAP}=(\kw{methodMap},\kw{typeMap})$, where $\kw{methodMap}=\kw{fst}\circ\oomn^{-1}$ and $\kw{typeMap}=\kw{snd}\circ\ootype^{-1}$, 
we have that ${e} {\simeq_{\kw{MAP}}^{\text{PC},\text{PC}^+}} {e^+}$.
\end{theorem}
}
\newcommand{\dnc}{\mathbb{X}}

\section{The Java Implementation}
\label{sec:java}

In this section we  
overview  how the \func{\rccc{}Analysis} has been implemented in \rccc-Java.
The \rccc-Java static analysis is built on top of the Eclipse Java Development Tools (\href{https://www.eclipse.org/jdt/}{\color{gray}JDT})%
,
but are independent of the Eclipse IDE, being easily incorporated in any Java project via the \href{https://gradle.org}{\color{gray}Gradle Build Tool}%
. 
The compilation receives Java source code with \rccc annotations and generates Java bytecode with the \texttt{@Atomic} annotations forwarded, which will be important for the subsequent lock inference analysis that is performed over the Java bytecode.

To implement \func{\rccc{}Analysis} in Java 
we need to accommodate primitive types 
that do not exist in \oolong.
Their impact on the atomicity inference rules is minimal.
Atomic fields of any type have to be accessed in units of work to avoid data races and atomicity violations.
However, fields of primitive types store values that are immutable and, hence, the \textit{atomicity property} of such fields does not have to be propagated to 
parameters, return values or local variables.
For convenience, we associate a new  \textbf{do\_not\_care}  atomicity value 
to primitive types, but these are filtered out when added to a constraint system.

To implement the \func{solve} function presented in \S\ref{sec:valid_variants}, we resorted to the Z3 solver~\cite{DBLP:conf/tacas/MouraB08}.
Both atomicity values
and validity values  are mapped to Booleans:
 $\atomick$ and \validk to \truek, and   $\natomick$ and \invalidk to \falsek.
For atomicity values, we favour the $\natomick$ value to reduce the overhead of the concurrency control mechanisms in the execution of the generated code.
In turn, for validity values, we favour the \validk value to have as much valid variants as possible.
Accordingly, we want to obtain a model that is optimal relatively to the number of valid variants and the number of variables with atomicity value $\natomick$.
For that purpose, we use Z3's optimising solver over a set of boolean formulas and added soft constraints for every validity and atomicity variable. 

The challenge of solving a program's constraint system is dealing with 
scale, as the number and complexity of classes increases.
When moving from theory to practice, we designed our approach to support separate compilation.
Each class is therefore processed separately, producing a set of constraints that indicate the validity 
value computed for each variant of its non-private methods. 
The process features two main steps.
The \textbf{first} step (\textit{Locally Valid})  guarantees that only variants that are \textit{locally valid} make their way to the class' constraint system.
Remembering the definition of the \vvMCS function in \S\ref{sec:valid_variants}, 
we have that\[\footnotesize (\valeq{\valvar(\mu)}{\validk}) \rightarrow
\Big((\nes \cup \{ \ateq{\natx_1}{\nu_1}, \ateq{\natx_2}{\nu_2}, \ateq{\natx_3}{\nu_3}\})@\mu ~\wedge \func{bindCall}(\mns, \mu)\Big)\]
We say that a variant is \textit{locally valid} if and only if the left term of the conjunction has a solution.
The existence of this solution depends only on the constraints collected from method's code, being completely independent of  any method calls. 
Accordingly, all non \textit{locally valid} variants are  removed from the set of variants to consider from such point forward.
 
The \textbf{second} step (\textit{BindCalls})  creates and solves the class' constraint system. 
However, to improve the compilation time, instead of building a single system and solving it all at once (like in \S\ref{sec:valid_variants}),
we build and solve this system incrementally, by making use of Z3's ability to organise constraints as a stack.
We first generate the class' call graph and from there guarantee that, before being processed, 
every method only progresses if all its dependencies have already been processed, down to the nodes with  out-degree $0$.
Naturally, precautions have to be made to deal with recursion.

The constraint system is initially populated with equations on the validity of the variants with  out-degree $0$ (computed on the \textit{Locally Valid} step).
From then onward, the incremental solving  works as follows:
we select a method (with out-degree > 1) from the call-graph and, having all its dependencies processed, we push the set of constraints for the validity of each of its variants onto the constraint system (stack).%
\footnote{The incremental solving is done at method level, meaning that we solve the system once  per  method. It could also be done at variant level, but the granularity is too small, yielding longer compilation times.}
We then solve the system and, from the resulting model (if any), 
retrieve the valid variants and the corresponding atomicity values for their local variables.
Lastly, we pop the constraints previously pushed and complement the constraint system with validity of the variants freshly processed.  
Hence, at any point, the constraints system is composed of
the constraints concerning the validity of the variants of the methods that have already been processed, plus the constraints for the validity of the  variants of the method being processed.
The algorithm stops when there are no more methods to process.

The result of the analysis of each class is cached on disk, in a user-configurable folder. 
Prior to the analyses of any class, a
lookup in the cache is made. 
If information is present and its date is more recent than the
class' file date, it is retrieved and no new analysis is performed.
This means that the analysis' time
only focuses on types that have been modified since the previous compilation. 

\section{Related Work}
\label{sec:related}

We review approaches to static analysis of concurrent object calculi/languages to prevent atomicity violations. We concentrate on those analysing source code and either requiring or inferring annotations/types to control the critical resources. Therefore, we briefly present two of the main approaches: access permissions and behavioural types;
we review briefly reference works on semantic correctness criteria for language encodings;
the main part is dedicated to closely related work on data-centric concurrency control.

\paragraph*{DCCC in shared memory programming.} %
\textit{Atomic Sets} (aka AJ)~\cite{atomicsets-POPL2006,DBLP:journals/toplas/DolbyHMTVV12,DBLP:conf/ecoop/VaziriTDHV10} is one of the reference works in the area.
 Variables holding values that share consistency properties
have to be placed in an \textit{atomic} set  
by prefixing the variable's declaration with \texttt{atomic(a)}, where \texttt{a} denotes the set's identifier.
Sets may have multiple units of work associated, which can be explicitly augment to, for instance, account for multiple method parameters. 
Likewise,  alias annotations enable the union of sets from distinct classes at object creation.
Although a seminal work, Atomic Sets propose several annotations that hampers reasoning and is error-prone, as some annotations may be easily forgotten.
Moreover, progress is not guaranteed. To avoid deadlocks, the programmer may have to intervene to explicitly define a partial order between sets when the  static analysis is not able to infer it
~\cite{deadlocks-atomicsets-ICSE2013}.

AWJS~\cite{DBLP:conf/pppj/0001DB16} combines Atomic Sets with work-stealing-based task parallelism in  the Java language.  
Other works have addressed the automatic inference of 
atomic sets.
AJ-lite~\cite{aj-lite-Fool2012} is a lighter version  that assumes a single atomic set per Java class and is only applicable to libraries and not entire programs.
The number of annotations  is reduced to three, the ones needed to relate the class' atomic set  to  the ones of its fields.
In \cite{inference-atomicsets-PASTE2013}, the authors propose to
automatically infer  atomic sets from patterns recognised in execution traces.
Although being able to infer most of the required annotations, the quality of the result is sensitive to the quality of the input traces, and 
may generate more annotations than necessary.

Ceze \emph{at al.}~\cite{DBLP:conf/asplos/CezePCMT08} associate variables sharing consistency proprieties to a colour, defining a consistency domain. However, concurrency control is not centralised on data declaration. Code annotations have to be added to the methods' implementations to signal when a thread concludes its work on a domain. Moreover, atomic-region-like control-centric concurrency control is needed to handle \emph{high-level data races}~\cite{DBLP:conf/nddl/ArthoHB03} in composite operations. 
In~\cite{DBLP:conf/hpca/CezeMPT07}, some of the same authors proposed hardware support for data-centric synchronisation.

RC$^3$~\cite{rc3-sac16} proposes a DCCC model that uses a single keyword (\atomicS), being a source of inspiration for \rccc. Most of the concerns delegated on the programmer in AJ are shifted to static analyses. However, as mentioned in \S\ref{sec:intro}, the methods' implementations are atomicity-aware, requiring the duplication of code to account for the atomicities of the parameters. Also, as in all other DCCC solutions, there is no support for interfaces.

\paragraph*{Concurrency control via access permissions and protocol compliance.}
A popular approach to provide static control of data-races is the use of explicit or implicit access permissions. The key ideas were adopted by the Rust programming language which ensures memory-safety via its type system.
Sadiq \emph{et al.} provide a comprehensive survey of the state-of-the-art~\cite{DBLP:journals/jss/SadiqLL20}. %
The idea is to either declare, or have an implicit policy about, when resources are readable and/or writable. To write on a resource, one needs exclusive access; reads may be shared but forbid writes.
Access permissions do not avoid the use of explicit concurrency control. They are a way of ensuring safe units of work. Our approach is "declarative" and aims to  generate locks in a correct-by-construction manner. %

Static verification of protocol compliance avoids atomicity violations by combining %
tight aliasing control and enforcing specific sequences of method calls. %
It uses behavioural types to declare object usage protocols %
and force any sequence of calls to be one of the allowed paths; it uses access permissions to control aliasing and guarantee protocol compliance and completion (see~\cite{DBLP:journals/ftpl/AnconaBB0CDGGGH16,DBLP:journals/csur/HuttelLVCCDMPRT16} for comprehensive surveys of the state-of-the-art; a book on tools is also available~\cite{GayRavara-BehavTypesTools17}).
In object-oriented languages and frameworks supporting behavioural types, one needs to provide protocol specifications for each class and define contracts for methods to protect resources. The annotation burden is significant and might not be easy to get right, though the correctness guarantees go beyond those provided by our approach.

\paragraph*{Type qualifier inference and Type-directed code synthesis.}
Generic frameworks and tools for type qualifier inference have been proposed, \eg \textsc{CQual}~\cite{DBLP:conf/pldi/FosterFA99, DBLP:journals/toplas/FosterJKA06} and \textsc{Clarity}~\cite{Chin06inferenceof} for C, 
and JQual~\cite{DBLP:conf/oopsla/GreenfieldboyceF07} for Java, and their usage has been experimented for different purposes (see the later reference for a review).
The \rccc model performs a field-based, context-sensitive, flow-insensitive, analysis, as does Greenfieldboyce and Foster's CS/FS JQual, %
but for specifically inferring atomicity type qualifiers to achieve strong concurrency control guarantees.
Imposing program-wide atomicity constraints on how data is manipulated, in a flow-insensitive approach in the presence of alias appears to be incompatible with enabling subtyping between the atomic and non-atomic qualifiers. Indeed, the same object should not be treated simultaneously as atomic and non-atomic by different parts of the program. Automatic generation of method variants by the \rccc model provides the possible flexibility that is compatible with the methods' code, up to the conservative nature of the constraint generation.

Our code generation approach bears connections with
Osera's~\cite{Osera19} constraint-based type-directed program synthesis technique for synthesising polymorphic code from types and {\it examples}. In their work, a type inference system generates constraints between types; these constraints are solved, in order to infer types and ultimately synthesise code. The main differences are that \cite{Osera19} targets a functional programming language, and that \rccc's code generation is based on an original program.

\paragraph*{Correctness criteria for language encodings.}
When performing code transformation, %
\emph{behavioural soundness}, \ie the preservation of the original program's functional behaviour, is a key property. %
To define this criteria rigorously, a standard approach is to define a notion of mutual simulation between the source and the target programs, 
for proving that the transformation in question \emph{preserves and reflects observable behaviour}
(original and generated code simulate each other, being indistinguishable for an external observer)~\cite{Mitchell96,Reynolds98}.
Bisimulations are standard relations to capture indistinguishably of two systems' observable behaviour~\cite{Park81,Milner80} and are used extensively in process calculi to reason about language encodings and expressiveness~\cite{GorlaNestmann16}. %

\off{The notion of abstraction homomorphism laid by Castellani, which relates processes with structurally simpler but semantically equivalent ones~\cite{Castellani87}, inspired our "almost bisimulation" based on a heap correspondence.
}
\section{Conclusions}
\label{sec:conclusions}

We propose herein a sound new model of data-centric concurrency control, that only requires the annotation of interfaces and of atomic class fields.
The approach has two phases --- an analysis, %
which infers missing annotations and produces type-safe atomicity-related overloaded versions of each method; followed by a lock inference phase, to manage concurrency and prevent interferences.
In this paper we rigorously define and prove correct the first phase for a foundational object calculus---\oolong---, showing that the generated code is type-safe and behaviourally corresponds to the original one.
Moreover, we apply the algorithms developed for \oolong also to a Java implementation, to show its viability in real-life code.

We reached a milestone of our approach: we provide a computer-verified mathematical proof that %
the resulting generated program is type-safe. This is a fundamental prerequisite to assess the validity of the analysis technique and in turn to propose its use in mainstream languages. We achieved that by relying on a minimal, yet expressive Object Oriented language, and by mechanising the program analysis and code generation phases in the Coq proof assistant. 

In future work, we plan to extend the setting to deal with generics. While the computation of units of work and lock inference has been implemented, we are working on its formalisation, and aim to prove its soundness.
We also plan to evaluate the use of AtomiS in the simplification of writing concurrent code, as well as its impact in teaching concurrent programming.
We are currently developing a mechanised proof of \rccc's behavioural soundness, as described in section \ref{sec:bis}.

\bibliography{paper}

\newpage
\appendix
\section{\rccc Specialisation of the Classes of Listing~\ref{lst:baselist}}
\label{app:baselist}

\begin{table}[h]
\begin{lstlisting}[multicols=2, escapechar=|]
interface List_n<T> { 
 void add(T element);
 T get(int pos); 
 @All Boolean equals(@All List_n<T> other); 
} 

interface List_a<T>  { 
 void add(@Atomic T element);
 @Atomic T get(int pos); 
 @All Boolean equals(@All List_a<T> other); 
} 

class Node_n<T> { 
 T value;
 Node_n<T> next, prev;
}

class Node_a<T> { 
 @Atomic T value;
 Node_a<T> next, prev;
}

class BaseList_nn<T, N extends Node_n<T>> implements List_n<T> {
 N head, tail;
 Supplier<N> factory;

 BaseList_nn(Supplier<N> factory) {...}
 void add(T element) {...}
 T get(int pos) {...}
 Boolean equals(List_n<T> other) {...}
}


class BaseList_an<T, N extends Node_a<T>> implements List_a<T> {
 N head, tail;
 Supplier<N> factory;

 BaseList_an(Supplier<N> factory) {...}
 void add(T element) {...}
 T get(int pos) {...}
 Boolean equals(List_a<T> other) {...}
}

class BaseList_na<T, N extends Node_n<T>> implements List_n<T> {
 @Atomic N head, tail;
 Supplier<N> factory;

 BaseList_na(Supplier<N> factory) {...}
 void add(T element) {...}
 T get(int pos) {...}
 Boolean equals(List_n<T> other) {...}
}

class BaseList_aa<T, N extends Node_a<T>> implements List_a<T> {
 @Atomic N head, tail;
 Supplier<N> factory;

 BaseList_aa(Supplier<N> factory) {...}
 void add(T element) {...}
 T get(int pos) {...}
 Boolean equals(List_a<T> other) {...}
}
\end{lstlisting}
\end{table}
\section{Auxiliary functions}
\label{app:aux}

\begin{footnotesize}
\begin{align*}
\func{strip}: \text{\rccc-\oolong Program} \rightarrow\ & \text{\oolong Program} \\
\func{strip}(\Ids\ \Cds\ e) =\ & 	\func{strip}(\Ids)\ \func{strip}(\Cds)\ \func{strip}(e) \\
\func{strip}( \kw{interface}\ I \{ \IMsigs \})  = \ &   \kw{interface}\ I \{ \func{strip}(\IMsigs) \} \\
\func{strip}(\Msig)  = \ &  \Msig \\
\func{strip}(\Msig\ [\Sas])  = \ &  \Msig \\
\func{strip}(\kw{class}\ C\ \kw{implements}\ I\ \SB \Fds~\Mds \FB) =\ & \kw{class}\ C\ \kw{implements}\ I\ \SB \func{strip}(\Fds)~\func{strip}(\Mds) \FB\\
\func{strip}(f : \atomicS\ t)  = \ &  f : t \\
\func{strip}(\kw{def}\ \Msig \SB e \FB) =\ & \kw{def}\ \Msig \SB \func{strip}(e) \FB) \\
\func{strip}(x.f \kc{=} e)  = \ & x.f \kc{=}\func{strip}(e)  \\
\func{strip}(x.m(e))  = \ & x.m(\func{strip}(e))  \\ 
\func{strip}(\kw{let}\ x \kc{=} e_1\ \kw{in}\ e_2) = \ & \kw{let}\ x \kc{=} \func{strip}(e_1)\ \kw{in}\ \func{strip}(e_2) \\
\func{strip}(\newk\ \atomicS\ C)  = \ &  \newk\ C \\
\func{strip}((t)~e) = \ & (t)~\func{strip}(e)\\
\func{strip}(\kw{finish}\SB \kw{async}\SB e_1\FB~\kw{async}\SB e_2\FB \FB\kc{;}\ e_3) = \ & \kw{finish}\SB \kw{async}\SB \func{strip}(e_1)\FB~\kw{async}\SB \func{strip}(e_2)\FB \FB\kc{;}\ \func{strip}(e_3)\\
\func{strip}(e) =\ & e \qquad \textrm{for all other cases} 
\end{align*}
\end{footnotesize}

\begin{footnotesize}
\begin{align*}
\func{dress}: \text{\rccc-\oolong Program} \times \text{\oolong Program} \rightarrow\ & \text{\rccc-\oolong Program} \\
\func{dress}(\Ids\ \Cds\ e, \Ids'\ \Cds'\ e') =\ & 	\func{dress}(\Ids, \Ids')\ \func{dress}(\Cds, \Cds')\ \func{dress}(e, e') \\
\func{dress}(\kw{interface}\ I \{ \IMsigs \},  \kw{interface}\ I \{ \IMsigs' \})  = \ &   \kw{interface}\ I \{ \func{dress}(\IMsigs, \IMsigs') \} \\
\func{dress}(\Msig\ \Sas, \Msig)  = \ &  \Msig\ \Sas\\
\func{dress}(\Msig\ , \Msig)  = \ &  \Msig \\
\func{dress}(
        \kw{class}\ C\ \kw{implements}\ I\ \SB \Fds~\Mds \FB,
        \kw{class}\ C\ \kw{implements}\ I\ \SB \Fds'~\Mds' \FB
    ) = \hspace{-12em}&\\
     = \ &\ \kw{class}\ C\ \kw{implements}\ I\ \\
     & \SB \func{dress}(\Fds, \Fds')~\func{dress}(\Mds, \Mds') \FB\\
\func{dress}(\kw{def}\ \Msig \SB e \FB, \kw{def}\ \Msig \SB e' \FB) =\ & \kw{def}\ \Msig \SB \func{dress}(e, e') \FB) \\
\func{dress}(f : \atomicS\ t, f : t)  = \ &  f : \atomicS\ t \\
\func{dress}(f : t, f : t)  = \ &  f : t \\
\func{dress}(x.f=e, x.f=e')=& \ x.f=\func{dress}(e,e')\\ 
\func{dress}(x.m(e), x.m(e')= & \ x.m(\func{dress}(e,e'))\\ \func{dress}(\kw{let}\ x \kc{=} e_1\ \kw{in}\ e_2, \kw{let}\ x : t \kc{=} e_1'\ \kw{in}\ e_2' )  = \ &  \kw{let}\ x : t \kc{=} \func{dress}(e_1, e_1')\ \kw{in}\ \func{dress}(e_2, e_2')  \\
\func{dress}(\newk\ \atomicS\ C, \newk\ C )  = \ &  \newk\ \atomicS\  C \\
\func{dress}((t)e,(t)e')= & \
(t)\func{dress}(e,e')
\\
\func{dress}(\kw{finish}\SB \kw{async}\SB e_1\FB~\kw{async}\SB e_2\FB \FB\kc{;}\ e_3),
\\
\kw{finish}\SB \kw{async}\SB e_1'\FB~\kw{async}\SB e_2'\FB \FB\kc{;}\ e_3') = &
\
\kw{finish}\SB \kw{async}\SB \func{dress}(e_1,e'_1)\FB~\kw{async}\SB \func{dress}(e_2,e'_2)\FB \FB\kc{;}\\ 
&\ \func{dress}(e_3,e'_3)
\\
\func{dress}(e, e) =\ & e \qquad \textrm{for all other cases} 
\end{align*}
\end{footnotesize}

\begin{footnotesize}
\begin{align*}
Set@\mu =\ & \bigwedge_{x\in Set} x@\mu\\
(\eta_1 \vee \eta_2)@\mu  =\ & (\eta_1@\mu  \vee \eta_2@\mu) \\
(\eta_1 \wedge \eta_2)@\mu  =\ & (\eta_1@\mu  \wedge \eta_2@\mu) \\
(\eta)@\mu  =\ & (\eta@\mu) \\
(\ateq{\alpha_1}{\alpha_2})@\mu  =\ & (\ateq{\alpha_1@\mu}{\alpha_2@\mu}) \\
\alpha@\mu = \ & \alpha \text{ if } \alpha \neq \check x\\
\check x@\mu = \ & \check x@\mu (\in \natV, \text{ overloaded operation})
\end{align*}
\end{footnotesize}

\begin{footnotesize}
\begin{align*}
\ootype: \natval  \times \Type & \rightarrow \Type \\
\ootype(\nu, s ) & =  \nu\_s \\ 
\oomn: \methodN \times \natval \times \natval & \rightarrow \methodN \\
\oomn(m, \nu_1, \nu_2) & = m\_\nu_1\_\nu_2\\ 
\oosig: \natval \times \natval \times \Msig & \rightarrow \Msig \\
\end{align*}
\begin{align*}
\oosig(\nu_1, \nu_2, m(x: s_1) : s_2) & =
\oomn(m, \nu_1, \nu_2)\ (x : \ootype(\nu_1, s_1)) : \ootype(\nu_2, s_2) %
\end{align*}
\end{footnotesize}

\off{
\section{Soundness - Formal definitions and results}
\label{sec:appendix-soundness}

\off{
\begin{definition}[\rccc analysis]\label{anexo:\rccc-analysis}
Consider an original program $P_\textrm{orig}$, that is annotated \oolong program written in the language of Table~\ref{tab:syntaxpos}. If
\begin{enumerate}
    \item $strip(P_\textrm{orig}) = P$;
    \item $preprocess(P) = S$ (end of stage 1);
    \item $\vdash {P} \outputeqs{\mes}$
    \item $\func{natureConstraints}(\mes) = \sol$
    \item $\vi{P}(\sol, S) = P^+$
\end{enumerate}
then $\func{\rccc{}Analysis}(P_\textrm{orig}) = (P,\mes,\sol,P^+)$.
\end{definition}
}

\subsection{Type Soundness and Mechanisation} %

\off{
Full formulation of Theorem~\ref{th:preservation}:
notice that we assume below that $\func{\rccc{}Analysis}$ succeeds, so in fact, we are considering well-typed original programs with respect to the \oolong type system (which we denote by $\vdash_{\textit{Ool}}$).

\begin{theorem}[Preservation of Base Types] \label{anexo:preservation}
Consider an original program $P_\textrm{orig}$ %
such that $\func{\rccc{}Analysis}(P_\textrm{orig}) = (P,\sol,P^+)$.
If $\func{strip}(P_\textrm{orig})=S$ is typeable with type $t$, then the final (\oolong) program $P^+$ is also typeable with a type that is a compound of $t$:
for all $t$ such that $\vdash_{\textit{Ool}} S : t$, there exists $\nu$, such that $\vdash_{\textit{Ool}} P^+ : \ootype(\nu, t)$.
\end{theorem}

Full formulation of Theorem~\ref{th:consistency}:
\begin{theorem}[Consistency of Types and Atomicity Inference] \label{anexo:consistency}
Consider an original program $P_\textrm{orig}$, that is annotated according to the \rccc model such that  $\func{\rccc{}Analysis}(P_\textrm{orig}) = (P,\sol,P^+)$.
The types of the final program $P^+$, of its signatures and of its fields are generated from a compound version of those in $P$ that is consistent with $P_\textrm{orig}$'s atomicity annotations, \ie:
\begin{enumerate}
    \item for all $C^+,f,t^+$ such that $\ottkw{fields}_{P^+}(C^+)(f) = t^+$, there exist $C,\nu,t$ such that $C^+=\ootype(\nu,C)$, and either
    \begin{enumerate}
        \item $\ottkw{fields}_{P_\textrm{orig}}(C)(f) = t$ and $t^+=\ootype(\natomick,t)$, or
        \item  $\ottkw{fields}_{P_\textrm{orig}}(C)(f) = \atomicS~t$ and $t^+=\ootype(\atomick,t)$;
    \end{enumerate}%
    \item for all $t^+,m^+,t_1^+,t_2^+$ s.t.  $\ottkw{msigs}_{P^+}(t^+)(m^+) = x : t_1^+ \rightarrow t_2^+$, there exist $t, m, q_1, t_1, q_2, t_2, \nu$ s.t. $\ottkw{msigs}_{P_\textrm{orig}}(t)(m) =  x : q_1~t_1 \rightarrow q_2~t_2$ and $t^+=\ootype(\nu,t)$, with $t_1^+=$ $\ootype(\func{atom}(q_1),t_1)$, and $t_2^+=\ootype(\func{atom}(q_2),t_2)$ and also $m^+ = \oomn(m,\func{atom}(q_1),\func{atom}(q_2))$.
    \end{enumerate}
\off{
    For all $\nu'.s$ in the domain of $\ottkw{msigs}_{P'}$, and all $m.\nu_1'.\nu_2'$ in the domain of $\ottkw{msigs}_{P'}(\nu'.s)$, if
    \begin{itemize}
    \item $\ottkw{msigs}_{P'}(\nu'.s)(m.\nu_1'.\nu_2') = x' : \nu_1'.s_1 \rightarrow \nu_2'.s_2$ 
    \end{itemize}
    then we have that there exist $\nu$, $x$, $\nu_1 \in \{\nu_1',\bot \}$ and $\nu_2 \in \{\nu_2',\bot \}$ such  that:
    \begin{itemize}
    \item $\ottkw{msigs}_{P}(\nu.s)(m) = x : \nu_1.s_1 \rightarrow \nu_2.s_2$
    \end{itemize}
}
\end{theorem}
}

\subsubsection{Proof Mechanisation}
The fundamental core result that underlies the theorems above %
proves qualified type preservation throughout the $\func{\rccc{}Analysis}$ process, and is fully mechanised in Coq (see explanation in \S\ref{sec:soundness}).
The proof is based on a refinement of the \oolong type system, that represents an internal step of the analysis, where types are 
formalised as pairs of atomicity qualifiers and base types. More precisely, the theorem mechanised in Coq ensures that programs of Table~\ref{tab:syntaxpos} having 
type $t$ (the type of the main expression) in the type system \texttt{wfProgram}, which is obtained from~\cite{oolong-ACR2019} by ignoring field and interface atomicity decorations, whenever \texttt{solve} finds a solution to the constraint system generated by \texttt{variantConstraints}, are mapped by  \texttt{viP} into programs having type \texttt{setAtomic (nonAtomic (castT t))} in the type system \texttt{wfProgramOO}, which is a refinement of~\cite{oolong-ACR2019} obtained by considering pair types, and by enforcing class implementations of interfaces and subtyping to preserve atomicity.
This leads to a justification of both the base type preservation and consistency of the derived atomicity qualifiers with respect to the original annotations.
}%

\off{
\subsection{Behavioural Soundness}

Full formulation of Definition~\ref{def:program-context-correspondence}:
\begin{definition}[Program Context Correspondence] \label{anexo:program-context-correspondence}
A pair $\kw{MAP} = (\kw{methodMap},\kw{typeMap})$ of maps, where $\kw{methodMap}:\methodN\rightarrow\methodN$
and $\kw{typeMap}:\Type\rightarrow\Type$, is said to establish a correspondence from program context $PC_2$ (and implicit $\ottkw{fields}_2$, $\ottkw{methods}_2$) to program context $PC_1$ (and implicit $\ottkw{fields}_1$, $\ottkw{methods}_1$), written $PC_1\leq_{\kw{MAP}} PC_2$, 
if:
\begin{enumerate}
    \item for all $C_2,f$, we have that\\ $\ottkw{fields}_1(\kw{typeMap}(C_2))(f) = \kw{typeMap}(\ottkw{fields}_2(C_2)(f))$;
    \item for all $t_2,m_2$ such that $\ottkw{methods}_2(t_2)(m_2) = x : t_2' \rightarrow t_2'',e_2$, for some $t_2',t_2''$, we have that\\
    $\ottkw{methods}_1(\kw{typeMap}(t_2))(\kw{methodMap}(m_2)) =  x : \kw{typeMap}(t_2') \rightarrow \kw{typeMap}(t_2''), e_1$ and $\vdash_{\kw{MAP}}^{\text{PC}_1,\text{PC}_2} e_1 \propto e_2$.
\end{enumerate}
\end{definition}
\off{
\begin{definition}[Program Context Correspondence]
We say that $\kw{MAP} = (\kw{methodMap},\kw{typeMap})$, where $\kw{methodMap}:\methodN\times\natval\times\natval\rightarrow\methodN$ and $\kw{typeMap}:\Type\rightarrow\Type$, establishes a correspondence from program context $PC_1$ to program context $PC_2$, written $PC_1\leq_{\kw{typeMap}} PC_2$, if:
\begin{enumerate}
    \item for all $C^+,f,t^+$ such that $\ottkw{fields}_{P^+}(C^+)(f) = t^+$, we have that\\ $\ottkw{fields}_{P}(\kw{typeMap}(C^+))(f) = \kw{typeMap}(t^+)$;
    \item for all $t^+,m^+,t_1^+,t_2^+$ such that $\ottkw{msigs}_{P^+}(t^+)(m^+) = x : t_1^+ \rightarrow t_2^+$, we have that\\
    $\ottkw{msigs}_{P}(\kw{typeMap}(t^+))(\kw{methodMap}(m^+) =  x : \kw{typeMap}(t_1^+) \rightarrow \kw{typeMap}(t_2^+)$.
\end{enumerate}
\end{definition}
}
\off{
\begin{definition}[Thread Correspondence]
We say that $MAP=(\kw{methodMap},\kw{typeMap})$ establishes a correspondence from thread collection $T_1$ to $T_2$ if $\vdash_{\kw{MAP}}^{\text{PC}_1,\text{PC}_2} T_1 \propto T_2$.
\end{definition}
}

Full formulation of Definition~\ref{def:heap-correspondence}:
\begin{definition}[Heap Correspondence] \label{anexo:heap-correspondence}
A program context correspondence $\kw{MAP}$ from $PC_2$ to $PC_1$ is said to define a correspondence from heap $H_2$ (and implicit program context $\text{PC}_2$) to heap $H_1$ (and implicit program context $\text{PC}_1$), written $H_1\leq_{\kw{MAP}}^{\text{PC}_1,\text{PC}_2} H_2$, iff:
\begin{enumerate}
    \item $\dom(H_1)=\dom(H_2)$, and
    \item for all $\iota \in \dom(H_1)$:
$H_1(\iota)=(\kw{typeMap}(C_2),F,L)$ iff $H_2(\iota) = (C_2,F,L)$.
\end{enumerate}
\end{definition}
\off{\begin{definition}[Heap Correspondence]
We say that a map $\kw{typeMap}: \Type \rightarrow \Type$ establishes a correspondence from heap $H_1$ to heap $H_2$, written $H_1\leq_{\kw{MAP}} H_2$, iff:
\begin{enumerate}
    \item $\dom(H_1)=\dom(H_2)$, and
    \item for all $\iota \in \dom(H_1)$:\\
$H_1(\iota)=(C_1,F,L)$ implies $H_2(\iota) = (\kw{typeMap}(C_1),F,L)$, and
    \item for all $\iota \in \dom(H_1)$, $f \in F$ such that
$H_1(\iota)=(C_1,F,L)$ and $F(f)=\bar{\iota}$:\\ $H_1(\bar{\iota})=(\bar{C}_1,\bar{F},\bar{L})$ iff $H(\bar{\iota})=(\kw{typeMap}(\bar{C}_1),\bar{F},\bar{L})$.
\end{enumerate}
\end{definition}
}

\off{
\begin{definition}[Abstraction Correspondence]
We say that a pair of maps %
$MAP=(\kw{methodMap},\kw{typeMap})$ 
is an abstraction correspondence if for all $\cfg{H_1}{V}{T_1}$, $\cfg{H_2}{V}{T_2}$,
running in program contexts $PC_1$ and $PC_2$, respectively:
\begin{enumerate}
    \item $PC_1 \leq_{\kw{MAP}} PC_2$;
    \item $H_1\leq_{\kw{typeMap}} H_2$;
     \item exists $\cfg{H_1'}{V'}{T_1'}$ such that  $\cfg{H_1}{V}{T_1} \rightarrow \cfg{H_1'}{V'}{T_1'}$\\
    iff exists $\cfg{H_2'}{V'}{T_2'}$ such that $\cfg{H_2}{V}{T_2} \rightarrow \cfg{H_2'}{V'}{T_2'}$ and $H_1'\leq_{\kw{typeMap}} H_2'$.
\end{enumerate}
\end{definition}
}

Repetition of Definition~\ref{def:almost-bisimulation}:
\begin{definition}[Bisimulation] 
\label{anexo:almost-bisimulation}
Given a program context correspondence $\kw{MAP}$ from
$PC_2$ to $PC_1$, a $(\kw{MAP},PC_1,PC_2)$-bisimulation is a binary relation $\mathcal{B}$ on thread collections that satisfies, for all $T_1,T_2,H_1,H_2,V$:
\begin{itemize}
    \item if $T_1~\mathcal{B}~T_2$  and $H_1 \leq_{\kw{MAP}}^{\text{PC}_1,\text{PC}_2} H_2$ then:
    \begin{enumerate}
        \item $\cfg{H_1}{V}{T_1} \rightarrow \cfg{H_1'}{V'}{T_1'}$ implies that exist $H_2',T_2'$ such that\\
$\cfg{H_2}{V}{T_2} \rightarrow \cfg{H_2'}{V'}{T_2'}$ and $H_1' \leq_{\kw{MAP}}^{\text{PC}_1,\text{PC}_2} H_2'$ and $T_1'~\mathcal{B}~T_2'$;
\item $\cfg{H_2}{V}{T_2} \rightarrow \cfg{H_2'}{V'}{T_2'}$ implies that exist $H_1',T_1'$ such that\\
$\cfg{H_1}{V}{T_1} \rightarrow \cfg{H_1'}{V'}{T_1'}$ and $H_1' \leq_{\kw{MAP}}^{\text{PC}_1,\text{PC}_2} H_2'$ and $T_1'~\mathcal{B}~T_2'$.
    \end{enumerate}
\end{itemize}
\end{definition}
The set of pairs of values is a $(\kw{MAP},PC_1,PC_2)$-bisimulation. Furthermore, the union of a family of $(\kw{MAP},PC_1,PC_2)$-bisimulations is a $(\kw{MAP},PC_1,PC_2)$-bisimulation. Consequently, the union of all $(\kw{MAP},PC_1,PC_2)$-bisimulations is a $(\kw{MAP},PC_1,PC_2)$-bisimulation, the largest $(\kw{MAP},PC_1,PC_2)$-bisimulation, which we denote by  ${\simeq_{\kw{MAP}}^{\text{PC}_1,\text{PC}_2}}$.

Auxiliary Lemmas to Theorem~\ref{th:soundness}:

\begin{lemma}[Replacement w.r.t. $\propto$]\label{anexo:replacement}\text{}
\begin{enumerate}
    \item If $\vdash_{\kw{MAP}}^{\text{PC}_1,\text{PC}_2} eE_1 \propto E_2[e_2]$, then exist $E_1[\bullet],e_1$ such that $eE_1 = E_1[e_1]$ and $\vdash_{\kw{MAP}}^{\text{PC}_1,\text{PC}_2} e_1 \propto e_2$.
    Furthermore, if $\vdash_{\kw{MAP}}^{\text{PC}_1,\text{PC}_2} \hat{e}_1 \propto \hat{e}_2$, then $\vdash_{\kw{MAP}}^{\text{PC}_1,\text{PC}_2} E_1[\hat{e}_1] \propto E_2[\hat{e}_2]$.
    \item If $\vdash_{\kw{MAP}}^{\text{PC}_1,\text{PC}_2} E_1[e_1] \propto eE_2$, then exist $E_2[\bullet],e_2$ such that $eE_2 = E_2[e_2]$ and $\vdash_{\kw{MAP}}^{\text{PC}_1,\text{PC}_2} e_1 \propto e_2$.
    Furthermore, if $\vdash_{\kw{MAP}}^{\text{PC}_1,\text{PC}_2} \hat{e}_1 \propto \hat{e}_2$, then $\vdash_{\kw{MAP}}^{\text{PC}_1,\text{PC}_2} E_1[\hat{e}_1] \propto E_2[\hat{e}_2]$.
\end{enumerate}
\end{lemma}
\begin{proof} By induction on the structure of $E2$ and $E1$, respectively. 
\end{proof}

\off{\begin{lemma}[$\vie$ contained in $\propto$]\label{lemma:syntactic-correspondence}
For all $\sol$, $\maine$, $e$, we have that $\vdash_{\kw{MAP}}^{\text{PC}_1,\text{PC}_2} e \propto \vie(\sol,\maine,e)$ where $\kw{MAP}=(\kw{methodMap},\kw{typeMap})$, where $\kw{methodMap}=\kw{fst}\circ\oomn^{-1}$ and $\kw{typeMap}=\kw{snd}\circ\ootype^{-1}$.
\end{lemma}
\begin{proof}
By induction on the definition of $\vie(\sol,\nu,e)$:
\end{proof}
}

\begin{lemma}[$\propto$ is a Bisimulation]\label{lemma:concrete-bisimulation}
If $PC_1\leq_{\kw{MAP}} PC_2$, and $H_1\leq_{\kw{MAP}}^{\text{PC}_1,\text{PC}_2} H_2$, and $\vdash_{\kw{MAP}}^{\text{PC}_1,\text{PC}_2} T1 \propto T2$, then:
\begin{enumerate}
    \item $\cfg{H_1}{V}{T_1} \rightarrow \cfg{H_1'}{V}{T_1'}$ implies\\
$\cfg{H_2}{V}{T_2} \rightarrow \cfg{H_2'}{V}{T_2'}$ and $\vdash_{\kw{MAP}}^{\text{PC}_1,\text{PC}_2} T1' \propto T2'$ and $H_1'\leq_{\kw{MAP}}^{\text{PC}_1,\text{PC}_2} H_2'$.
    \item $\cfg{H_2}{V}{T_2} \rightarrow \cfg{H_2'}{V}{T_2'}$ implies\\
$\cfg{H_1}{V}{T_1} \rightarrow \cfg{H_1'}{V}{T_1'}$ and $\vdash_{\kw{MAP}}^{\text{PC}_1,\text{PC}_2} T1' \propto T2'$ and $H_1'\leq_{\kw{MAP}}^{\text{PC}_1,\text{PC}_2} H_2'$.
\end{enumerate}
\end{lemma}
\begin{proof}
For (2) We distinguish the following cases (the cases for (1) are analogous):
\begin{enumerate}
    \item If $T_2=(\mathcal{L},e_2)$, then exist $E_2[\bullet],\hat{e}_2,\hat{e}_2'$ such that $e_2=E_2[\hat{e}_2]$ and $\cfg{H_2}{V}{(\mathcal{L},\hat{e}_2)} \rightarrow \cfg{H_2'}{V'}{\hat{T}_2'}$, and:
    \begin{enumerate}
        \item %
        $\hat{T_2}'=(\mathcal{L},\hat{e}_2')$, and
        $T_2'=(\mathcal{L},E_2[\hat{e}_2'])$; %
        \item $\hat{e}_2=\kwool{finish} \, \,\startblock\, \, \kwool{async} \, \,\startblock\,  \ottnt{\hat{e}_2^1}  \,\finishblock\, \, \kwool{async} \, \,\startblock\,  \ottnt{\hat{e}_2^2}  \,\finishblock\,  \,\finishblock\, \ottsym{;}  \ottnt{\hat{e}_2^3}$, and  $\hat{T}_2'=(\mathcal{L},\hat{e}_2{}^1) \mathop{||}  (\mathcal{L},\hat{e}_2{}^2) \outputeqs {\hat{e}_2^3}$, and  $T_2'=\hat{e}_2^1 \mathop{||}  \hat{e}_2^2 \outputeqs {E_2[\hat{e}_2^3]}$.
    \end{enumerate}
    Assuming, without loss of generality, that $\hat{e}_2$ is the smallest in the sense that it cannot be written in terms of a non-empty context, we prove these cases using the Replacement Lemma~\ref{anexo:replacement}, and by case analysis on the transitions $\cfg{H_2}{V}{\hat{e}_2} \rightarrow \cfg{H_2'}{V'}{\hat{e}_2'}$.
    \item If $T_2=\hat{T}_2{}^1 \mathop{||} \hat{T}_2{}^2 \outputeqs{ \hat{e}_2 }$, and
    \begin{enumerate}
    \item $\cfg{H_2}{V}{\hat{T}_2{}^1} \rightarrow \cfg{H_2'}{V}{\hat{T}_2{}^1{}'}$
 and $T_2'=\hat{T}_2{}^1{}' \mathop{||} \hat{T}_2{}^2 \outputeqs{ \hat{e}_2 }$
    \item $\cfg{H_2}{V}{\hat{T}_2{}^2} \rightarrow \cfg{H_2'}{V}{\hat{T}_2{}^2{}'}$
 and $T_2'=\hat{T}_2{}^1 \mathop{||} \hat{T}_2{}^2{}' \outputeqs{ \hat{e}_2 }$
    \item $\hat{T}_2{}^1=(\mathcal{L},\hat{v}_2{}^1)$, $\hat{T}_2{}^2=(\mathcal{L}',\hat{v}_2{}^2)$ and $T_2' = (\mathcal{L},\hat{e}_2)$.
    \end{enumerate}
    These cases are easy to prove, using $\ottdrulename{sc\_async}$ and \ottdrulename{ dyn\_eval\_async\_*}.
    \end{enumerate} 
\end{proof}

Repetition of Theorem~\ref{th:soundness} and Proof:
\begin{theorem}[Behavioural Correspondence] \label{anexo:soundness}
Consider an original program $P_\textrm{orig}$, that is annotated according to the \rccc model such that $\func{\rccc{}Analysis}(P_\textrm{orig}) = (P,Sol,P^+)$, where there are program contexts $PC$ and $PC^+$, and main expressions $e$ and $e^+$, such that $P=PC[e]$ and $P^+=PC^+[e^+]$.
Then, for $\kw{MAP}=(\kw{methodMap},\kw{typeMap})$, where $\kw{methodMap}=\kw{fst}\circ\oomn^{-1}$ and $\kw{typeMap}=\kw{snd}\circ\ootype^{-1}$, 
we have that $PC_1\leq_{\kw{MAP}} PC_2$ and ${e} {\simeq_{\kw{MAP}}^{\text{PC},\text{PC}^+}} {e^+}$.
\end{theorem}
\begin{proof}
By Lemma~\ref{lemma:concrete-bisimulation}, we have that the relation
\[B = \{(T_1, T_2) |  \vdash_{\kw{MAP}}^{\text{PC}_1,\text{PC}_2} T_1 \propto T_2\}\]
is a $(\kw{MAP},PC_1,PC_2)$-bisimulation. 
We then show that $({e},e^+) \in B$. %
According to Stage 4 of the analysis, $P^+ = \vi{P}(\sol, P)$, with $e^+ = \vi{E}(\sol, \maine, e)$. We show by induction on the definition of $ \vi{E}$ that for all $\sol$, $\maine$, $e$, we have that $\vdash_{\kw{MAP}}^{\text{PC},\text{PC}^+} e \propto \vi{E}(\sol, \maine, e)$.
\end{proof}
}

\off{
\section{Semantic Property}
\label{app:units-of-work}
We assume that the set of unit of work regions within which a single thread $T=(\mathcal{L},e)$ is computing, which we refer to as  \emph{unit of work context} of that thread, is defined and is represented as the set of abstract object locations $\iota$ to which the unit of work regions refer. We assume that the unit of work context of a single thread $T$ can be extracted from $T$, and is written $\lfloor{T}\rfloor$.

We extend the notion of \emph{unit of work context} to thread collections $T$ as the set of unit of work collections within which the sub-threads of $T$ are currently computing.
\begin{definition}[Unit of Work Context of a Thread Collection] 
The \emph{unit of work context} of a thread collection $T$ is defined as the union of unit of work contexts of its sub-threads, i.e.:
\[\lfloor{T_1 \mathop{||} T_2 \outputeqs{ e}}\rfloor = 
\lfloor{T_1}\rfloor \cup \lfloor{T_2}\rfloor\]
\end{definition}

The property we wish to enforce is that of thread collections that preserve exclusivity of atomic units of work throughout their computation. We are thus specifically interested in whether the unit of work regions of locations associated to atomically qualified objects are entered exclusively by a single thread at all times.
\begin{definition}[Exclusivity of Units of Work] 
Given a set of abstract object locations $A$, we say that a thread collection $T$ guarantees \emph{exclusivity of units of work with respect to $A$}, written $T !^A$ if:
\begin{itemize}
    \item $T=(\mathcal{L},e)$, or
    \item $T=T_1 \mathop{||} T_2 \outputeqs{e}$ and $\lfloor{T_1}\rfloor \cap \lfloor{T_2}\rfloor \cap A = \emptyset$ and $T_1 !^A$ and $T_2 !^A$.
\end{itemize}
\end{definition}
Given a configuration, the set of abstract object locations that are qualified as atomic can be retrieved from the heap:
\begin{definition}[Atomic abstract object locations]
\[\dom_{\atomick}(H) = \{\iota ~|~ H(\iota) = (C,F,L) \textit{ and } \kw{fst}\circ\ootype^{-1}(C) = \atomick\}\]
\end{definition}

We can now define the property of a thread collection respecting exclusivity of atomic units of work:
\begin{definition}[Operationally Atomic Exclusive Threads Property]
\label{app:def-op-atom-excl}
A set $\mathcal{E}$ of thread collections is said to be a set of \emph{operationally atomic exclusive thread collections} if the following holds for any $T \in \mathcal{E}$:
\[\cfg{H}{V}{T} \rightarrow \cfg{H'}{V'}{T'} \textit{ and } T!^{\dom_{\atomick}(H)} \textit{ implies } T'!^{\dom_{\atomick}(H')} \textit{ and } T' \in \mathcal{E}\]
\end{definition}
We have that there exists a set of operationally atomic exclusive thread collections, such as the set of single threads. The union of a family of sets of operationally atomic exclusive thread collections is a set of operationally atomic exclusive thread collections. Therefore, there exists the largest set of operationally atomic exclusive thread collections. We say that a thread collection \emph{is operationally atomic exclusive} if it belongs to this set.

}

\end{document}